\newcommand{\C}{\ensuremath{\mathcal{C}}\xspace}
\newtheorem{theorem}{Theorem}
\newtheorem{lemma}[theorem]{Lemma}
\newtheorem{conjecture}{Conjecture}
\crefname{theorem}{Theorem}{Theorems}
\crefname{section}{Section}{Sections}
\crefname{lemma}{Lemma}{Lemmas}
\crefname{figure}{Figure}{Figures}
\crefname{claim}{Claim}{Claims}
\newtheorem{corollary}{Corollary}
\crefname{corollary}{Corollary}{Corollaries}
\title{Folding Polyominoes with Holes into a Cube\thanks{A preliminary extended abstract appeared in the Proceedings of the 31st Canadian Conference on Computational Geometry \cite{aac+-fphc-19}.}}
\author[1]{Oswin Aichholzer}
\author[2]{Hugo A. Akitaya}
\author[3]{Kenneth C. Cheung}
\author[4]{Erik D. Demaine}
\author[4]{Martin L. Demaine}
\author[5]{S\'andor P. Fekete}
\author[5]{Linda Kleist}
\author[6]{Irina Kostitsyna}
\author[7]{Maarten L\"{o}ffler}
\author[8]{Zuzana Mas\'arov\'a}
\author[4]{Klara Mundilova}
\author[9]{Christiane Schmidt}
 \affil[1]{Institute for Software Technology, Graz University of Technology, {\tt oaich@ist.tugraz.at}}
 \affil[2]{Department of Computer Science, Tufts University, {\tt Hugo.Alves\_Akitaya@tufts.edu}}
  \affil[3]{NASA Ames Research Center, {\tt kenny@nasa.gov}}
  \affil[4]{CSAIL, Massachusetts Institute of Technology, {\tt$\{$edemaine,mdemaine,kmundi$ \}$@mit.edu}}
  \affil[5]{Department of Computer Science, TU Braunschweig, {\tt $\{$s.fekete,l.kleist$\}$@tu-bs.de}}
  \affil[6]{Mathematics and Computer Science Department, TU Eindhoven, {\tt i.kostitsyna@tue.nl }}
  \affil[7]{Department of Information and Computing Science, Universiteit Utrecht, {\tt m.loffler@uu.nl}}
  \affil[8]{IST Austria, Klosterneuburg, {\tt zuzana.masarova@ist.ac.at}}
  \affil[9]{Department of Science and Technology, Link\"{o}ping University, {\tt christiane.schmidt@liu.se}}
\date{}
\begin{document}

\maketitle

\begin{abstract}
When can a polyomino piece of paper be folded into a unit cube?
Prior work studied tree-like polyominoes,
but polyominoes with holes remain an intriguing open problem.
We present sufficient conditions for a polyomino with one or several holes to fold into a
cube, and conditions under which cube folding is impossible. In particular, we show that all but five special \emph{basic} holes guarantee foldability.
\end{abstract}

{\bf Keywords.}
Folding; Origami Folding;
Cube; Polyomino;
Polyomino with Holes; Non-Simple Polyomino



\section{Introduction}

\begin{figure}[htb]
\centering
 \includegraphics{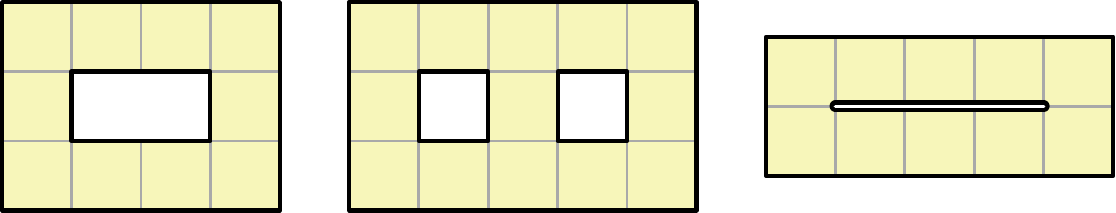}
\caption{Three polyominoes that fold along grid lines into a unit cube, from puzzles by Nikolai Beluhov \cite{Beluhov-2014}.}
 \label{puzzles}
\end{figure}

Given a piece of paper in the shape of a polyomino, i.e., a polygon in the plane formed by unit squares on the square lattice that are connected edge-to-edge, does it have a folded state
in the shape of a unit cube?  The standard rules of origami apply~\cite{GFALOP}; in particular, we allow each unit-square
face to be covered by multiple layers of paper.
Examples of this decision problem are given by the three puzzles by Nikolai Beluhov
\cite{Beluhov-2014} shown in \cref{puzzles}.
We encourage the reader to print out the puzzles and try folding them.

Prior work \cite{abdde-fpipc-18} studied this decision problem extensively,
introducing and analyzing several different models of folding.
Beluhov
\cite{Beluhov-2014} implicitly defined a \emph{grid model} with the puzzles
in \cref{puzzles}:
Fold only along grid lines of the polyomino;
allow only orthogonal fold angles%
\footnote{The fold angle of a fold measures the deviation from the flat (unfolded) state, i.e., $180^\circ$ minus the dihedral angle between the two incident faces.}
($\pm 90^\circ$ and $\pm 180^\circ$); and
forbid folding material strictly interior to the cube.
In this model, the prior work \cite{abdde-fpipc-18} characterizes which
\emph{tree-shaped} polyominoes (whose unit squares are connected edge-to-edge
to form a tree dual graph)
lying within a $3 \times n$ strip can fold into a unit cube,
and exhaustively characterizes which tree-shaped polyominoes of $\leq 14$
unit squares fold into a unit cube.

Notably, however, the polyominoes in \cref{puzzles} are not tree-shaped,
and their interior is not even simply connected:
The first puzzle has a hole, the second puzzle has two holes, and the
third puzzle has a degenerate (zero-area) hole or \emph{slit}.  Arguably, these holes are what
makes the puzzles fun and challenging.  Therefore, in this paper, we embark on
characterizing which polyominoes with hole(s) fold into a unit cube in the
grid model.
Although we do not obtain a complete characterization, we give many
interesting conditions under which a polyomino does or does not fold into a
unit cube.

The problem is sensitive to the choice of model.
The other main model that has been studied in past work
is the more flexible \emph{half-grid model},
which allows orthogonal and diagonal folds between half-integral points,
as in Figure~\ref{fig:45-diag}.
The prior work \cite{abdde-fpipc-18} shows that \textit{all} polyominoes of at
least ten unit squares can fold into a unit cube in the half-grid model,
leaving only a constant number of cases to explore,
which were tackled recently \cite{CubeFolding_CCCG2020}.
Therefore, we focus on the grid model, which matches the puzzles
of Beluhov \cite{Beluhov-2014}.

\begin{figure}[ht]
	\centering
	\includegraphics[scale=1.25]{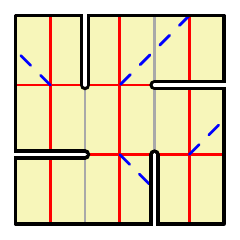}
	\caption{One way to fold a cube in the half-grid model, adapted from~\cite[Fig.~5(b)]{abdde-fpipc-18}. In all our figures, solid/red lines denote mountain folds, dashed/blue lines denote valley folds, light/grey lines denote grid lines, and bold/black lines denote the polyomino boundary.}
	\label{fig:45-diag}
\end{figure}

If we generalize the target shape from a unit cube to polycube(s),
there are polyominoes that fold in the grid model into all polycubes
of at most a given surface area \cite{StripsGrid_CGTA}.
If we further forbid overlapping unit squares (polyhedron unfolding/nets
instead of origami), this fold-all-polycubes problem has been studied for
small polycubes \cite{AloBosCol-CCGA-10}, and there is extensive work on
finding polyominoes that fold into multiple (two or three or more)
different boxes
\cite{AbeDemDem-CCCG-11,Mitani-Uehara-2008,Shirakawa-Uehara-2013,Uehara-box-survey,Xu2015}.

\subsection*{Our Results}

\begin{compactenum}
\item 
We show that any hole that is not one of five \emph{basic} shapes of holes (see Figure~\ref{fig:simple-holes1}) always guarantee that a polyomino containing the hole folds into a cube; see \cref{thm:singleHole} in \cref{subsec:fold-1hole}. Polyominoes with exactly one of the five basic holes only sometimes allow folding into a cube.
\item We identify combinations of two (of the five basic) holes that allow the polyomino to fold into a cube; see \cref{subsec:fold-2holes}.
\item We show that certain of the five basic holes or their combinations do not allow folding into a cube, that is, we show that subclasses of polyominoes with only specific basic hole(s) cannot be folded into a unit cube; see \cref{sec:not-fold}.
\item We present an algorithm that checks a necessary local condition for folding into a cube; see \cref{subsec:alg}.
\item Whether this condition also constitutes a sufficient condition remains an open question; see \cref{sec:concl}.
\item We conjecture that a slit of size 1 (see Figure~\ref{fig:simple-holes1}, second from left) never affects whether a polyomino can fold into a cube; see \cref{subsec:not-fold-slit}. However, we show that a slit of size 1 can be the deciding factor for foldability for larger polycubes.
\end{compactenum}

\section{Notation}\label{sec:not}
A \emph{polyomino} is a connected polygon $P$ in the plane formed by joining
together $|P|=n$ unit squares on the square lattice.
We refer to the vertices of the $n$ unit squares forming $P$
as the \emph{grid points} of $P$.
We view $P$ as an \emph{open} region (excluding its boundary) which includes
the $n$ open unit squares of the form $(x_i,x_i+1) \times (y_i,y_i+1)$ as well
as \textit{some} of the shared unit-length edges (and grid points)
among these $n$ unit squares.
Notably, we do not require $P$ to include the common edge between every
adjacent pair of squares; if such an edge is missing from $P$,
we call the edge a \emph{slit edge}.
But there must be at least $n-1$ unit-length edges in $P$ so that $P$ is
(interior-)connected.

A \emph{hole} of a polyomino $P$ is a bounded connected component of $P$'s
exterior, whose boundary is one of the connected components of $P$'s boundary
other than the outermost one.
We assume that $P$ has no holes that are just a single grid point,
because such holes do not affect foldability,
so we can fill them in (add them to~$P$).
We call a hole a \emph{slit} if it has zero area (and is more
than a single point), and thus consists entirely of one or more slit edges.
We call a hole \emph{basic} if it has one of the following shapes
(refer to see \cref{fig:simple-holes1}):
\begin{enumerate}
\item A unit square
\item A slit of size~1 (a single slit edge)
\item A slit of size~2 (L-shaped or straight)
\item A U-slit of size~3
\end{enumerate}

A \emph{unit cube \C} is a three-dimensional polyhedron with six unit-square faces and volume of $1$. We refer to the vertices of $\C$ as \emph{corners}.

In this paper, we study the problem of folding a given polyomino $P$ with holes to form $\C$, allowing 
creases along edges of the lattice with fold angles of
$\pm 90^\circ$ or $\pm 180^\circ$.
In all our figures, solid/red lines denote mountain folds, dashed/blue lines denote valley folds, light/grey lines denote grid lines, bold/black lines denote the polyomino boundary, bold dotted/purple lines denote creases folded by $\pm 180^\circ$, and thin dotted/purple lines denoted creases folded by $\pm 90^\circ$ (the purple/dotted creases could be mountain or valley).

Some crease patterns give numbers on the unit squares to indicate which face
they fold onto in a ``real-world'' six-sided die, where opposite faces sum up to~$7$.

\begin{figure}[ht]
	\centering
	\includegraphics[page=3]{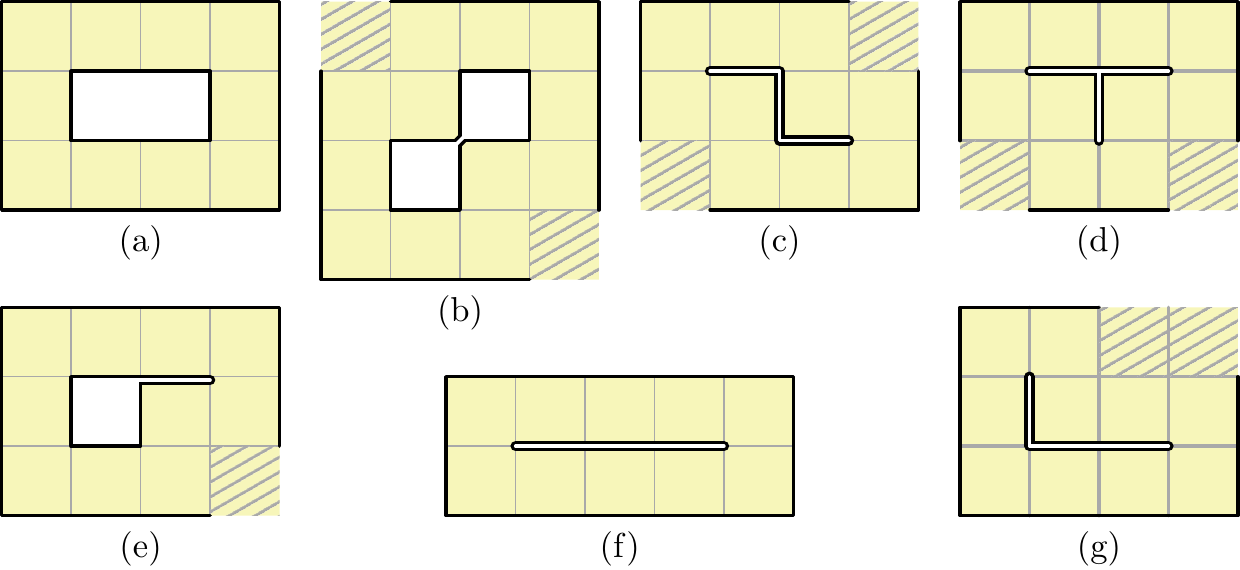}
	\caption{The five basic holes: a unit square, a slit of size~1, a straight slit of size~2, an L-slit of size~2, and a U-slit of size~3.}
	\label{fig:simple-holes1}
\end{figure}

\section{Polyominoes That Do Fold}\label{sec:fold}
In this section, we present polyominoes that fold. We start with polyominoes that contain a hole guaranteeing foldability.

\subsection{Single-Hole Polyominoes}\label{subsec:fold-1hole}

In this section, we show that any hole different from a basic hole guarantees foldability.

\newcommand{\simple}{basic}
\begin{theorem}\label{thm:singleHole}
If a polyomino $P$ contains a hole $h$ that is not \simple, then $P$ folds into a cube.
\end{theorem}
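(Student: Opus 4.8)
The plan is to reduce the whole polyomino to a thin cyclic band of unit squares that hugs the hole, fold that band around the cube, and flatten everything else onto the cube's surface. Since $h$ is a bounded component of the exterior, the material of $P$ must encircle $h$; let $L$ be the innermost simple cycle of unit squares of $P$ surrounding $h$. By minimality there is no loop of $P$ between $L$ and $h$, so $L$ is essentially the \emph{frame} of $h$: the squares edge-adjacent to $h$ together with the squares needed at the convex corners of $h$ to close up the cycle. The first ingredient I would establish is a \emph{flattening lemma}: once $L$ is folded onto the cube, every remaining unit square of $P$ (everything strictly outside $L$, plus any inward-pointing spurs) can be brought onto an already-covered face by a sequence of $\pm 180^\circ$ flat folds, processed from the leaves of the dual forest of the leftover material inward. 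Because each such fold merely doubles a flap onto a face that is already present, no paper is ever forced strictly interior to the cube, so the grid-model constraint is respected and it suffices to fold $L$.

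The core of the proof is then to fold the frame $L$ for every non-basic hole, and here I would use a \emph{belt-and-caps} construction: four consecutive squares of the band $L$ wrap around the four side faces of the cube as a belt, and the remaining squares are folded over the top and bottom faces as caps, with the hole placed so that the face it lies over is covered by the wrapping layers rather than by $L$ directly. I would carry this out in two regimes. For a hole of positive area, $h$ is not a single square, so its frame is a cyclic band of length $2(a+b)+4 \ge 10$ for an $a\times b$ bounding rectangle (and at least this otherwise); a band of length ten or more supplies both caps with room to spare beyond the four belt squares, and I would exhibit the explicit crease pattern for the minimal case, the $3\times 4$ frame of a domino hole. For a slit hole the material still encircles the zero-area cut, giving a loop whose length grows with the slit; a non-basic slit has at least three edges and is not the U-slit, so up to symmetry its loop is long enough, and shaped so that the cut runs across a cube face rather than around a single corner, to admit the same belt-and-caps folding, using the slit itself as the seam along which two stacks of paper separate to cap the cube.

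It then remains to see that every non-basic hole reduces to one of finitely many minimal frames folded explicitly. Any positive-area hole has a frame that is a cyclic band of length at least ten, and I would argue that lengthening or fattening the band only adds squares the belt-and-caps construction absorbs as extra flattened layers, so the folding survives. Any non-U slit of length $\ge 3$ likewise contains, up to symmetry, a straight or once-bent length-$3$ sub-slit spanning more than a single grid cell, and the extra slit edges only enlarge the seam and the loop. The five basic holes are exactly the configurations excluded by these conditions — the unit square (loop length only $8$), the length-$1$ and length-$2$ slits, and the U-slit of size $3$ — whose frames are the tight borderline loops for which belt-and-caps just fails, matching the claim that they fold only sometimes.

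The step I expect to be the genuine obstacle is the explicit folding of the minimal frames together with a rigorous check of validity: one must fix a mountain/valley assignment and a layer ordering for the belt-and-caps pattern and verify that the folded state is non-self-intersecting and keeps all paper on the cube's surface. The reduction is the second delicate point, because enlarging a hole \emph{removes} material and foldability is not obviously preserved under deleting squares; the safe way to secure it is to design each minimal-case folding so that precisely the squares one would delete to grow the hole are redundant outer layers, making the passage from the minimal frame to an arbitrary non-basic hole clean.
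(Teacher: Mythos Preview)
Your high-level plan---reduce the polyomino to a neighborhood of the hole, then fold that neighborhood---matches the paper's, but your flattening lemma has a real gap: the leftover material $P\setminus L$ is \emph{not} a forest in general. Take a $5\times 6$ rectangle with a $1\times 2$ hole; then $L$ is the $3\times 4$ frame and $P\setminus L$ is the outer ring of $18$ squares, itself a cycle with no leaves to process. More generally $P\setminus L$ is an annulus whenever $P$ extends beyond $L$ on all sides, and if $P$ carries additional holes (the theorem only requires \emph{one} hole to be non-basic) the topology is worse still. Leaf-by-leaf $\pm 180^\circ$ folding therefore cannot collapse it onto $L$, and the reduction step as you describe it does not go through.

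The paper sidesteps both this problem and the one you flag at the end by running the reduction in the opposite order. Rather than keeping $h$ and collapsing onto its (arbitrarily large, arbitrarily shaped) frame, the paper first observes that any non-basic $h$ contains one of seven minimal sub-configurations $h'$: two edge-adjacent unit cells, two diagonally adjacent cells, a cell plus an incident slit edge, or one of the four non-U slits of size~$3$. It then accordion-folds the \emph{entire} polyomino along full horizontal and vertical grid lines down to the immediate neighborhood of $h'$; each such fold is a global $\pm 180^\circ$ reflection along a straight line, valid regardless of the topology of $P$ or the presence of other holes. Because $h$ is a hole, all cells bordering $h'$ are present after the collapse, so one is left with exactly seven concrete small polyominoes (up to a few optional cells), for which explicit crease patterns are exhibited. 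The accordion fold is what replaces your flattening lemma with something that works on annuli, and shrinking the hole to a minimal sub-hole \emph{before} reducing the polyomino is what turns your open-ended belt-and-caps task---folding frames of every length and shape---into a finite case check.
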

\begin{proof}
By the definition of \simple~hole, because $h$ is non-\simple, it must be a superset of either two unit squares, a unit square and a unit slit, or a slit of size 3 that is not a U-slit.
In the case it contains two unit squares sharing a grid point, $h$ must be a superset of one of the holes in Figure~\ref{fig:cases}~(a)--(b) up to rotation. 
If it contains a unit square and a unit slit sharing a grid point, then $h$ is a superset of Figure~\ref{fig:cases}~(e) up to reflection and rotation.
Else, $h$ must be a superset of the slits in Figure~\ref{fig:cases}~(c), (d), (f), (g) because those are all possible slits of size 3 that are not U-shaped up to reflection and rotation. 
Then, we distinguish the cases where $h$ contains 
\begin{compactitem}
\item  Two unit squares sharing an edge
\item  Two unit squares sharing a grid point
\item  A unit square and an incident slit
\item  A slit of length at least 3 (straight, zigzagged, L-shaped, or T-shaped)
\end{compactitem}

In a first step, we show that if $h$ contains one of the four above holes, we may assume that it contains exactly this hole. Let $h$ be a hole containing a hole $h'$ of the above type. By definition of a hole, $h$ needs to be enclosed by solid squares. Thus we can sequentially fold the squares of $P$ in columns to the left and right of~$h'$ on top of the columns directly left and right of~$h'$, respectively, as illustrated in \cref{fig:reduction}.
\begin{figure}[t]
	\centering
	\includegraphics[page=2]{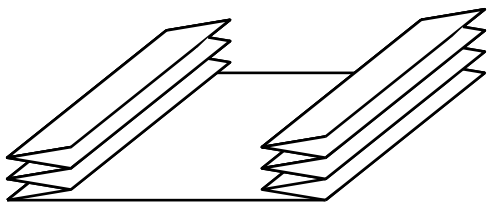}
	\caption{Folding strategy to reduce to seven cases.}
	\label{fig:reduction}
\end{figure}
Afterwards, we fold the squares of~$P$ in rows to the top and bottom of~$h'$ on top of the rows directly top and bottom of~$h'$, respectively.  We call the resulting polyomino~$P'$. Note that because~$h$ is a hole of $P$, all neighboring squares of~$h'$ exist in~$P'$. Thus we may assume that we are given one of the seven polyominoes depicted in \cref{fig:cases}, where striped  squares may or may not be present. Note that we can assume that no additional slits are present.

\begin{figure}[htbp]
	\centering
	\includegraphics[page=1]{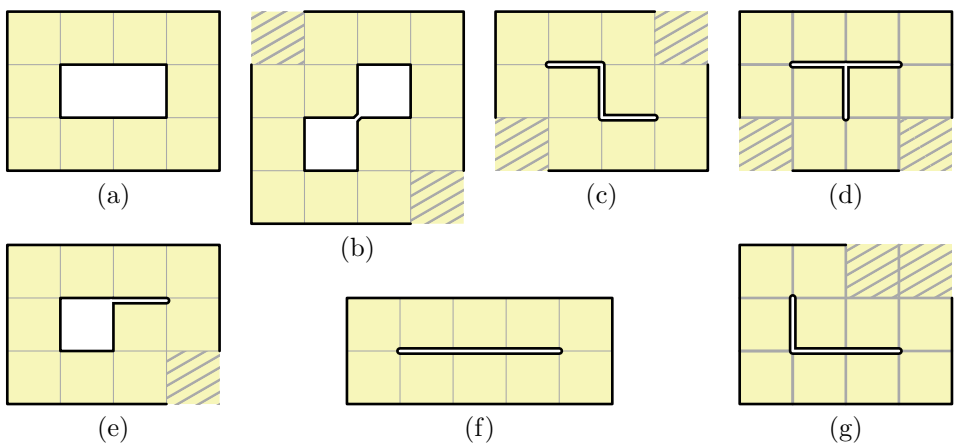}
	\caption{Any polyomino with a hole that is not \simple~can be reduced to one of the seven illustrated cases; striped squares may or may not be present.}
	\label{fig:cases}
\end{figure}

Secondly, we present strategies to fold the polyomino into $\C$.
Note that the case if $h'$ consists of two squares sharing only a grid point, we can fold the top row on its neighboring row and obtain the case where $h'$ consist of a square and an incident slit. For an illustration of the folding strategies for the remaining six cases consider \cref{fig:foldings}.
\begin{figure}[htbp]
	\centering
	\includegraphics[page=4]{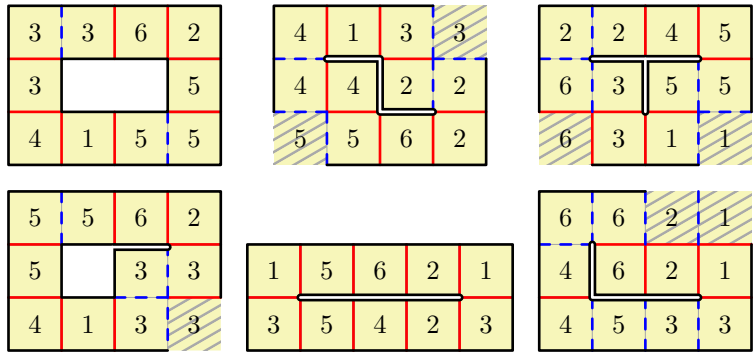}
	\caption{Crease pattern of cube foldings. Mountain folds are shown in solid red, valley folds in dashed blue. Squares with the same number cover the same face of the cube.}
	\label{fig:foldings}
\end{figure}
\end{proof}

\begin{figure}[hp]
	\centering
	\includegraphics[page=1]{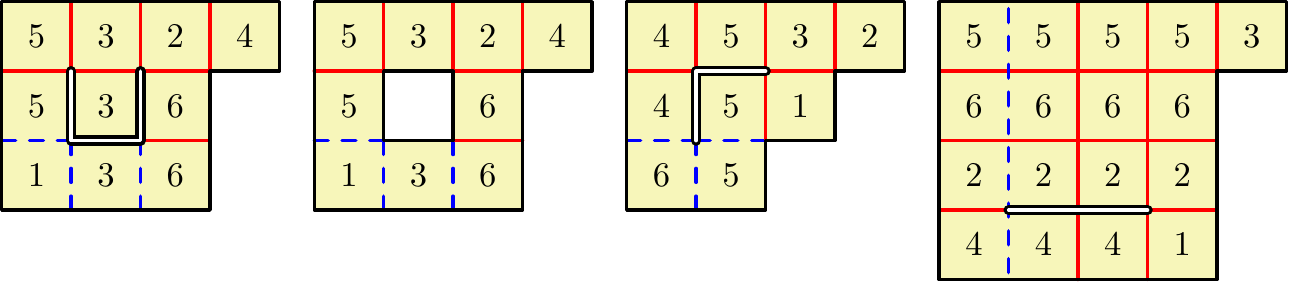}
	\caption{Four \simple~holes may be helpful. Mountain folds are shown in solid red, valley folds in dashed blue.}
	\label{fig:helpful}
\end{figure}

\subsection*{Are \simple~holes ever helpful?}
In fact, four of the five \simple~holes sometimes allow foldability, as illustrated in \cref{fig:helpful}.
Note that the U-slit of size 3 reduces to the square hole by a $\pm 180^\circ$ mountain or valley fold.

 In \cref{thm:slit1}, we show that the slit of size 1 never helps to fold a rectangular polyomino.
 Moreover, we show in \cref{lem:slitNEW1} that the crease pattern around the slit behaves as if the slit was nonexistent, i.e., the only option to make use of the slit is to push part of the polyomino through the slit.
 In fact, we conjecture that the slit of size~1 never helps to fold a polyomino into~\C.

\subsection{Combinations of Two Basic Holes}\label{subsec:fold-2holes}

In this section, we consider combinations of two \simple~holes that fold. For a polyomino with two holes, for which the lowest grid point $v$ of the upper hole has a higher y-coordinate than the upper grid point $w$ of the lower hole, we denote the number of unit-square rows between $w$ and $v$ as the \emph{number of rows between} the two holes. Analogously, we define the \emph{number of columns between} two holes.

\begin{theorem}
A polyomino with two vertical straight size-2 slits with at least two columns and an odd number of rows between them folds.
\end{theorem}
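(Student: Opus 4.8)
The plan is to reuse the two-phase strategy from the proof of \cref{thm:singleHole}: first reduce an arbitrary polyomino satisfying the hypotheses to a small canonical core around the two slits by accordion-folding away all excess material, and then exhibit an explicit crease pattern for the canonical core. Because the two slits are genuine holes, every square immediately adjacent to each slit is present in $P$, so exactly as in \cref{fig:reduction} I would fold the columns strictly to the left of the left slit and strictly to the right of the right slit inward onto their neighbors, and likewise collapse the rows above the topmost slit and below the bottommost one. The important point for the reduction is bookkeeping: excess rows should be removed two at a time, which preserves the parity of the number of rows between the two slits, while excess columns are removed while keeping at least two columns between them. Since the two slits are diagonally offset (horizontal separation gives the columns-between, vertical separation gives the rows-between), this leaves a small number of canonical cores, up to reflection, each with exactly one row and two columns between the slits.

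Next I would build the crease pattern for a canonical core and label it in the style of \cref{fig:foldings}. The idea is to wrap a band of the polyomino around \C so that it threads through both slits: intuitively, one slit registers with a vertical edge of \C and the other with the opposite vertical edge, the material between the slits supplies the belt of four side faces, and the flaps above and below fold over to cap the top and bottom faces. Concretely I would assign each unit square a die-face number so that opposite faces sum to~$7$, and then verify the three model constraints directly from the labelling: that every fold angle is $\pm 90^\circ$ or $\pm 180^\circ$, that the six faces of \C are all covered (with multiple layers allowed), and that no material is forced strictly interior to \C. The slits are what make this possible, since without them a single band could not close up into a loop around the cube in the grid model; passing paper through each slit is precisely the extra freedom we are exploiting.

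The role of the spacing conditions is the crux, and checking it is where I expect the real work to lie. The two-column separation is needed so that the belt is wide enough to run from one slit around the cube to the other without the two slits landing on the same vertical edge of \C and obstructing one another. The odd number of rows is a parity condition that controls the vertical registration of the two wraps: an odd vertical offset combined with the even (size-$2$) extent of each slit lets both caps fold flat in register, whereas an even offset would return the band to \C with the wrong orientation and force paper into the interior. The main obstacle is therefore the parity bookkeeping — tracking, face by face around the belt, how the odd vertical offset interacts with the size-$2$ slits and confirming that both caps seal without interior material. Once the canonical crease pattern is verified, the reduction argument of the first paragraph immediately promotes it to every polyomino satisfying the hypotheses.
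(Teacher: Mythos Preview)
Your overall strategy matches the paper's: accordion-fold the polyomino down to a small canonical core around the two slits, then exhibit an explicit crease pattern for the core. The paper does exactly this, first collapsing the odd number of rows between the slits to a single row, then folding away the surrounding material, and finally reducing the columns between the slits.

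There is one genuine gap in your reduction. You assert that the canonical core has ``exactly one row and two columns between the slits,'' but the column reduction only preserves the \emph{parity} of the column count: starting from $k\ge 2$ columns between the slits and folding in pairs, you end at $2$ columns if $k$ is even and at $3$ columns if $k$ is odd. The paper handles these as two separate cores (\cref{fig:twoholes1}(a) and~(b)) with two different crease patterns, and notes in the odd case that connectivity of $P$ forces at least one square of the central column to be present. Your plan as written omits the $3$-column core entirely, so even if your $2$-column crease pattern were correct, the argument would not cover all polyominoes satisfying the hypotheses.

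A second, softer issue: your ``belt wrapping through both slits'' description and the speculative paragraph about why the spacing hypotheses are needed are plausible intuition but not a substitute for actually writing down the face labels and checking the fold angles. The paper simply exhibits the two crease patterns; the parity of the row count is used purely in the reduction step (odd $\Rightarrow$ collapse to one row), not as a registration condition on the cube. I would drop the speculative mechanism and instead produce the two explicit labelled diagrams.
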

\begin{proof}
As in the previous section, we first fold all rows between the slits together to one row; this is possible because there is an odd number of rows between the slits. Then, all the surrounding rows and columns are folded towards the slits. Finally, we fold the columns between the slits to reduce their number to two or three. Depending on whether the number of columns between the slits was even or odd, this yields a shape as shown in \cref{fig:twoholes1}~(a) and~(b), respectively, where the striped squares may be (partially) present.
\begin{figure}[b!]
	\centering
	\includegraphics{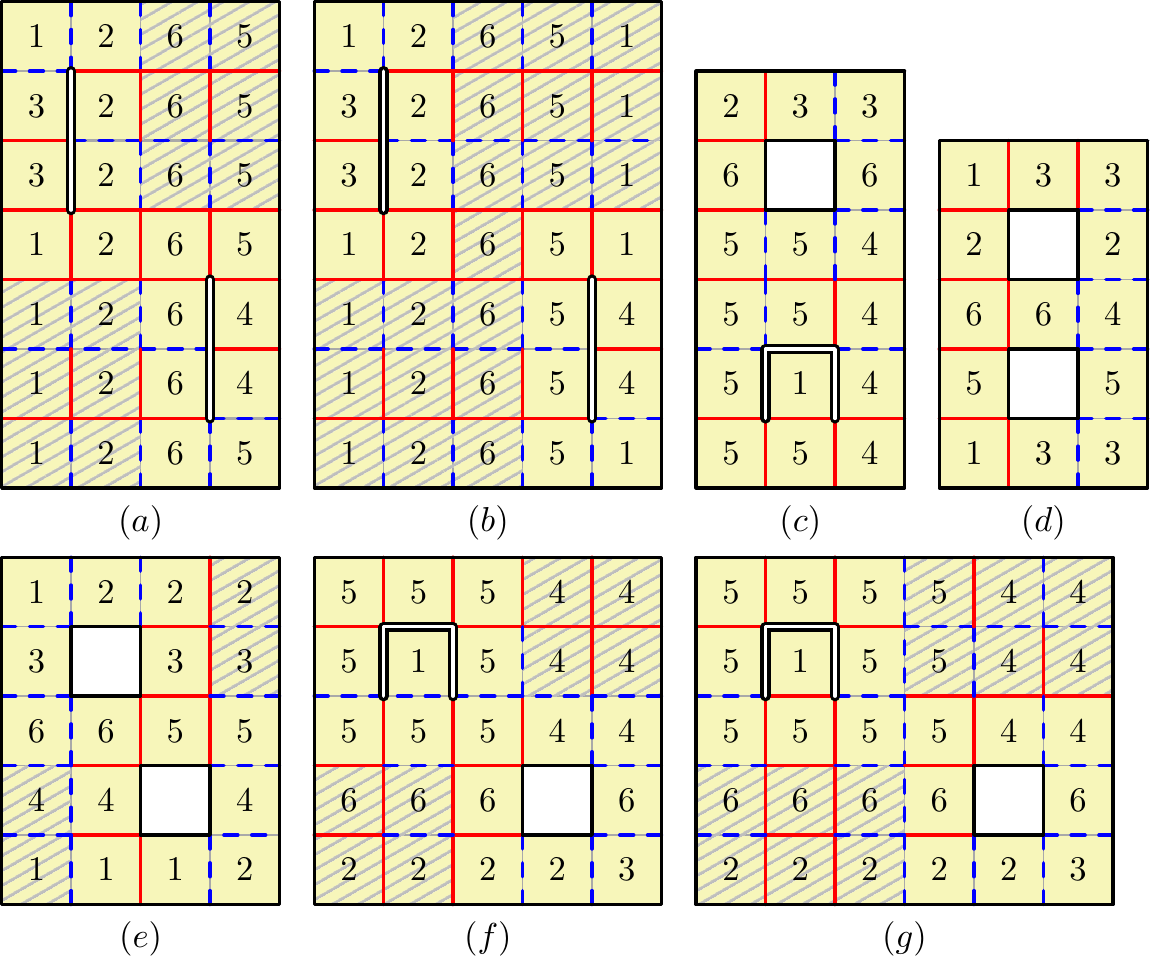}
	\caption{Combinations of two \simple~holes that are foldable with and without (part of) the striped region. Mountain folds are shown in solid red, valley folds in dashed blue. 
	}
	\label{fig:twoholes1}
\end{figure}
In all cases, the two shapes fold as indicated by the illustrated crease pattern. Note that in \cref{fig:twoholes1}~(b) the polyomino is of course connected, which implies that at least one square of the central column is part of the polyomino, i.e., a square with label 6 is used.
\end{proof}

If we have a polyomino with exactly two slits that have only one or no column between them, then the shape cannot be folded as can be verified by the algorithm of \cref{subsec:alg}.
In the following theorems we call a U-slit which has the open part at the bottom an A-slit. If the orientation of the U-slit is not relevant, then we call it a C-slit.

\begin{theorem}
A unit cube can be folded from any polyomino
with an A-slit and a unit-square hole/C-slit in the same column above it, having an even number of rows between them. 
\end{theorem}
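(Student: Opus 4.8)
The plan is to mimic the reduction-and-fold strategy already used in the proofs of \cref{thm:singleHole} and the preceding two-slit theorem, so that the problem collapses to a small canonical shape whose crease pattern can be exhibited directly. First I would use connectivity: since the A-slit and the unit-square hole (or C-slit) lie in the same column with an \emph{even} number of rows between them, I fold all the surrounding material inward toward the two holes, collapsing the outer rows and columns onto the rows and columns immediately adjacent to the holes. The parity hypothesis is exactly what makes this collapsing clean: folding the even number of intermediate rows pairwise (by $\pm 180^\circ$ folds) brings the two holes into a fixed relative vertical position, analogous to how the odd-row case in the two-slit theorem was reduced to a single separating row. I would likewise fold the columns flanking the column of holes inward so that only a bounded-width strip remains.

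After this reduction I expect to be left with one (or a small constant number of) canonical polyomino, a short vertical strip containing the A-slit above the square/C-slit with the prescribed gap, possibly decorated with ``striped'' squares that may or may not be present, exactly as in \cref{fig:twoholes1}. The second step is then to supply an explicit crease pattern on this canonical shape --- assigning each unit square to one of the six faces of \C\ (using the die-labeling convention so that opposite faces sum to $7$) --- and to check that the mountain/valley assignment yields a valid folded state covering the cube. Here I would exploit the remark already noted in the excerpt that a U-slit reduces to a square hole by a single $\pm 180^\circ$ fold, so the C-slit and unit-square cases need not be treated entirely separately: I can fold the C-slit shut first and then fold as in the square-hole case, which also explains why the statement allows either a square hole or a C-slit.

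The main obstacle, as in the sibling theorems, is not the reduction but verifying that the folds around the A-slit interact correctly with the folds around the upper hole once the gap between them has the prescribed parity. Concretely, the A-slit (open at the bottom) and the hole above it must both be ``absorbed'' by the cube's surface without forcing material strictly into the interior, and the even gap has to let the two fold neighborhoods meet along a shared cube edge rather than conflict. I would argue this by tracking, square by square, which face each cell lands on and confirming that the slit edges end up on cube edges (so that the missing material simply corresponds to a boundary of the paper lying along a cube edge) rather than in the interior of a face. I expect the parity condition to be precisely what guarantees the A-slit's open side aligns with a cube edge; the bulk of the work is the case analysis over which striped squares are present, each of which I would dispatch by pointing to the corresponding labeled crease pattern rather than recomputing from scratch.
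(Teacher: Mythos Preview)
Your proposal is correct and follows essentially the same route as the paper: fold the C-slit flap away to reduce to a unit-square hole, accordion-fold the surrounding rows and columns inward, use the even-row hypothesis to reduce the gap between the two holes to exactly two rows, and then exhibit an explicit crease pattern on the resulting canonical shape (\cref{fig:twoholes1}(c)). The only simplification you are missing is that, because the two holes lie in the \emph{same} column, the reduction produces a single canonical polyomino with no ``striped'' optional squares and no column-parity case split, so the final verification is a single crease pattern rather than the case analysis you anticipate.
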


\begin{proof}
We can assume that the upper hole is a unit square, as the flaps generated by a C-slit can always be folded away. 
Similar to before we fold away all surrounding rows and columns and reduce the number of rows between the A-slit and the unit-square hole to two. This yields the shape of \cref{fig:twoholes1}~(c), which can be folded as indicated by the crease pattern.
\end{proof}

Note that if the bottom slit is a U-slit, then the shape of \cref{fig:twoholes1}~(c) does not fold, again verified by the algorithm of \cref{subsec:alg}.

\begin{theorem}
A polyomino with an A-slit and a unit-square hole/C-slit below it and separated by an odd number of rows, folds, regardless in which columns they are.
\end{theorem}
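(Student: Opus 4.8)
The plan is to follow the same reduce-then-exhibit-a-crease-pattern strategy used for the two earlier two-hole theorems, adapting it to the odd-row, arbitrary-column setting. First I would normalize the lower hole: if it is a C-slit rather than a unit-square hole, its free flap folds flat onto a neighboring square by a $\pm 180^\circ$ fold, exactly as in the even-row theorem, so that it subsequently behaves as a unit-square hole. Thus I may assume throughout that the upper hole is an A-slit and the lower hole is a unit square. I would then fold away all material outside the vertical band spanned by the two holes -- the rows above the A-slit and below the hole fold in by $\pm 180^\circ$, and the columns to the left and right of both holes fold inward, as in \cref{fig:reduction} -- isolating a narrow configuration made up of the A-slit, the unit square hole, the (odd number of) rows between them, and the neighboring squares that the holes force to be present.

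Next comes the parity step. I would accordion-fold the rows lying strictly between the two holes so as to close the vertical gap. Because their number is \emph{odd}, this leaves the A-slit's free flap hanging in the correct orientation directly against the hole's row; with an even number of rows the flap would emerge reflected, which is precisely why the even-row theorem had to insist that the two holes share a column. This is the role the odd parity plays, and it is what frees us from any column constraint.

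I would then collapse the columns to a canonical horizontal offset, in exact analogy with the first two-slit theorem: folding the columns between the two holes reduces the separation to a small value, and the resulting shape depends on whether the original column offset was even or odd, yielding two canonical configurations in the spirit of \cref{fig:twoholes1}. For the even-offset configuration the A-slit sits directly over the hole and the flap covers the bare face head-on; for the odd-offset configuration the flap must reach across a single-column gap. For each I would exhibit an explicit crease pattern, noting as before that connectivity of the polyomino forces the required central square(s) to be present.

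The main obstacle is the last configuration. The delicate point is verifying that the A-slit's free flap, after the odd-row collapse, lands on \emph{exactly} the face the hole leaves uncovered in the odd-offset sub-case, so that all six faces are covered bijectively with consistent layering and no material is folded into the cube's interior. Establishing one folding scheme that simultaneously handles both offset parities, rather than an ad hoc construction per offset, is where the real work lies; the two small base configurations themselves can be confirmed to fold by the algorithm of \cref{subsec:alg}.
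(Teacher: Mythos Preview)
Your reduce-then-exhibit strategy is exactly the paper's: normalize the C-slit to a unit-square hole by folding away its flap, fold away the surrounding rows and columns, collapse the odd number of intermediate rows down to one, then fold the columns between the two holes to leave only a bounded horizontal offset, and finally display explicit crease patterns for the surviving small shapes.

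Two points where your outline diverges from what the paper actually does. First, the column reduction does not leave only two terminal configurations. The paper ends up with \emph{four} shapes (\cref{fig:twoholes1}~(d)--(g)): after folding, at most two columns remain \emph{between} the holes, and together with the same-column and adjacent-column cases this gives four genuinely distinct pictures, each requiring its own crease pattern. Your even/odd dichotomy conflates offset $0$ with offset $2$ and offset $1$ with offset $3$, but these are not identified by the column folds available once the row reduction has already been performed. This is not fatal to your plan, just an undercount you would discover when you tried to draw the pictures. Second, the algorithm of \cref{subsec:alg} checks only a \emph{necessary} local condition; it can certify non-foldability but cannot by itself confirm that a configuration folds. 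The paper supplies explicit crease patterns for (d)--(g) rather than appealing to the algorithm, and your write-up should do the same.
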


\begin{proof}
As before, we assume that the lower hole is a unit square, fold away the surrounding rows and columns, and reduce the number of rows between the two slits/holes to one. Furthermore, we fold the columns between the slits/holes such that at most two columns remain between the two slits/holes. Consequently, we obtain one of the shapes shown in \cref{fig:twoholes1}~(d) to~(g). All of them fold, with or without the striped region. Note that the upper unit-square holes in \cref{fig:twoholes1}~(d) and~(e) can be replaced by an A-slit which can be folded away.
\end{proof}

Note that if the two holes are in the same or neighbored column(s) (\cref{fig:twoholes1}~(d) and~(e)), then independently of the orientation of the U-slits or whether they are unit-square holes, any combination folds, yielding Theorem~\ref{th:2us-holes}. In the other cases, the unit square incident to all three slit edges constitutes the only unit square that covers the face '1' in the unit cube.

\begin{theorem}\label{th:2us-holes}
 A polyomino with two unit-square holes which are in the same or in neighboured column(s) and have an odd number of rows between them folds.
\end{theorem}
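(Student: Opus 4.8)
The plan is to follow the same reduction-and-fold strategy used throughout this section, specializing it to the case of two unit-square holes lying in the same column or in two neighboring columns with an odd number of rows between them. First I would apply the now-familiar folding reductions: fold all surrounding rows and columns inward toward the two holes, and then collapse the rows strictly between the two holes. Because the number of rows between the holes is odd, these intermediate rows can be folded accordion-style (alternating mountain and valley folds) so that they stack onto a single row, leaving the two holes separated by exactly one row of unit squares. When the holes occupy the same column, this directly produces the configuration of \cref{fig:twoholes1}~(d); when they occupy neighboring columns, it produces \cref{fig:twoholes1}~(e). In both cases, striped squares may or may not be present, and the surrounding material has been folded flush against the core.

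Next I would invoke the observation already recorded just before the statement: in \cref{fig:twoholes1}~(d) and~(e), each unit-square hole can equally well be realized as the more general hole obtained by replacing it with an A-slit (since the extra flap of an A-slit folds away), and conversely the crease patterns shown there fold the cube \emph{independently} of the orientation of the U-slits and independently of whether the top and bottom holes are unit squares or C-slits. Consequently, the crease pattern that folds the A-slit-over-hole instance of \cref{fig:twoholes1}~(d)/(e) also folds the two-unit-square-hole instance verbatim. I would therefore simply exhibit the explicit mountain/valley crease assignment on these reduced shapes, verify that squares bearing the same label land on the same cube face (with opposite faces summing to $7$), and note that the folding remains valid whether or not the striped squares are present, precisely as claimed in the preceding theorem for the A-slit combinations.

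The main obstacle I anticipate is not the existence of a crease pattern but the bookkeeping needed to confirm that the \emph{same} pattern is robust across all the subcases subsumed by this theorem: the two holes may be in one column or in two adjacent columns; the striped squares may be fully present, partially present, or absent; and we must ensure the polyomino stays connected throughout so that the required unit squares surrounding each hole actually exist after reduction. I would address connectivity exactly as the earlier proofs do — observing that since both holes are genuine holes of $P$, all unit squares immediately bordering each hole are present in the reduced polyomino $P'$, which guarantees that every labeled square appearing in the target crease pattern of \cref{fig:twoholes1}~(d)/(e) is available. The remaining verification is the routine, finite check that the indicated fold angles produce a valid folded state covering each of the six cube faces; since this is fully analogous to the arguments already carried out in this subsection, I would present it by reference to the figure rather than re-deriving each fold, concluding that every such polyomino folds into $\C$.
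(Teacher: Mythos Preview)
Your proposal is correct and follows essentially the same approach as the paper. In fact, the paper does not give a separate proof of this theorem at all; it is stated as an immediate consequence of the observation recorded in the paragraph just before it, namely that the crease patterns of \cref{fig:twoholes1}~(d) and~(e) fold regardless of whether each hole is a unit square or a U-slit in any orientation, so your reduction-and-fold outline is simply a more explicit version of what the paper leaves implicit.
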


\section{Polyominoes That Do Not Fold}\label{sec:not-fold}
In this section, we identify \simple~holes and combinations of \simple~holes that do not allow the polyomino to fold.
First, we present some results that show how the paper is constrained around an interior grid point $v$. In particular, we consider situations when the induced polyomino of the four unit squares $A, B, C, D$ incident to $v$ is connected; for an example consider \cref{fig:180creases}.

\begin{lemma}\label{le:43}
Four unit squares incident to a polyomino grid point $v$ for which the induced polyomino is connected,  cannot cover more than three faces of $\C$. 
\end{lemma}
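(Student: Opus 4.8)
The plan is to exploit a basic structural fact about the grid model: because every fold is along a grid line at an angle in $\{\pm 90^\circ, \pm 180^\circ\}$, each unit square of $P$ is folded rigidly (no crease passes through its interior, since a unit square contains no interior grid line), and therefore maps isometrically onto a single face of $\C$. Consequently, every grid point of $P$, being a corner of some unit square, maps to a corner of the face carrying that square, i.e., to one of the eight corners of $\C$. First I would make this precise and, in particular, conclude that the grid point $v$ maps to a single cube corner $c$.

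Next I would bring in the connectivity hypothesis. Since the induced polyomino of the four squares $A,B,C,D$ is connected, at least three of the four unit edges incident to $v$ (the edges shared among the pairs of the $2\times 2$ block around $v$) are present, so $v$ is a genuine point of the paper that is held together rather than separated by slits. Hence in any folded state all four squares, each having $v$ as a corner, send that corner to the same image point $c$. Because the only faces of $\C$ that have $c$ among their corners are the three faces meeting at $c$, and because each of $A,B,C,D$ lies on a face having $c$ as a corner, all four squares lie on these three faces. Therefore they cover at most three faces of $\C$, as claimed.

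The hard part will be nailing down the two rigidity claims that make the corner argument valid; the counting is immediate once they are in place. Specifically, I must argue (i) that a flat unit square cannot straddle a cube edge or protrude from the surface, so it occupies exactly one face, which uses that adjacent faces meet at a right dihedral angle and that material interior to the cube is forbidden, and (ii) that grid-aligned folds keep all grid-point images at cube corners. The connectivity assumption is essential and deserves explicit comment: if instead every edge incident to $v$ were a slit edge, the four squares would meet only at the isolated point $v$ and could be folded independently to four distinct corners, covering four faces; connectivity is precisely what forbids this, by forcing $v$ to be a single, untorn point with a single image $c$.
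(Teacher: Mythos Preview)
Your proposal is correct and follows essentially the same argument as the paper: the grid point $v$ is mapped to a single corner of~$\C$, and since any cube corner is incident to exactly three faces, the four squares sharing $v$ can occupy at most three faces. The paper's proof is the one-line version of this, simply asserting that grid points map to cube corners; your plan to justify this via the rigidity of unit squares in the grid model and to explain the role of the connectivity hypothesis adds welcome detail but does not change the underlying idea.
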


\begin{proof}
 The grid point $v$ is incident to four unit squares in $P$. As grid points of $P$ are mapped to corners of $\C$ and all corners of $\C$ are incident to 3 faces, $v$ is incident to only 3 faces in $\C$. 
\end{proof}

\begin{lemma}\label{le:4by4}
Four unit squares incident to a grid point $v$ not on the boundary of $P$ cannot cover more than two faces of $\C$. In particular, at least two collinear incident creases are folded by $\pm 180^\circ$.
\end{lemma}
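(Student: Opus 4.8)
The plan is to track, for each of the four squares around $v$, the images of the two square-edges incident to $v$, and to exploit that these edge-images must agree across shared creases. First I would record the consequences of $v$ being interior: since $v$ is not on the boundary of $P$, an entire neighborhood of $v$ lies in $P$, so all four incident unit squares $A,B,C,D$ are present and none of the four edges at $v$ is a slit edge. Hence the folding map $\phi$ is continuous across each of these edges. Writing $v\mapsto c$ for the image corner, each square maps isometrically onto one of the three cube faces meeting at $c$; choosing coordinates so that the three edges of $\C$ at $c$ point in the positive axis directions, the two edges of each square at $v$ are sent to two of $+x,+y,+z$, and this pair of directions determines exactly which face the square covers.

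Next I would set up the bookkeeping. Label the squares cyclically $A,B,C,D$ with creases $e_{AB},e_{BC},e_{CD},e_{DA}$ between consecutive squares, so that opposite creases ($e_{AB},e_{CD}$ and $e_{BC},e_{DA}$) are collinear through $v$ while consecutive ones are perpendicular. Denote by $a_s,a_t$ the axes to which the two edges of $A$ at $v$ are sent, and similarly $b_s,b_t,\dots,d_s,d_t$. Continuity of $\phi$ across each shared crease forces the two squares meeting there to agree on the image direction of the shared edge; going once around $v$ this yields the four identifications $a_t=b_t$, $b_s=c_s$, $c_t=d_t$, $d_s=a_s$. Substituting, the faces become $f_A=\{a_s,a_t\}$, $f_B=\{b_s,a_t\}$, $f_C=\{b_s,c_t\}$, $f_D=\{a_s,c_t\}$, so the whole configuration is governed by the four labels $a_s,a_t,b_s,c_t$, and validity of each face (two distinct axes) forces the four inequalities $a_s\ne a_t$, $a_t\ne b_s$, $b_s\ne c_t$, $c_t\ne a_s$.

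The crux is then a pigeonhole observation: those four inequalities are precisely the edges of a $4$-cycle on the labels $a_s,a_t,b_s,c_t$, so a valid configuration is a proper $3$-coloring of $C_4$ by the axes $\{x,y,z\}$. With four labels and only three axes two labels must coincide, and the only non-adjacent (hence possibly equal) pairs are the diagonals $\{a_s,b_s\}$ and $\{a_t,c_t\}$. If $a_s=b_s$ then $f_A=f_B$ and $f_C=f_D$, so only two faces occur and the collinear creases $e_{AB},e_{CD}$ each glue a square onto its neighbor's face, hence are $\pm180^\circ$ folds; the case $a_t=c_t$ is symmetric and yields the collinear pair $e_{BC},e_{DA}$. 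Either way at most two faces are covered and a collinear pair of creases is folded by $\pm180^\circ$, which strengthens \cref{le:43}. The main obstacle I anticipate is the geometric justification underlying the combinatorics: that at an interior vertex a crease cannot be flat ($0^\circ$), since two coplanar squares sharing an edge at $v$ cannot both lie on one unit face, so that ``two adjacent squares on the same face'' genuinely forces a $\pm180^\circ$ fold; and verifying that the edge-direction identifications still hold in the $180^\circ$ case, where the shared edge maps into a face's boundary radius rather than onto a cube edge.
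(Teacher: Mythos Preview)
Your argument is correct and takes a genuinely different route from the paper. The paper first invokes \cref{le:43} to get at most three faces, then does a case split on whether two coinciding squares are edge-adjacent or diagonal; the adjacent case is dispatched by the somewhat informal remark that ``$C$ and $D$ must also map to the same face of $\C$ to maintain the paper connected,'' and the diagonal case is reduced to the adjacent one by a separate geometric argument ruling out a diagonal fold. Your edge-direction bookkeeping replaces this case analysis with a single combinatorial observation: the four labels $a_s,a_t,b_s,c_t$ must properly $3$-color a $4$-cycle, so two non-adjacent labels coincide, and either equality immediately gives a collinear pair of $\pm180^\circ$ creases. This is cleaner and makes the ``why does the constraint propagate to the opposite crease'' step explicit rather than intuitive; it also bypasses \cref{le:43} entirely. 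The price is the setup of the edge-direction formalism and the two sanity checks you flag at the end. Both of those checks are fine: if a shared crease were flat ($0^\circ$) the neighboring square would land in the plane of the face but on the exterior side of the cube edge, hence off the cube surface; and in the $\pm180^\circ$ case the shared edge is still an edge of the cube face, so your identification $a_t=b_t$ (etc.) holds without modification---your worry about a ``boundary radius'' does not arise because unit squares map onto whole cube faces.
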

\begin{proof}
Let $A$, $B$, $C$, and $D$ be the unit squares incident to $v$ in circular order; see the left of \cref{fig:180creases}.
By \cref{le:43}, $A$, $B$, $C$, and $D$ cover at most three faces of $\C$. Hence, at least two unit squares map to the same face of $\C$; these can be edge-adjacent or diagonal. 

In the first case, assume without loss of generality that $A$ and $B$ map to the same face. Hence, the crease between them must be folded by $\pm 180^\circ$. Then $C$ and $D$ must also map to the same face of $\C$ to maintain the paper connected. Consequently, the crease between $C$ and $D$ is  folded by $\pm 180^\circ$.

In the latter case, let without loss of generality $A$ and $C$ map to the same face of $\C$. As they are both incident to $v$, only two options of folding those two unit squares on top of each other exist. Either the edge between $A$ and $B$ gets folded on top of the edge between $B$ and $C$, this leaves a diagonal fold on $B$, a contradiction, or the edge between $A$ and $D$ gets folded on top of the edge between $B$ and $C$, which results in $D$ being mapped to $C$, and those are two adjacent unit squares, by the above argument two collinear incident creases must be folded by $\pm 180^\circ$.
\end{proof}

\begin{figure}[htb]
	\centering
	\includegraphics[page=1]{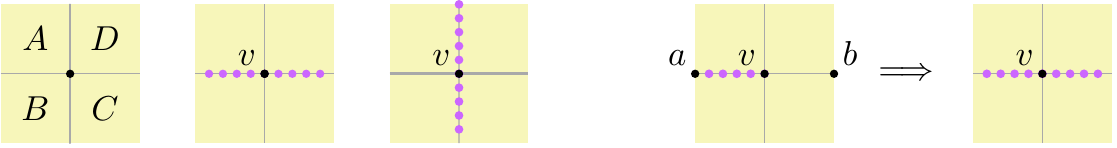}
	\caption{Illustration of \cref{le:4by4,le:4by4_2}. The thick dotted/purple lines represent creases fold by $\pm 180^{\circ}$; they could be mountain or valley.}
	\label{fig:180creases}
\end{figure}

\begin{lemma}\label{le:4by4_2}
Consider a grid point $v$ that is not on the boundary of a polyomino $P$ that folds into \C. If one crease of $v$ is folded by $\pm 180^{\circ}$, then the incident collinear crease is also folded by $\pm 180^{\circ}$.
\end{lemma}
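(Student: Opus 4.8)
The plan is to reduce everything to a single configuration and then lean on \cref{le:4by4}. First I would fix local notation exactly as in \cref{le:4by4}: let $A,B,C,D$ be the four unit squares incident to the interior grid point $v$ in circular order, with the four creases $AB$, $BC$, $CD$, $DA$ emanating from $v$; the two collinear pairs are then $\{AB,CD\}$ and $\{BC,DA\}$. By the rotational symmetry of this picture it suffices to assume that the crease $AB$ is folded by $\pm 180^\circ$ and to prove that its collinear partner $CD$ is also folded by $\pm 180^\circ$. Two elementary facts drive the argument. First, since grid points map to corners of \C (as used in \cref{le:43}) and a connected flat piece of paper must map into a single face of \C, which subtends only a right angle at each corner, no crease at $v$ can be folded by $0^\circ$: an unfolded crease would force its two incident squares to form one flat piece subtending a straight angle at $v$, exceeding the $90^\circ$ available on a single face. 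Second, two edge-adjacent squares mapping to the same face have their common crease folded by $0^\circ$ or $\pm 180^\circ$, whereas two mapping to adjacent faces have it folded by $\pm 90^\circ$.

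Next I would apply \cref{le:4by4}, which guarantees that at least one of the two collinear pairs has both of its creases folded by $\pm 180^\circ$. This splits into two cases. In the easy case the guaranteed pair is $\{AB,CD\}$ itself, and then $CD$ is folded by $\pm 180^\circ$ and we are done immediately. The real work is the other case, where the guaranteed pair is the transverse one, $\{BC,DA\}$.

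In that case $A,D$ share a face (crease $DA$ is $\pm 180^\circ$) and $B,C$ share a face (crease $BC$ is $\pm 180^\circ$); combined with the hypothesis that $AB$ is $\pm 180^\circ$, so that $A$ and $B$ share a face, all four squares $A,B,C,D$ map to one and the same face of \C. In particular $C$ and $D$ land on the same face, so by the second fact above the crease $CD$ is folded by $0^\circ$ or $\pm 180^\circ$ (a $\pm 90^\circ$ fold is excluded, as it would separate $C$ and $D$ onto adjacent faces). Finally the $0^\circ$ option is killed by the first fact, leaving $CD$ folded by $\pm 180^\circ$, as required.

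I expect this last case to be the main obstacle, since \cref{le:4by4} by itself only locates \emph{some} collinear $\pm 180^\circ$ pair and says nothing directly about $CD$ when that pair is the transverse one; the crux is recognizing that the second $\pm 180^\circ$ crease $AB$ collapses all four squares onto a single face, which then leaves $CD$ no option but $\pm 180^\circ$. As a fallback I would keep in reserve a parity/holonomy argument: tracking the traversal direction around $v$ on the $270^\circ$ cone at the corner shows that the number of $\pm 180^\circ$ creases at $v$ must be even, so having three of them ($AB$, $BC$, $DA$) forces the fourth; but the face-collapsing argument above is shorter and avoids introducing cone coordinates.
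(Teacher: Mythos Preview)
Your proposal is correct and follows essentially the same route as the paper: invoke \cref{le:4by4} to locate a collinear $\pm 180^\circ$ pair, and in the nontrivial case where that pair is transverse, combine it with the hypothesised $\pm 180^\circ$ crease to force the fourth. The only cosmetic difference is that you phrase the collapsing step in terms of faces (all four squares land on one face, so the last crease cannot be $\pm 90^\circ$), whereas the paper phrases it in terms of grid points (the two vertical $\pm 180^\circ$ folds identify the left and right neighbours $a,b$ of $v$, so the edges $av$ and $bv$ coincide and must carry the same fold angle); your explicit exclusion of $0^\circ$ folds is something the paper leaves implicit.
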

\begin{proof}

Without loss of generality, we show that if  the left horizontal crease of $v$ is folded by $\pm 180^{\circ}$, the same holds for the right horizontal crease. We denote the left and right adjacent grid points of $v$ by $a$ and $b$, respectively, as indicated in \cref{fig:180creases}, right.

Suppose for a contradiction, that the right crease is not folded by $\pm 180^{\circ}$. Then, by  \cref{le:4by4}, both vertical creases are folded by $\pm 180^{\circ}$. In particular, $a$ and $b$ are mapped to the same corner of \C and thus the edges $av$ and $bv$ coincide. Hence, because $av$ is folded by $\pm 180^{\circ}$, $bv$ is also folded by $\pm 180^{\circ}$.
\end{proof}

\cref{le:4by4,le:4by4_2} imply that:
\begin{corollary}\label{cor:rows}
Let $k,n\geq2$ and let $P$ be a polyomino containing a rectangular $k\times n$-subpolyomino $P'$ whose interior does not contain any boundary of $P$. Then, in every folding of $P$ into \C, all collinear creases of $P'$ are either folded by  $\pm 90^{\circ}$ or by $\pm 180^{\circ}$. Moreover, either all horizontal or all vertical creases of $P'$ are folded by  $\pm 180^{\circ}$; see \cref{fig:reducing}.
\end{corollary}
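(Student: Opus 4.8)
The plan is to derive both assertions from the two local results \cref{le:4by4,le:4by4_2}, applied at the interior grid points of the rectangle $P'$. First I would fix notation: write $H_1,\dots,H_{k-1}$ for the interior horizontal grid lines of $P'$ and $V_1,\dots,V_{n-1}$ for the interior vertical ones, and observe that every grid point where some $H_i$ meets some $V_j$ lies in the interior of $P'$, hence by hypothesis not on $\partial P$, and that all four of its incident unit squares belong to $P$. Thus both lemmas are applicable at each such point. I would also record the (short) observation that no interior grid edge of $P'$ can be folded flat ($0^\circ$): the two squares sharing it map either to the same cube face, forcing them to stack in a $\pm180^\circ$ fold, or to two faces meeting along the image edge, giving a $\pm90^\circ$ fold, since the cube surface contains no flat $1\times2$ region. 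Consequently every such crease is folded by $\pm90^\circ$ or $\pm180^\circ$.

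For the first claim (uniformity along each collinear line) I would fix one maximal line, say a horizontal $H_i$ with consecutive creases $e_1,\dots,e_n$ meeting at the interior points $v_1,\dots,v_{n-1}$. At each $v_t$ the creases $e_t$ and $e_{t+1}$ are collinear, so \cref{le:4by4_2} shows that $e_t$ is folded by $\pm180^\circ$ if and only if $e_{t+1}$ is. Chaining this equivalence along $t=1,\dots,n-1$ yields that either all of $e_1,\dots,e_n$ are folded by $\pm180^\circ$ or none of them is; by the no-flat observation, the latter means all are folded by $\pm90^\circ$. The same argument applies verbatim to each $V_j$, proving that every collinear line of creases of $P'$ is folded uniformly by $\pm90^\circ$ or by $\pm180^\circ$.

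For the second claim I would call a line a \emph{$180^\circ$-line} if (by the first claim, all of) its creases are folded by $\pm180^\circ$. Applying \cref{le:4by4} at the point $H_i\cap V_j$ produces a collinear pair of incident creases folded by $\pm180^\circ$; that pair lies on $H_i$ or on $V_j$, so for every pair $(i,j)$ at least one of $H_i$, $V_j$ is a $180^\circ$-line. Now I would split into cases: if some $H_{i_0}$ fails to be a $180^\circ$-line, then for every $j$ the point $H_{i_0}\cap V_j$ forces $V_j$ to be a $180^\circ$-line, whence all of $V_1,\dots,V_{n-1}$ are $180^\circ$-lines; otherwise every $H_i$ is already a $180^\circ$-line. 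In either case, all horizontal or all vertical creases of $P'$ are folded by $\pm180^\circ$.

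The chaining and the final case split are routine; the one point requiring care is to invoke the lemmas only at genuine interior grid points of $P$, so that the endpoints of each line, which lie on $\partial P'$, are never used as pivots, together with the brief exclusion of flat folds, which is exactly what upgrades ``not $\pm180^\circ$'' to ``$\pm90^\circ$'' in the first claim. I do not anticipate a serious obstacle, since the corollary is essentially a global bookkeeping argument built on the purely local statements \cref{le:4by4,le:4by4_2}.
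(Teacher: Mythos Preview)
Your proposal is correct and follows essentially the same route as the paper: chain \cref{le:4by4_2} along each grid line for the first claim, then invoke \cref{le:4by4} at an interior crossing to force at least one direction to consist entirely of $\pm180^\circ$-lines for the second. Your write-up is in fact slightly more careful than the paper's own proof in two places: you explicitly verify that the pivot points $H_i\cap V_j$ are interior to $P$ (so the lemmas apply), and you justify why ``not $\pm180^\circ$'' upgrades to ``$\pm90^\circ$'' by ruling out flat folds---a step the paper leaves implicit in its folding model.
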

\begin{proof}
First, suppose for a contradiction that there exist two collinear creases, one of which is folded by $\pm 90^{\circ}$ and the other by $\pm 180^{\circ}$. Then there also exists an interior grid point of $P'$ where the crease type of the two collinear edges changes from $\pm 90^{\circ}$ to $\pm 180^{\circ}$. However, by \cref{le:4by4_2}, if one is folded by $\pm 180^{\circ}$, then both are. A contradiction.

Second, suppose that not all horizontal creases are folded by $\pm 180^{\circ}$. Then, by the first statement, there exists a row in which no grid point has a horizontal edge that is folded by $\pm 180^{\circ}$. By \cref{le:4by4}, all vertical creases incident to the grid points of this row are folded by $\pm 180^{\circ}$. Because all collinear edges behave alike, it follows that all vertical creases are folded by $\pm 180^{\circ}$.
\end{proof}

\begin{figure}[htb]
	\centering
	\includegraphics[page=2]{v-crease-pattern-color.pdf}
	\caption{Illustration of \cref{cor:rows}.}
	\label{fig:reducing}
\end{figure}

\begin{corollary}\label{cor:rect}
Let $P$ be a rectangular $k\times n$-polyomino without any holes, then $P$ does not fold into $\C$.
\end{corollary}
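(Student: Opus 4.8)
The plan is to derive \Cref{cor:rect} as an immediate consequence of \Cref{cor:rows}, by showing that the single global constraint it provides is incompatible with folding a hole-free rectangle into~\C. Since $P$ is a $k \times n$ rectangle with no holes, the entire polyomino is itself a rectangular subpolyomino whose interior contains no boundary of $P$, so \Cref{cor:rows} applies to all of~$P$: in any folding into \C, either all horizontal creases or all vertical creases are folded by $\pm 180^\circ$.

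First I would fix, without loss of generality, the case that all horizontal creases are folded by $\pm 180^\circ$. A $\pm 180^\circ$ fold along a crease maps the two incident rows of squares onto the same horizontal strip of the cube's surface, with the fold line collapsing onto itself. The key observation is that folding every horizontal crease by $\pm 180^\circ$ accordions all $k$ rows of $P$ down onto a single unit-height strip. Concretely, I would argue that the image of $P$ under such a folding lies within a region of the cube that is at most one unit tall (measured along the cube's surface), because consecutive rows are repeatedly reflected back on top of one another. I expect the cleanest way to phrase this is to track the vertical grid lines: all horizontal creases being $\pm 180^\circ$ means no vertical ``height'' is ever gained on the cube, so the folded image can cover at most the squares reachable within one row's worth of vertical extent.

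The second step is to show that one unit-height strip of paper cannot cover all six faces of the cube. The cube's surface cannot be covered by a band of squares that never changes its vertical position on the cube, since the six faces are arranged around the cube so that covering all of them requires the paper to wrap both horizontally \emph{and} vertically --- equivalently, a single horizontal band of the cube's surface consists of only four faces (the four side faces), leaving the top and bottom faces uncovered. Hence the folded polyomino misses at least two faces of \C, contradicting that it forms the cube. The symmetric argument with rows and columns interchanged handles the all-vertical case.

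The main obstacle I anticipate is making the informal ``accordion collapses to one strip'' claim fully rigorous: I must carefully track how repeated $\pm 180^\circ$ folds compose on the cube's surface and confirm that no vertical extent is ever created, rather than merely asserting it from the flat picture in \Cref{fig:reducing}. One subtlety is that a $\pm 180^\circ$ fold could in principle occur after the paper has already been wrapped around a cube edge, so I would want to state the collapse in terms of the surface-intrinsic vertical coordinate of the cube (or equivalently argue via \Cref{le:43} that every interior grid point touches at most three faces, constraining how the band can turn). Once the ``at most four faces in a band'' fact is established, the contradiction is immediate, so I expect essentially all the difficulty to sit in this single geometric lemma about one-unit-tall images.
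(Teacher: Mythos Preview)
Your approach is correct and matches the paper, which states \Cref{cor:rect} without proof as an immediate consequence of \Cref{cor:rows}. Your worry about rigorizing the accordion step is unnecessary---a $\pm 180^\circ$ horizontal crease simply means the two incident squares cover the same face of~\C, so by induction the image of $P$ coincides with the image of a single row; the only point that genuinely needs a sentence is that a $1\times n$ strip covers at most four faces, and this holds because the left and right edges of each square map to opposite (hence parallel) edges of a cube face, so all crease-images lie in one parallel class of cube edges and those four edges are incident to only four of the six faces.
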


\subsection{Polyominoes with Unit Square, L-Shaped, and U-Shaped Holes}\label{subsec:pol-ush-L-U}

We begin by examining the possible foldings of a polyomino containing a unit-square hole.
Suppose that a given polyomino $P$ with a unit-square hole~$h$ folds into a cube.
Furthermore, let the shape of $h$ no longer be a square in the folded state; we say hole $h$ is folded in a \emph{non-trivial} way. For an example consider Figures~\ref{fig:non-trivial} and \ref{fig:one-edge-fold}. Then, in the folded state, either all edges of $h$ are mapped to the same edge of \C, or two pairs of edges are glued forming an L-shape.
In the following, we show that if $P$ folds into \C, the first case is impossible, while the second produces a  specific crease pattern around~$h$.

\begin{figure}[h]
\centering
\includegraphics{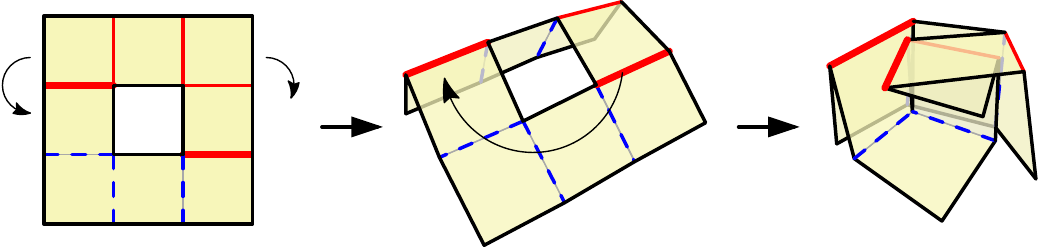}
\caption{An example of a non-trivial fold of a $3\times 3$ square with a unit square hole. The crease pattern is a special case of the one in \cref{fig:hole-crease-pattern}.}
\label{fig:non-trivial}
\end{figure}

\begin{lemma}\label{lem:SquareHole1}
The four edges of a unit-square hole $h$ of a polyomino $P$ that folds into~$\C$ are not mapped to the same edge of \C in the folded state.
\end{lemma}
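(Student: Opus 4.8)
The plan is to assume the contrary and extract a layer-ordering (non-self-intersection) contradiction from the way the paper must wrap around the cube edge. Suppose the four edges of $h$ all map onto a single cube edge $e=uv$. Since each edge of $h$ maps onto all of $e$, consecutive corners of $h$ map to opposite endpoints of $e$; hence the two diagonal pairs of corners of $h$ map to $u$ and to $v$ respectively. Write $L,R,T,B$ for the unit squares of $P$ sharing the left, right, top, and bottom edge of $h$ (all present, since $h$ is a hole), so each of them has one edge on $e$ and is therefore mapped onto one of the two faces $F,G$ of \C incident to $e$. First I would pin down how the corner squares around $h$ (those sharing a grid point with $h$) behave: applying \cref{le:43} at a corner of $h$, a corner square adjacent to two of $L,R,T,B$ is forced onto the third face of \C incident to the corresponding endpoint ($u$ or $v$), and two squares meeting at a corner of $h$ cannot both use the same face of $\{F,G\}$, since then a single corner square would have two incident edges mapped to one cube edge. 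This forces opposite squares to share a face, i.e.\ $\{L,R\}$ uses one of $F,G$ and $\{T,B\}$ the other.

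The core step is to track the cyclic order of paper layers on the two side faces of \C incident to $u$ and $v$. Around each of the four cube edges emanating from $u$ and $v$ into $F$ and $G$, two creases of $P$ wrap the paper from an $F$-square (or $G$-square) to a side-face square, and the no-self-intersection condition forces the layer order of the two $F$-squares (resp.\ the two $G$-squares) to agree with the layer order of the two side-face squares they connect to (innermost layer to innermost). Carrying this around $h$, the two edges at $u$ fix one order on the pair of squares covering the side face at $u$, while the two edges at $v$ force the two induced orders on the pair covering the side face at $v$ to be \emph{opposite} to each other. I expect this to be the crux: the twist created by the diagonal corners of $h$ mapping to a common endpoint makes the required layer order on a side face simultaneously both orders, which is impossible, so no embedded folded state realizes this configuration.

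The main obstacle will be making the layer argument rigorous and exhaustive rather than the setup. I would need to state the non-crossing condition at a convex cube edge precisely, verify the contradiction for both admissible colorings (which are symmetric under swapping $F$ and $G$), and deal with polyominoes in which some corner squares of $h$ are absent, where the squares that end up on the side faces and connect the $F$-part to the $G$-part take over the role of the corner squares. A clean way to handle the last point is to first apply the reduction already used in the proof of \cref{thm:singleHole} (folding the surrounding rows and columns inward) to reduce to a bounded neighbourhood of $h$, so that the squares around $e$ are exactly $L,R,T,B$ together with the four corner squares, and then run the layer-ordering argument there.
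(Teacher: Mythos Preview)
Your proposal is correct and follows essentially the same route as the paper: first pin down that opposite edge-neighbours of $h$ land on the same face of \C while adjacent ones land on perpendicular faces, then obtain a contradiction by propagating the layer order through the corner squares (the paper phrases the final step as ``$B$ and $D$ intersect'', which is exactly your non-crossing argument at the side edges). Your worry about absent corner squares is unnecessary and the reduction step can be dropped: since $P$ is open and the exterior is closed, a missing corner cell would be connected to $h$ through the shared grid point, so $h$ being a \emph{unit-square} hole already forces all eight surrounding cells (and the edges among them) to belong to $P$.
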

\begin{proof}
We denote the four unit squares of the polyomino edge-adjacent to $h$ by $A$, $B$, $C$, and $D$, and the four unit squares adjacent to $h$ via a grid point as $F_{1}$, $F_{2}$, $F_{3}$, and $F_{4}$, as illustrated in \cref{fig:one-edge-fold}. Assume for a contradiction that all edges of~$h$ are mapped to the same edge of $\C$. Consider $A$, $F_{1}$, and $B$ in the folded state. As the two corresponding edges of $h$ are glued together, the three faces must be pairwise perpendicular. The similar statement holds for the triples $\{B,F_{2},C\}$, $\{C,F_{3},D\}$, and $\{D,F_{4},A\}$. This results in a configuration as illustrated in the right of \cref{fig:one-edge-fold}.

\begin{figure}[htb]
	\centering
	\includegraphics[page=1]{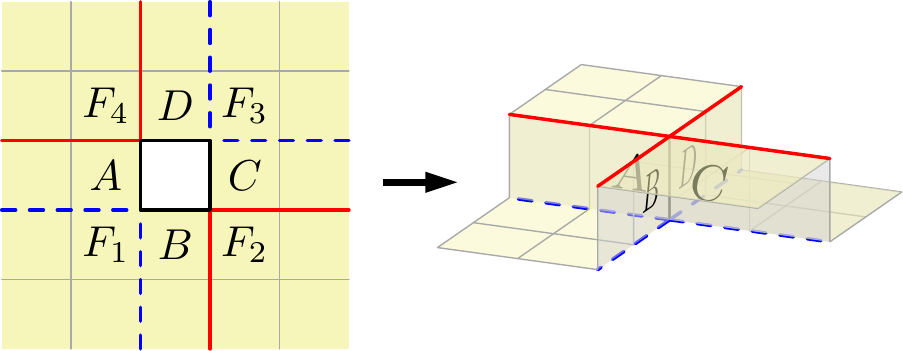}
	\caption{Four edges of a square hole glued together.}
	\label{fig:one-edge-fold}
\end{figure}

Because the faces $A,B,C$ share an edge of \C in the folded state such that $A$ and~$B$, as well as $B$ and $C$ are perpendicular, $A$ and $C$ must cover the same face of \C. Likewise, $B$ and $D$ cover the same face of \C. If $P$ folds into \C, then $F_{1}$ and $F_{3}$, as well as $F_{2}$ and $F_{4}$ are mapped to the same faces of \C.
Suppose, without loss of generality, that in the folded state $A$ lies in a more outer layer than $C$. Then, $F_{1}$ and $F_{4}$ are in a more outer layer than $F_{3}$ and $F_{2}$, respectively. Thus, $B$ connects the more inner layer of $F_{2}$ to the more outer layer of $F_{1}$, and at the same time $D$ connects the inner layer of $F_{3}$ to the outer layer of $F_{4}$. Hence, $B$ and $D$ intersect, which is impossible. Therefore, if the polyomino folds into a cube, the four edges of a square hole cannot all be mapped to the same edge of \C.
\end{proof}

It follows that the only non-trivial way to glue the edges of a square hole $h$ of a polyomino folded into a cube is to form an L-shape. We use this to show the following fact:

\begin{lemma}\label{lem:SquareHoleCreasePattern}
Let $P$ be a polyomino with a unit-square hole that folds into~\C.
 In every folding of $P$ into~\C where $h$ is folded non-trivially (i.e., $h$ is not mapped to a square), the crease pattern of the unit squares incident to $h$ is as illustrated in the right image of Figure~\ref{fig:hole-crease-pattern} (up to rotation and reflection).
\end{lemma}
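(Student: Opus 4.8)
The plan is to start from the established dichotomy: by Lemma~\ref{lem:SquareHole1}, a non-trivial fold of the unit-square hole $h$ cannot glue all four edges to a single edge of $\C$, so the only remaining non-trivial option is that the four boundary edges of $h$ are glued in two pairs forming an L-shape. I would fix notation exactly as in Lemma~\ref{lem:SquareHole1}: the four edge-adjacent squares $A,B,C,D$ in cyclic order and the four diagonally-adjacent squares $F_1,\dots,F_4$ (see \cref{fig:one-edge-fold}). The L-shape means precisely that two consecutive edges of $h$ are identified with the other two consecutive edges; without loss of generality (up to the rotation/reflection freedom promised in the statement) I would say the edge shared with $A$ is glued to the edge shared with $B$, and the edge shared with $C$ is glued to the edge shared with $D$.

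Next I would translate this gluing into forced dihedral angles. Since the hole disappears in the folded state, the paper surrounding $h$ must ``close up'': the two glued edge-pairs force specific $\pm 90^\circ$ folds so that $A,B$ meet perpendicularly at the shared corner and likewise $C,D$, while the L-shape means the hole collapses onto an edge of $\C$ rather than a full face. The heart of the argument is then a local case analysis at each of the four grid points of $h$, determining the crease type (mountain/valley, $90^\circ$ vs.\ $180^\circ$, and which of the four incident creases are actually creased) that is consistent with (i) the prescribed L-shaped gluing, and (ii) the paper remaining embedded (no self-intersection) and connected. I expect that at the two ``corner'' grid points of the L the creases are forced to be $\pm 90^\circ$ folds closing the gap, and I would invoke the earlier local results—\cref{le:43} and \cref{le:4by4}—to rule out configurations that would force too many faces at an interior grid point or would require a stray diagonal fold, thereby pinning down the crease pattern uniquely up to the stated symmetries and matching \cref{fig:hole-crease-pattern}.

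The main obstacle I anticipate is ensuring the case analysis is genuinely exhaustive yet collapses cleanly to a single pattern modulo the symmetry group of the square (the dihedral group of order~8). The L-shaped gluing already breaks some symmetry, so I must be careful to track which residual symmetries remain and to verify that each a priori distinct choice of mountain/valley assignment and of $90^\circ$-versus-$180^\circ$ at the squares $A,B,C,D$ and their neighbors either coincides with the target pattern after a symmetry, or is excluded by a layering/non-intersection argument analogous to the $B$-and-$D$-intersect contradiction used in Lemma~\ref{lem:SquareHole1}. In particular, I would lean on the fact that once $A,B$ are made perpendicular along a common edge of $\C$ and $C,D$ likewise, the positions of $F_1,\dots,F_4$ relative to the cube are essentially determined, and any deviation from the claimed crease pattern produces either a disconnected surface or two squares forced to occupy intersecting layers. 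Completing this local consistency check, and confirming it reproduces exactly the right image of \cref{fig:hole-crease-pattern} (up to rotation and reflection), finishes the proof; the $3\times3$ example in \cref{fig:non-trivial} serves as the concrete specialization that the general pattern must restrict to.
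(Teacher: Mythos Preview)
Your opening move---use \cref{lem:SquareHole1} to reduce to the L-shaped gluing, then analyze locally---is exactly how the paper proceeds. But from there your plan diverges in ways that overcomplicate matters and introduce a technical mismatch.

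First, the lemmas you intend to invoke do not fit. \cref{le:43} and \cref{le:4by4} are stated for a grid point with four incident unit squares of $P$ (the latter explicitly for a grid point not on the boundary). The four grid points of the hole are boundary points with only \emph{three} incident unit squares each, so neither lemma applies there. The paper never appeals to these lemmas in this proof.

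Second, the paper's argument is far more direct than the case analysis you anticipate, and it needs no layering or non-intersection reasoning at all. Label the hole's grid points $a,b,c,d$ in cyclic order and suppose $a$ and $c$ are identified (the L-corner). Then:
\begin{itemize}
\item At $b$ (and symmetrically $d$), the two hole-edges $ab$ and $bc$ map to the \emph{same} cube edge, so the $270^\circ$ of paper around $b$ must start and end on that edge. The only way to do this with two grid creases is two $\pm 90^\circ$ folds, i.e., the three incident squares are pairwise perpendicular, forming a cube corner.
\item At $a$ (and symmetrically $c$), the two hole-edges $da$ and $ab$ map to \emph{different} (adjacent) cube edges. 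Now the $270^\circ$ of paper around $a$ must start on one edge and end on the adjacent one, which forces exactly one $\pm 90^\circ$ crease and one $\pm 180^\circ$ crease; the three squares cover two faces.
\item The two $\pm 180^\circ$ creases (one at $a$, one at $c$) must be parallel: looking at the right side of \cref{fig:one-edge-fold}, collapsing the square configuration to an L means folding one of the two dashed lines by $\pm 180^\circ$, and those correspond to parallel creases around $h$.
\end{itemize}
That is the whole proof. The paper also explicitly restricts attention to crease \emph{angles} only, noting that mountain/valley types may be affected by creases elsewhere in $P$; your plan to track mountain/valley is unnecessary and would not yield a unique answer.

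Finally, your sentence ``at the two `corner' grid points of the L the creases are forced to be $\pm 90^\circ$'' gets the roles backwards: the all-$90^\circ$ pattern occurs at the L-\emph{endpoints} $b,d$, while the mixed $90^\circ/180^\circ$ pattern occurs at the preimages $a,c$ of the L-corner.
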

\begin{proof}
Suppose the four edges of $h$ are not mapped to distinct edges of~\C. Then, by \cref{lem:SquareHole1}, the four edges are not mapped to the same edge, but to two edges forming an L-shape. This effectively amounts to gluing a pair of diagonal grid points of the hole.

Let $a$, $b$, $c$, and $d$ be the four grid points of $h$, and suppose $a$ and $c$ are mapped to the same corner of \C when folding $P$ into \C; see also the left image of \cref{fig:hole-crease-pattern}.
\begin{figure}[t]
	\centering
	\includegraphics[page=1]{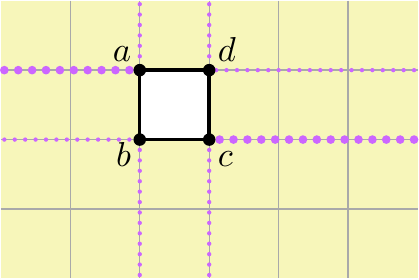}
	\hfil
	\includegraphics[page=2]{hole-crease-pattern-color.pdf}
	\caption{Thinner dotted/purple lines indicate creases folded by $\pm 90^{\circ}$, while thicker dotted/purple lines indicate creases folded by $\pm 180^{\circ}$. Left: crease pattern around a unit-square hole folding into an L-shape when grid points $a$ and $c$ are mapped to the same corner of \C; creases shown in purple can be both mountain or valley. Right: numbers indicate the face of the cube in the folded state; mountain folds are shown as red solid lines, and valley folds as blue dashed lines.}
	\label{fig:hole-crease-pattern}
\end{figure}

Consider the crease pattern around $h$.
We shall only focus on the angles of the creases and not the type of the fold, as there may be (and will be) other creases in $P$ affecting the type of the creases under our consideration.
Observe that the three faces incident to each of the grid points $b$ and $d$ are pairwise perpendicular, they form a corner of a cube.
Thus, the creases emanating from $b$ and $d$ are all folded by $\pm 90^{\circ}$.
Further observe that the three unit squares around each of the grid points $a$ and $c$ fold into two faces of a cube, thus, leading to one of the creases being folded by $\pm 90^{\circ}$ and the other folded by $\pm 180^{\circ}$.
Finally, the two creases folded by $\pm 180^{\circ}$ are parallel to each other.
Indeed, consider the right side of \cref{fig:one-edge-fold}.
For a crease to form an L-shape one of the two dashed blue lines must fold by $\pm 180^{\circ}$, which corresponds to two parallel creases in the unfolded state.
Therefore, the crease pattern in \cref{fig:hole-crease-pattern} (left) is the only pattern of creases (up to rotation and reflection) around a non-trivially folded square hole.
\cref{fig:hole-crease-pattern} (right) shows the faces of the corresponding crease pattern, and \cref{fig:non-trivial} shows the folding process of this crease pattern.
\end{proof}

With the help of \cref{lem:SquareHoleCreasePattern}, we can show that several types of polyominoes with unit-square holes do not fold into \C.

\begin{theorem}\label{thm:RectangleSquareHole}
If $P$ is a rectangle with exactly one unit-square hole~$h$, then $P$ does not fold into $\C$.
\end{theorem}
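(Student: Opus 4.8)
The plan is to assume, for contradiction, that the $k\times n$ rectangle $P$ with a single unit-square hole $h$ folds into $\C$, and to split into two cases according to how $h$ behaves in the folded state. In the easy case, $h$ is folded \emph{trivially}, i.e.\ it remains a unit square. Then its four edges map to the four edges bounding a single face of $\C$, so I can fill $h$ with a unit square mapped onto that face; the extra layer is harmless under the origami rules. This turns the folding into a folding of the hole-free rectangle $P^{+}=P\cup h$ into $\C$, contradicting \cref{cor:rect}. Hence in any folding of $P$ the hole must be folded \emph{non-trivially}, and the whole difficulty lies in this remaining case.

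For the non-trivial case I would first invoke \cref{lem:SquareHoleCreasePattern}, which fixes the crease pattern on the eight squares surrounding $h$ up to rotation and reflection. Writing the hole as a unit cell with grid points $a,b,c,d$ and $a\equiv c$ identified, this determines each crease emanating from $a,b,c,d$ as either $\pm90^{\circ}$ or $\pm180^{\circ}$, and in particular produces two parallel incident $\pm180^{\circ}$ creases which, up to symmetry, I may assume to be horizontal. I then propagate these assignments across the rest of $P$: applying \cref{le:4by4,le:4by4_2} and \cref{cor:rows} to the four hole-free sub-rectangles lying left, right, above, and below $h$ classifies every crease of $P$. The result is that the creases folded by $\pm90^{\circ}$ are exactly those on the two vertical grid lines bounding the hole's column, together with the part of the hole's lower grid line on one side and the part of its upper grid line on the other side, while all remaining creases are folded by $\pm180^{\circ}$.

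This classification is the heart of the argument. It cuts $P$ into six regions whose internal creases are all $\pm180^{\circ}$, so each region folds flat onto a single face of $\C$; the regions are the squares directly above and below $h$, and the two maximal pieces on each of the left and right of $h$. These six regions meet across the $\pm90^{\circ}$ creases in a single $6$-cycle, with the six separating creases reading (horizontal, vertical, vertical, horizontal, vertical, vertical) around it. I would finish by \emph{developing} the folding: once one region is placed on a face of $\C$, each subsequent $\pm90^{\circ}$ fold is forced, since across any fixed cube edge there is exactly one adjacent face on which the next region may lie. Carrying the development once around the $6$-cycle, it closes up consistently, but the squares directly above and below $h$ are forced onto the \emph{same} face of $\C$; hence the six regions cover only five faces, leaving one face of $\C$ uncovered. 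This is impossible for a folding into a cube, which gives the contradiction.

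The main obstacle is the non-trivial case, and within it two points demand care. First, the crease-by-crease propagation must be made rigorous in the ``thin-border'' configurations, where a surrounding sub-rectangle degenerates to a single row or column and \cref{cor:rows} no longer applies directly, so that one must fall back on \cref{le:4by4,le:4by4_2} at the relevant interior grid points. Second, one must verify that the forced development around the $6$-cycle genuinely identifies the two middle regions with a common face, independently of the region sizes and of the mountain/valley choices; here the key observation is that the development depends only on the (size-independent) incidence pattern of the six separating creases, so the explicit computation for the smallest case, the $3\times3$ square with a central hole, already determines the outcome in general. By comparison, the trivial case and the reduction to six flat regions are routine.
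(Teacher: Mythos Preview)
Your proposal is correct and follows essentially the same approach as the paper: both split into the trivial and non-trivial cases (invoking \cref{cor:rect} and \cref{lem:SquareHoleCreasePattern} respectively), both use \cref{cor:rows} on the hole-free sub-rectangles surrounding $h$ to propagate the $\pm90^\circ$/$\pm180^\circ$ assignments outward, and both finish by observing that only five cube faces can be covered. The only difference is packaging: the paper phrases the propagation as an iterative reduction---showing that the rows above, below, left, and right of $h$ fold away one side at a time, so that foldability of $P$ is equivalent to foldability of the $3\times3$ square with a central hole---whereas you classify all creases of $P$ at once, obtain six flat regions in a $6$-cycle, and argue that the development around this cycle is size-independent and hence matches the $3\times3$ computation. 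The paper's reduction is slightly cleaner in that it sidesteps the need to justify the size-independence of the development (and handles your ``thin-border'' cases automatically, since a width-$1$ strip simply has nothing to cut off), but the two arguments are doing the same work.
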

\begin{proof}
First note that $h$ is folded non-trivially, otherwise $P$ corresponds to a rectangle which does not fold into \C (Corollary~\ref{cor:rect}). Therefore, by \cref{lem:SquareHoleCreasePattern}, the crease pattern around $h$ is as depicted in \cref{fig:hole-crease-pattern}. Note that, on each side of~$h$, there exists a fold by $\pm 90^\circ$.

Consider the rectangle $R$ obtained by cutting $P$ by the top edge of $h$ and deleting the part below. If $R$ has a height of at least 2, then by \cref{cor:rows}, either all vertical or all horizontal creases are folded by $\pm 180^\circ$. 
In the first case, in particular the creases incident to $h$ are folded by  $\pm 180^\circ$. However, this is a contradiction to the crease pattern around $h$ in which each side of $h$ has fold by $\pm 90^\circ$.
Consequently, all horizontal edges are folded by $\pm 180^\circ$. This corresponds to folding $R$ on top of the row above $h$. In particular, $P$ is foldable into \C if and only if the polyomino $P'$ obtained from $P$ by cutting-off all rows above $h$ is foldable. Hence, we consider $P'$. 

Likewise, we treat all other sides of $P'$ and obtain the polyomino $P''$ consisting of a $3\times 3$-rectangle with a central unit-square hole; see also \cref{fig:hole-crease-pattern} (right). In particular,
$P$ is foldable (if and) only if $P''$ is foldable into \C.

Because $h$ is folded non-trivially, the crease pattern of $P''$ is given by \cref{fig:hole-crease-pattern}. Note that in the folded state $P''$  covers only 5 faces and, hence,  $P''$ does not fold into~\C.
\end{proof}

A similar result holds for rectangular polyominoes with two unit-square holes.

\begin{theorem}\label{thm:twoholeseven}
A rectangle with exactly two unit-square holes in the same row does not fold into \C if the number of columns between the holes is even.
\end{theorem}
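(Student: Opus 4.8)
The plan is to mimic the reduction in the proof of \cref{thm:RectangleSquareHole}, push it through for two holes, and then extract the parity obstruction by a face-counting argument. Throughout I assume $P$ folds into \C and derive a contradiction; I write the two holes as $h_1$ (left) and $h_2$ (right), lying in a common row $r$, with an even number $d\ge 2$ of columns between them (two distinct holes cannot be edge-adjacent, so evenness gives $d\ge 2$).

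First I would show that both holes are folded non-trivially. If, say, $h_1$ were folded trivially (mapped to a unit square), I could fill it in with a unit square covering that face, obtaining a rectangle with exactly one unit-square hole that still folds into \C; this contradicts \cref{thm:RectangleSquareHole}. Hence both $h_1$ and $h_2$ are folded non-trivially, so by \cref{lem:SquareHoleCreasePattern} each carries the forced L-shape crease pattern, in particular a $\pm 90^\circ$ fold on each of its four sides. Using \cref{cor:rows} exactly as in \cref{thm:RectangleSquareHole}, I then collapse the block of rows above both holes onto the single row directly above them (a $\pm 90^\circ$ vertical crease sits above each hole, so the top rectangle cannot have all vertical creases folded by $\pm 180^\circ$, forcing all horizontal ones, i.e.\ a flat fold), and symmetrically for the rows below and for the columns left of $h_1$ and right of $h_2$. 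Each such step is a flat fold onto an adjacent row/column and preserves foldability, so $P$ reduces to a $3\times(d+4)$ rectangle with the two holes in its middle row and $d$ clean columns between them.

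Next I would analyze the $3\times d$ \emph{between-region}. It is a rectangular subpolyomino with clean interior and both sides $\ge 2$, so by \cref{cor:rows} either all its vertical or all its horizontal creases are folded by $\pm 180^\circ$. The facing side of each hole contributes two horizontal creases adjacent to the between-region, and by the forced pattern they cannot both be $\pm 180^\circ$; since collinear creases behave alike, some full horizontal line of the between-region is $\pm 90^\circ$, ruling out the all-horizontal case. Hence \emph{all vertical creases of the between-region are folded by $\pm 180^\circ$}, so each of its three rows maps to a single cube face, say the middle row to $f_M$, the top to $f_T$, the bottom to $f_B$, with $f_T\perp f_M\perp f_B$. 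Because every internal vertical crease is a (reflecting) $\pm 180^\circ$ fold, the horizontal position of a point in the between-region is governed by the tent/reflection map; as $d$ is even, the right edge of $h_1$ and the left edge of $h_2$ are carried to the same cube edge, and $h_2$ is placed as a reflected copy of $h_1$ hanging off that edge on the same side.

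Finally I would run a face-counting argument. By \cref{lem:SquareHoleCreasePattern} the eight squares around each hole fold into an open box covering five faces, leaving one face $O_i$ uncovered, which I read off by anchoring the cube faces to $f_M,f_T,f_B$. The main obstacle, and the crux of the theorem, is to show $O_1=O_2$, i.e.\ that the two boxes open toward the \emph{same} face. This is where evenness is essential: the between-region forces $h_1$'s facing-side crease types to equal $h_2$'s (collinear creases agree), so the two holes' admissible configurations are correlated, and the even number of reflecting folds between them makes $h_2$'s box a reflected copy of $h_1$'s opening the same way. I would finish by checking the few compatible configuration pairs in face-coordinates, verifying in each that $O_1=O_2$; since the between-region covers only $f_M,f_T,f_B$, none of which is this common opening, no square of $P$ covers $O_1=O_2$, so $P$ cannot cover all six faces, a contradiction. (For odd $d$ the extra reflection would flip $h_2$'s box to open oppositely, letting the two boxes jointly cover all six faces, matching the fact that the odd case can fold.)
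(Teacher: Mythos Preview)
Your setup closely tracks the paper's: both holes must be non-trivial (else reduce to \cref{thm:RectangleSquareHole}), invoke \cref{lem:SquareHoleCreasePattern}, and use \cref{cor:rows} to conclude that all vertical creases of the between-region are folded by $\pm 180^\circ$. Your explicit collapse of the outer rows and columns is fine too; the paper leaves this step implicit, but the argument from the proof of \cref{thm:RectangleSquareHole} carries over directly.

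Where you diverge is the endgame, and there the proposal has a real gap. The paper uses the $\pm 180^\circ$ vertical creases to collapse the between-region to width~$2$, reducing to a fixed $3\times 6$ configuration, and then runs a short two-case analysis on the orientation of the forced crease pattern around the left $3\times 3$ block: are the two parallel $\pm 90^\circ$ creases vertical or horizontal? In each case it names a face ``$1$'' of~\C covered by four specific squares and observes that every remaining square shares a grid point with a ``$1$'' square, so none can cover the opposite face ``$6$''. Your alternative is to argue that the two open boxes miss the same face $O_1=O_2$ via the parity of the reflecting folds, and that this common missing face lies outside $\{f_M,f_T,f_B\}$. Both claims are correct, but you do not prove them: the sentence ``I would finish by checking the few compatible configuration pairs'' is precisely the case analysis that constitutes the heart of the argument, and the claim $O_1\notin\{f_M,f_T,f_B\}$ is asserted without justification. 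Once you actually do that check you will find it amounts to the paper's two cases in different clothing (the orientation of the $\pm 90^\circ$ pair determines which face is doubled and hence which is missed), so the cleaner move is to collapse the between-region to width~$2$ first and argue directly on the resulting small picture rather than carry the parity-of-reflections bookkeeping.
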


\begin{proof}
Note that if the polyomino can be folded into~\C, both holes must be folded non-trivially: If one hole behaves as a square in the folded state, i.e., is folded trivially, the polyomino is effectively reduced to a rectangle with one \simple~hole. However, by  \cref{thm:RectangleSquareHole}, this does not fold into~\C. Consequently, both holes are folded non-trivially.

Therefore, by \cref{lem:SquareHoleCreasePattern}, the crease pattern around the two holes is as depicted in \cref{fig:hole-crease-pattern}.
%
Consider the $3\times 2k$-rectangle $R$ between the two holes (with $k\ge 1$).
By the above observation, at least one horizontal edge of $R$ is folded by $\pm 90^{\circ}$. Consequently, \cref{cor:rows} implies that all vertical edges are folded by $\pm 180^{\circ}$. 
In particular, every unit square of $R$ is mapped to the same face of \C as the leftmost (or rightmost) unit square in the same row of $R$. This reduces the polyomino to one with~$R$ being a $3\times 2$-rectangle.
We will show that the squares of $P$ neighbouring the two holes are not able to cover~\C, 
that is, it remains to show that the polyomino $P$, depicted in \cref{fig:two-holes-even}, does not fold into~\C. 

\begin{figure}[tb]
	\centering
	\includegraphics[page=1]{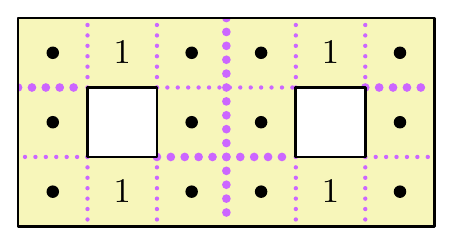}
	\hfil
	\includegraphics[page=2]{two-holes-even-color.pdf}
	\caption{A polyomino that does not fold into a cube.}
	\label{fig:two-holes-even}
\end{figure}

 Consider the left $3\times 3$ block of $P$. If the two parallel creases folded by $\pm 90^{\circ}$ are vertical, then the right $3\times 3$ block will also have the two parallel creases folded by $\pm 90^{\circ}$ run vertical; see \cref{fig:two-holes-even} (left). Then, the four unit squares above and below the two holes match to the same face on $\C$. Denote it as~`1'. Observe that the rest of the unit squares share a grid point with `1' and thus cannot cover the face on $\C$ opposite to `1'. 
 
 In the second case, when the two parallel creases folded by $\pm 90^{\circ}$ of the left block are horizontal, then they extend into the right $3\times 3$ block by \cref{cor:rows}. Refer to \cref{fig:two-holes-even} (right). Then, the four unit squares to the left and to the right of the two holes match to the same face on $\C$, which we denote by `1'. As before, every unit square of $P$ shares a grid point with `1' and thus the face opposite to `1' on $\C$ cannot be covered.
\end{proof}

\noindent \textbf{\emph{ Remark.}}
Note that the arguments of \cref{lem:SquareHoleCreasePattern} and \cref{thm:RectangleSquareHole,thm:twoholeseven} extend to an L-slit of size $2$, and a U-slit of size $3$. The resulting crease patterns are illustrated in \cref{fig:hole-crease-pattern2}.
\begin{figure}[htb]
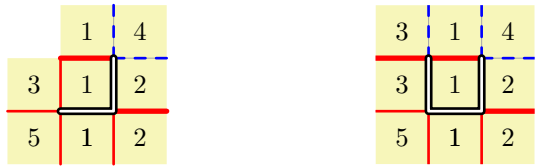

	\centering
	\includegraphics[page=3]{hole-crease-pattern-color.pdf}
	\hfil
	\includegraphics[page=4]{hole-crease-pattern-color.pdf}
	\caption{Crease pattern around an L-slit (left) and a U-slit (right).
	Numbers indicate the face of the six-sided die in the folded state; thinner lines denote creases folded by $\pm 90^{\circ}$; thicker lines denote creases folded by $\pm 180^{\circ}$; mountain folds are drawn solid/red; and valley folds are drawn dashed/blue.
	}
	\label{fig:hole-crease-pattern2}
\end{figure}
\medskip

These insights help to obtain the following result:

\begin{theorem}
\label{thm:two-holes-far}
Let $P$ be polyomino with two holes, which are both either a unit square, or an L-slit of size $2$, or a U-slit of size $3$, such that (1) $P$ contains all the other cells of the bounding box of the two holes and (2) the number of rows and the number of columns between the holes is at least $1$. In every folding of $P$ into~\C, the two holes are not both folded non-trivially. 
\end{theorem}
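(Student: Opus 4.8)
The plan is to argue by contradiction: I assume that in some folding of $P$ into \C{} both holes $h_1,h_2$ are folded non-trivially, and I derive a contradiction from \cref{cor:rows} by locating, inside a single clean rectangular region, one horizontal and one vertical crease that are both folded by $\pm 90^\circ$. The first step is to record the forced local structure at each hole. By \cref{lem:SquareHoleCreasePattern} (and, for the L-slit and U-slit, by the Remark following \cref{thm:twoholeseven}, with the patterns of \cref{fig:hole-crease-pattern2}), a non-trivially folded hole has a crease pattern determined up to rotation and reflection: two parallel creases folded by $\pm 180^\circ$ and, adjacent to each of its four sides, a crease perpendicular to that side folded by $\pm 90^\circ$. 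For the unit-square hole this ``perpendicular $\pm 90^\circ$ crease on each side'' is immediate from \cref{fig:hole-crease-pattern}, and crucially it holds regardless of which diagonal of the hole is identified (i.e.\ regardless of which way the $\pm 180^\circ$ pair points).

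Next I would fix coordinates. Up to reflection and relabeling I may assume $h_1$ is the upper-left hole and $h_2$ the lower-right hole, so that their common bounding box $B$ has $h_1$ in its top-left corner and $h_2$ in its bottom-right corner. By hypothesis~(2) there is at least one full column and at least one full row of cells of $B$ strictly separating the two holes, and by hypothesis~(1) every cell of $B$ other than $h_1,h_2$ belongs to $P$; since $P$ has no holes besides $h_1$ and $h_2$, the interior of $B$ carries no slit edges either. I then single out the off-diagonal block $R$ of $B$ consisting of the columns lying (weakly) to the right of $h_1$ and the rows lying (weakly) above $h_2$. The separation hypothesis guarantees that $R$ lies strictly to the right of $h_1$ and strictly above $h_2$, hence avoids both holes, and that each side length of $R$ is at least $2$. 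Thus $R$ is a clean $k\times n$ subpolyomino with $k,n\ge 2$ whose interior contains no boundary of $P$, so \cref{cor:rows} applies to $R$.

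Finally I would collide the two local patterns inside $R$. The right side of $h_1$ borders the left boundary of $R$, so the perpendicular $\pm 90^\circ$ crease carried by that side is a \emph{horizontal} crease, and one checks (for the unit-square case directly from \cref{fig:hole-crease-pattern}) that it lies in the interior of $R$ rather than on its boundary. Symmetrically, the top side of $h_2$ borders the bottom boundary of $R$, so its perpendicular $\pm 90^\circ$ crease is a \emph{vertical} crease interior to $R$. Hence $R$ contains both a horizontal and a vertical crease folded by $\pm 90^\circ$. But \cref{cor:rows} forces either all horizontal creases of $R$ or all vertical creases of $R$ to be folded by $\pm 180^\circ$, which is impossible once a $\pm 90^\circ$ crease of each orientation is present. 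This contradiction shows that $h_1$ and $h_2$ cannot both be folded non-trivially.

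The main obstacle I expect is bookkeeping rather than a conceptual gap, and it is concentrated in the extension from the unit square to the L-slit and U-slit. There I must confirm, using the patterns of \cref{fig:hole-crease-pattern2}, that the side of each hole's bounding box facing $R$ still carries a perpendicular $\pm 90^\circ$ crease that lands in the \emph{interior} of $R$ (and not merely on its boundary), and that with the larger bounding boxes---and the corresponding reading of ``number of rows/columns between'' via the relevant grid points---the off-diagonal block $R$ remains clean with both side lengths at least $2$. Since the relevant incident crease and its fold angle must be read off \cref{fig:hole-crease-pattern2} for each of the L- and U-slit orientations, this is the one place where the argument genuinely proceeds case by case; everything else is orientation-independent and follows from \cref{lem:SquareHoleCreasePattern} and \cref{cor:rows}.
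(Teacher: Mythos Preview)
Your proposal is correct and follows essentially the same approach as the paper: orient the holes, isolate the clean off-diagonal rectangle $R$, feed in one horizontal and one vertical $\pm 90^\circ$ crease from the forced local patterns at the two holes, and contradict \cref{cor:rows}. The only cosmetic difference is that the paper phrases the final contradiction by propagating the two $\pm 90^\circ$ creases across $R$ (first clause of \cref{cor:rows}) until they meet at an interior grid point with four $\pm 90^\circ$ creases, then invoking \cref{le:4by4}, whereas you invoke the second clause of \cref{cor:rows} directly; and, as you correctly flag, the only real work beyond the unit-square case is the bookkeeping for the L- and U-slit patterns from \cref{fig:hole-crease-pattern2}, which the paper also handles by reference to those figures.
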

\begin{proof}
If $P$ contains a unit-square hole that is not folded non-trivially, then, by \cref{lem:SquareHoleCreasePattern}, the crease pattern in the neighborhood of the hole is as depicted in  \cref{fig:hole-crease-pattern}. Likewise, if $P$ contains an L-slit of size 2 or a U-slit of size 3 that is folded non-trivially, the crease pattern in the neighborhood of the hole is as depicted in  \cref{fig:hole-crease-pattern2}. Note that on each side of the crease patterns in the neighborhood of the holes, there exists a crease folded by $\pm 90^{\circ}$.

We turn the paper such that the left hole is above the right hole as in \cref{fig:two-hole-crease-pattern} and consider the rectangular region $R$ to the right of the left hole and above the right hole.

\begin{figure}[htb]
	\centering
	\includegraphics{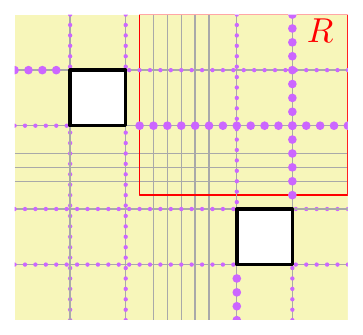}
	\caption{Two unit-square holes with at least one row and column in between, if folded non-trivially imply two perpendicular creases folded by $\pm 90^{\circ}$ (drawn as thin dotted lines).}
	\label{fig:two-hole-crease-pattern}
\end{figure}

Because each side of the crease patterns in the neighborhood of the holes has a crease folded by $\pm 90^{\circ}$ (Lemma~\ref{lem:SquareHoleCreasePattern}), $R$ contains a vertical and a horizontal crease folded by $\pm 90^{\circ}$. By \cref{cor:rows}, all collinear creases are also folded by $\pm 90^{\circ}$. Hence, there exists a grid point in $R$ for which all incident creases are folded by $\pm 90^{\circ}$, yielding a contradiction to \cref{le:4by4}.
%
\end{proof}

\subsection{Polyominoes with a Single Slit of Size 1}\label{subsec:not-fold-slit}

In the following, we show that a slit hole of size~1 does not help in folding a rectangular polyomino into $\C$.
We start with a lemma:

\begin{lemma}\label{lem:slitNEW1}
In every folding of a polyomino $P$ with a slit hole of size 1, the crease pattern behaves as if the slit hole was nonexistent.
\end{lemma}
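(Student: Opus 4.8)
The plan is to show that the two squares flanking the slit are forced, in any folding, into exactly the relative position they would occupy if the slit edge were a genuine connected crease, so that assigning that edge the matching fold angle turns any folding of $P$ into a consistent folding of the ``filled-in'' polyomino. Let $e$ denote the single slit edge, with grid-point endpoints $p$ and $q$, and let $A$ and $B$ be the two unit squares of $P$ on the two sides of $e$. Since a size-$1$ slit is a genuine hole (a bounded component of the exterior), it must be surrounded by paper; in particular both $p$ and $q$ are interior grid points, each a single point of the paper shared by the four squares meeting there and joined to the rest of $P$ through the present (non-slit) edges. First I would record the only consequence of this that I need: because $p$ and $q$ are each a single point of the connected region $P$, any folding sends $p$ to one well-defined image $P'$ and $q$ to one well-defined image $Q$, and since grid points map to corners of \C (as in \cref{le:43}), both $P'$ and $Q$ are corners of the cube.

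The heart of the argument is then a short rigidity observation. The slit cuts the paper along $e$, so $A$ and $B$ carry their own copies $e_A\subset\partial A$ and $e_B\subset\partial B$ of this unit segment, but both copies still run between $p$ and $q$. Since the folding is an isometry on each square, $e_A$ maps to the segment $\overline{P'Q}$ and $e_B$ maps to $\overline{P'Q}$ as well; as $P'$ and $Q$ are cube corners at unit distance, $\overline{P'Q}$ is a cube edge. Hence $A$ and $B$ are unit squares each having one edge on the \emph{same} cube edge $\overline{P'Q}$, so each lies on one of the only two cube faces incident to $\overline{P'Q}$. I would then enumerate the two possibilities: either $A$ and $B$ lie on the same face, so they coincide and the slit is closed exactly as by a $\pm180^\circ$ fold of $e$; or they lie on the two perpendicular faces meeting along $\overline{P'Q}$, which is precisely the relative position produced by a $\pm90^\circ$ fold of $e$. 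No other placement is possible.

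Finally I would conclude that, in either case, $e$ can be assigned the matching fold angle ($\pm180^\circ$ or $\pm90^\circ$) so that the resulting crease pattern on the filled-in polyomino is valid and agrees everywhere with the crease pattern of the slit version; thus the crease pattern around the slit is forced to be one that a genuine crease could realize, which is exactly the assertion that it ``behaves as if the slit were nonexistent.'' The single feature not captured by the crease angles is that the cut permits other parts of the paper to be threaded through it, which matches the accompanying remark that the only way to exploit the slit is to push material through it. I expect the main obstacle to be the careful justification of the single-image claim for $p$ and $q$ --- that a size-$1$ slit is a cut with joined endpoints which does not locally disconnect the paper, so $p$ and $q$ really are single points of $P$ mapping to single cube corners; once this is pinned down, the cube-edge identification and the two-case enumeration are immediate.
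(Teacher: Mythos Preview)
Your argument is correct and substantially cleaner than the paper's. The paper labels the six unit squares surrounding the slit and performs a case analysis on the fold angles of the creases adjacent to the slit (the collinear crease $AF$, then the perpendicular creases $EF$ and $AB$), deriving in each surviving case that the slit edge is forced to behave as a $\pm90^\circ$ or $\pm180^\circ$ crease and ruling out one impossible combination by a contradiction. You bypass all of this by observing that the two endpoints $p,q$ of a size-$1$ slit are genuine single points of the paper (the three non-slit edges at each endpoint are present, so the four incident squares are locally connected there), hence both copies of the slit edge are pinned to the same cube edge $\overline{P'Q}$; the two flanking squares then have only the two placements corresponding to a $\pm180^\circ$ or $\pm90^\circ$ fold.

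What your approach buys is economy and conceptual clarity: you use only the two flanking squares and the single-image property of the endpoints, with no case analysis on surrounding creases. The paper's approach, being computational, makes the interaction with the neighbouring creases explicit, which dovetails with how the paper uses the lemma downstream (propagating $\pm180^\circ$ creases via \cref{cor:rows}); but your argument yields the same conclusion and the downstream use goes through unchanged. Your closing remark that threading material through the cut is the only freedom left is exactly the caveat the paper records after the lemma.
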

\begin{proof}
To prove the lemma we examine the local neighborhood of the slit and analyze the possible folding patterns we can obtain between adjacent faces. More specifically, we consider the six unit squares $A$, $B$, $C$, $D$, $E$ and $F$ of $P$ that are incident to the slit hole of size 1 as illustrated in \cref{fig:slit1}.
We distinguish two cases: The crease between $A$ and $F$ is folded by $\pm 90^\circ$ or $\pm 180^\circ$.

\begin{figure}[ht]
	\centering
	\includegraphics[page=2]{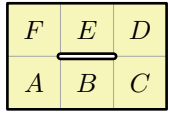}
	\caption{A polyomino with a slit hole of size one.}
	\label{fig:slit1}
\end{figure}

If the $AF$ crease is folded by $\pm 90^\circ$, we must further distinguish if the $EF$ crease is folded by $\pm 90^\circ$ or by $\pm 180^\circ$.
If the $EF$ crease is folded by $\pm 180^\circ$, then the slit edge is mapped to the edge between $AF$, fixing that $B$ maps to $A$. Hence, this corresponds to a crease folded by $\pm 90^\circ$ of the slit edge.

By symmetry, we may assume that both the $AB$ crease and the $EF$ crease are folded by $\pm 180^\circ$. This implies that $B$ and $E$ cover the same face in such a way that the top edge of $B$ is mapped to the left edge of $E$. However, then the bottom left corner of $D$ is also mapped to the top left corner of $E$. A contradiction. Consequently, this is impossible.

If the $AF$ crease is folded by $\pm 180^\circ$, then $A$ and $F$ cover the same face and, in particular, their left edges are mapped to the same edge such that the top edge of $F$ and the bottom edge of $A$ coincide. This implies that the left edge of $E$ and the left edge of $B$ also coincide such that the top edge of E and the bottom edge of $B$ coincide. This corresponds to crease folded by $\pm 180^\circ$ of the slit edge.

This shows that the slit edge is a crease folded by $\pm 90^\circ$ or by $\pm 180^\circ$. Hence, the crease pattern behaves as if the slit hole was nonexistent.
\end{proof}

\begin{theorem}\label{thm:slit1}
If $P$ is a rectangle with exactly one slit of size~1, then $P$ does not fold into \C.
\end{theorem}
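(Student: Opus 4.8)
The plan is to argue by contradiction, using the two ingredients already established: \cref{lem:slitNEW1} (the slit of size~1 behaves like an ordinary crease) together with \cref{cor:rect} (a hole-free rectangle never folds into \C). Concretely, suppose for contradiction that $P$ is a $k\times n$ rectangle with a single interior slit edge and that $P$ admits a folded state realizing \C.

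First I would invoke \cref{lem:slitNEW1}: in the assumed folding, the crease pattern around the slit behaves exactly as if the slit were absent, i.e.\ the slit edge is folded by $\pm 90^\circ$ or $\pm 180^\circ$ and the two unit squares on either side of it meet precisely as they would across a genuine crease. Let $\bar P$ denote the full $k\times n$ rectangle obtained from $P$ by filling in the single missing slit edge (so $\bar P$ has no holes). I would then observe that the given folded state of $P$, together with the fold angle assigned to the slit edge by \cref{lem:slitNEW1}, induces a bona fide folded state of $\bar P$ onto \C: nothing new is added geometrically, the added edge merely records the already-determined dihedral angle, and since the two incident faces already abut as across a crease, no new overlaps or inconsistencies are created. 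Hence $\bar P$ folds into \C.

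This is the contradiction I am after, because $\bar P$ is exactly a hole-free $k\times n$ rectangle, and \cref{cor:rect} asserts that no such rectangle folds into \C. Therefore the assumption that $P$ folds into \C is untenable, which proves the theorem. (Implicitly $k,n\ge 2$ here, since a slit of size~1 can only occur as an interior degenerate hole of a rectangle with both side lengths at least~$2$, keeping $P$ connected; this is precisely the regime in which \cref{cor:rect} applies.)

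The only genuinely delicate point—and the step I would write out most carefully—is the claim that ``behaves as if the slit were nonexistent'' really upgrades to a legitimate folded state of the \emph{hole-free} rectangle rather than merely a statement about local crease angles. I would make this rigorous by noting that \cref{lem:slitNEW1} not only fixes the slit edge's fold angle but also shows (in each of its cases) that the two slit-adjacent faces are mapped so that the two copies of the slit edge coincide in space; filling the edge back in is then just declaring these already-coincident boundary segments to be a crease, which changes neither the image of the paper in $\mathbb{R}^3$ nor its layer structure. Once this is spelled out, the reduction to \cref{cor:rect} is immediate and the rest is a one-line contradiction.
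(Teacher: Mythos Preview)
Your overall strategy---invoke \cref{lem:slitNEW1} and reduce to the unfoldability of a solid rectangle---is exactly the paper's. The paper's proof is a two-liner: by \cref{lem:slitNEW1} the crease pattern is that of a solid rectangle, so \cref{cor:rows} forces all horizontal or all vertical creases to be $\pm 180^\circ$, collapsing the image to a $1\times n$ strip, which cannot cover~\C.

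There is, however, a genuine gap in the step you flag as delicate. You claim that because the two copies of the slit edge coincide in space, gluing them back yields a bona fide folded state of the solid rectangle $\bar P$, changing ``neither the image \ldots\ nor its layer structure.'' The image claim is fine, but the layer-structure claim is not: \cref{lem:slitNEW1} constrains only fold angles, and the paper itself remarks (just before the conjecture in \cref{subsec:not-fold-slit}) that the remaining way to exploit a size-$1$ slit is precisely to push material \emph{through} it. In such a folding the two sides of the slit are separated by intervening layers, and re-identifying them would create a self-intersection, so no legitimate folded state of $\bar P$ results. This is exactly why Conjecture~\ref{conj:slit} is left open rather than proved.

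The repair is to stay at the level of crease patterns, as the paper does. Since \cref{lem:slitNEW1} makes the slit endpoints behave like interior grid points for the purposes of \cref{le:4by4} and \cref{le:4by4_2}, the argument of \cref{cor:rows} applies verbatim to $P$ itself: all horizontal or all vertical creases are $\pm 180^\circ$, hence the image of $P$ lies on a single band of at most four faces of~\C. This conclusion needs only the image, not a consistent layer ordering of $\bar P$, so the push-through issue never arises. In short, use \cref{cor:rows} on $P$ directly rather than \cref{cor:rect} on $\bar P$.
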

\begin{proof}
By \cref{lem:slitNEW1}, the crease pattern behaves as if the slit was nonexistent, i.e., as if $P$ was a rectangle. By \cref{cor:rows}, all horizontal or vertical creases are folded by $\pm 180^\circ$, reducing $P$ to a rectangle of height or width 1, which does not fold into \C.
\end{proof}

Furthermore, we conjecture that the slit of size~1 never is the deciding factor for foldability.
\begin{conjecture}\label{conj:slit}
Let polyomino $P'$ be obtained from a polyomino $P$ by adding a slit~$s$ of size 1. If $P'$ folds into $\C$, then $P$ folds into $\C$ as well. 
\end{conjecture}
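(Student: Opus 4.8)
The plan is to start from an arbitrary folding of $P'$ into $\C$ and to transform it into a folding of $P$, where $P$ is obtained from $P'$ by closing the slit $s$. The key tool is \cref{lem:slitNEW1}: in the given folding of $P'$, the slit edge is creased by $\pm 90^\circ$ or $\pm 180^\circ$, and every unit square sits in exactly the position it would occupy if $s$ were a genuine crease. Hence, if we simply re-glue the two squares along $s$ (that is, pass to $P$), the resulting placement is still an isometric map of $P$ onto the surface of $\C$ that covers all six faces, and the two incident squares meet along the image of $s$ at a consistent dihedral angle, so the gluing is geometrically valid. The only way this placement can fail to be a folding of $P$ is a self-intersection: re-gluing $s$ removes the ``gate'' through which paper was allowed to thread, so any material that passed through the slit in the $P'$-folding would now have to pass through itself. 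Conversely, if nothing threads through $s$, the $P'$-folding is non-self-intersecting everywhere and is already a valid folding of $P$. Consequently, it suffices to show that $P'$ admits a folding into $\C$ in which no material threads through $s$.

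To produce such a threading-free folding, I would argue by a layer-exchange (surgery) argument. Let $e$ denote the image of $s$ on $\C$; by \cref{lem:slitNEW1} it is either an edge of the cube (the $\pm 90^\circ$ case) or a fold line where two layers coincide (the $\pm 180^\circ$ case). The paper threading through $s$ consists of finitely many strands crossing from the local stack on one side of $e$ to the stack on the other. I would take a folding of $P'$ into $\C$ that minimizes the number of such strands (or a finer potential counting threadings), and then attempt to show that a single strand can always be rerouted so that it wraps around $e$ rather than passing through it---going around the incident cube edge in the $\pm 90^\circ$ case, or via a transposition of two adjacent layers in the $\pm 180^\circ$ case. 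Because $P'$ is still connected after cutting $s$ (otherwise $P'$ would not be a polyomino), the two sides of the slit are joined by a path in $P' \setminus s$, and one can hope that a strand together with its reroute bounds a patch of the cube surface whose layer order may be reversed locally, strictly decreasing the potential and contradicting minimality.

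The main obstacle---and the reason the statement is stated only as a conjecture---lies in the global bookkeeping of the layer ordering in the second step. The tools available (\cref{lem:slitNEW1} and \cref{cor:rows}) control only the \emph{crease angles}, whereas threading is fundamentally a phenomenon of the \emph{stacking order} of layers, about which we have essentially no structural information. On a closed surface of nontrivial topology such as the cube, a reroute that removes one threading at $e$ may be blocked by layers lying on the far faces, so there is no guarantee that a purely local exchange exists: a collision removed at $e$ could reappear elsewhere. Making the exchange argument rigorous would seem to require a genuinely new invariant---for instance, a consistent cyclic layer order around every cube edge that is provably preserved under the reroute---or a monovariant that is guaranteed to decrease. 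I expect that establishing such a global invariant on the layer ordering, rather than the local angle analysis already supplied by \cref{lem:slitNEW1}, is where the real difficulty lies.
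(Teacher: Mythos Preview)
The statement you are addressing is labeled \emph{Conjecture} in the paper, and the paper offers no proof of it; it is left open. So there is no ``paper's own proof'' to compare against. Your write-up is not a proof either, and you are candid about this: you reduce the question, via \cref{lem:slitNEW1}, to the existence of a threading-free folding, and then correctly flag the layer-ordering surgery as the unresolved step.

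Your diagnosis of the obstacle is on target, and the paper in fact supplies concrete evidence that it is a genuine obstacle rather than a technicality. Immediately after stating the conjecture, the paper proves \cref{le:slit-polycube}: for a suitable larger polycube $Q$, there is a polyomino that folds into $Q$ only by pushing material through a size-1 slit, so that closing the slit destroys foldability. This is precisely a situation where the crease-angle information from \cref{lem:slitNEW1} is consistent with a slit-free folding, yet no threading-free realization exists. Hence any argument for the conjecture must use something specific to the \emph{unit} cube (for instance, that its surface has very few faces and hence very shallow layer stacks), and your proposed local layer-exchange move cannot succeed in general without such an additional ingredient. That missing cube-specific invariant is exactly what remains to be found.
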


We note that this is not true for arbitrarily large \emph{polycubes} (connected three-dimensional polyhedron that are formed by a union of face-adjacent unit cubes on the cube lattice):
\begin{lemma}\label{le:slit-polycube}
There exists a polyomino $P$ with a slit $s$ of size 1 and a polycube $Q$, such that $P$ can be folded into $Q$, but the polyomino $P'$ without $s$ cannot be folded into $Q$. That is, for larger polycubes, a slit of size 1 can be the deciding factor for foldability.
\end{lemma}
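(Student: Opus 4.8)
The plan is to prove existence by exhibiting one explicit triple $(P,s,Q)$ and arguing the two directions separately: that $P$ (with the slit) folds into $Q$, and that $P'$ (with $s$ filled in) does not. The conceptual reason a slit can matter for a polycube while it (conjecturally) cannot for $\C$ is \emph{curvature}. Every corner of $\C$ is incident to exactly three faces, and this single fact drives \cref{le:43}, \cref{le:4by4}, and the whole analysis of \cref{lem:slitNEW1} that shows the slit is inert for cube folding. A polycube, however, may have a \emph{reflex} vertex $u$ at the inner corner of an L-shaped notch, where five surface faces meet and the surface angle is $450^\circ$ rather than $270^\circ$. Covering the surface in a neighborhood of such a $u$ needs more than the $360^\circ$ of paper available around a single interior grid point, so material must be delivered to $u$ from two sides. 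A size-$1$ slit is exactly the feature that lets a flap be threaded through the paper to supply that extra angle, which is precisely what ``pushing part of the polyomino through the slit'' means here.

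First I would design $Q$ to contain such a $450^\circ$ reflex vertex together with a notch face $f$ that is only reachable by routing paper through the slit location; the three-cube L-feature produces the reflex corner, and I would pad $Q$ with enough extra cubes that a fixed surrounding band of the polyomino is forced to wrap onto the bulk of $Q$ almost rigidly, using \cref{cor:rows}-type reductions to pin down the creases of the flat parts. I would then take $P$ to be this surrounding region together with a flap, placing the slit $s$ precisely at the grid point that maps to $u$, so that the four squares around $s$ no longer form a $4$-cycle. Proving the first half — that $P$ folds into $Q$ — I would then do by simply giving the crease pattern and face assignment in a figure, exhibiting the flap passing through $s$ and landing on $f$; this is a routine, if fiddly, verification of one concrete folded state.

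The hard part will be the second half: that $P'$, the same polyomino with $s$ filled in, admits \emph{no} folding into $Q$ whatsoever, since I must exclude all foldings rather than just the natural candidate. My approach is to argue locally at $u$. With the slit filled, the grid point of $P'$ lying over $u$ is genuinely interior, so the reasoning behind \cref{le:4by4} forces two collinear incident creases to be folded by $\pm 180^\circ$; the resulting overlaps mean this grid point alone covers only a small portion of the $450^\circ$ cone, leaving at least one of the five faces at $u$ — in particular $f$ — uncovered. Covering $f$ would require a second grid point to reach $u$, but the near-rigidity of the surrounding band (which forces some grid point onto $u$ in the first place) leaves no free material that can get there without a self-crossing of layers, exactly in the spirit of the ``$B$ and $D$ intersect'' obstruction in \cref{lem:SquareHole1}. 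To upgrade ``the forced folding fails'' to ``every folding fails,'' I would combine this curvature count with the rigidity of the band and, as a safeguard for the residual finite case analysis, confirm the obstruction with the checking algorithm of \cref{subsec:alg}. The delicate point to calibrate is the size and position of the flap and of the padding, so that the slit is simultaneously \emph{sufficient} (the flap reaches $f$ through $s$) and \emph{necessary} (with $s$ absent the angle-and-layer count at $u$ is infeasible).
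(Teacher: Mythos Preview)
Your proposal is a plan, not a proof: this is an existence lemma, and you never actually exhibit $P$, $s$, or $Q$. The curvature heuristic at a $450^\circ$ reflex vertex is reasonable intuition, but the hard direction---that $P'$ admits no folding into $Q$---rests on two claims you do not establish: (i) that rigidity of a ``surrounding band'' forces one specific interior grid point of $P'$ onto the reflex vertex $u$, and (ii) that no other portion of $P'$ can reach $u$ to cover the remaining faces. Claim (ii) is the crux, and ``leaves no free material that can get there without a self-crossing'' is not an argument; a priori many grid points could map to $u$, and ruling them all out requires either a global structural constraint or an explicit case analysis, neither of which you provide. Falling back on the algorithm of \cref{subsec:alg} ``as a safeguard'' concedes that the geometric argument is incomplete.

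The paper's proof takes a different route that supplies precisely the global rigidity you are gesturing at. It chooses $P$ with $40$ unit squares and $Q$ with $38$ surface faces, so any folding can overlap at most two squares. This tight count, combined with \cref{cor:rows}-style reasoning applied to two large rectangular subpolyominoes (a $3\times 4$ and a $5\times 2$ block, both disjoint from $s$), pins down the gross structure of any folding of $P'$: each $\pm 180^\circ$ fold across such a block would waste three or more squares, so the blocks must fold into tubes, which in turn forces where the remaining pieces go. The forced layout then visibly requires threading a column through the slit location. The square-versus-face count is the device that makes ``near-rigidity'' precise; your ``pad $Q$ with enough extra cubes'' is pointing in this direction, but until you fix the numbers and carry out the count, the local obstruction at $u$ does not exclude all foldings.
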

\begin{proof}

\begin{figure}[htb]
\centering
\includegraphics[scale=1]{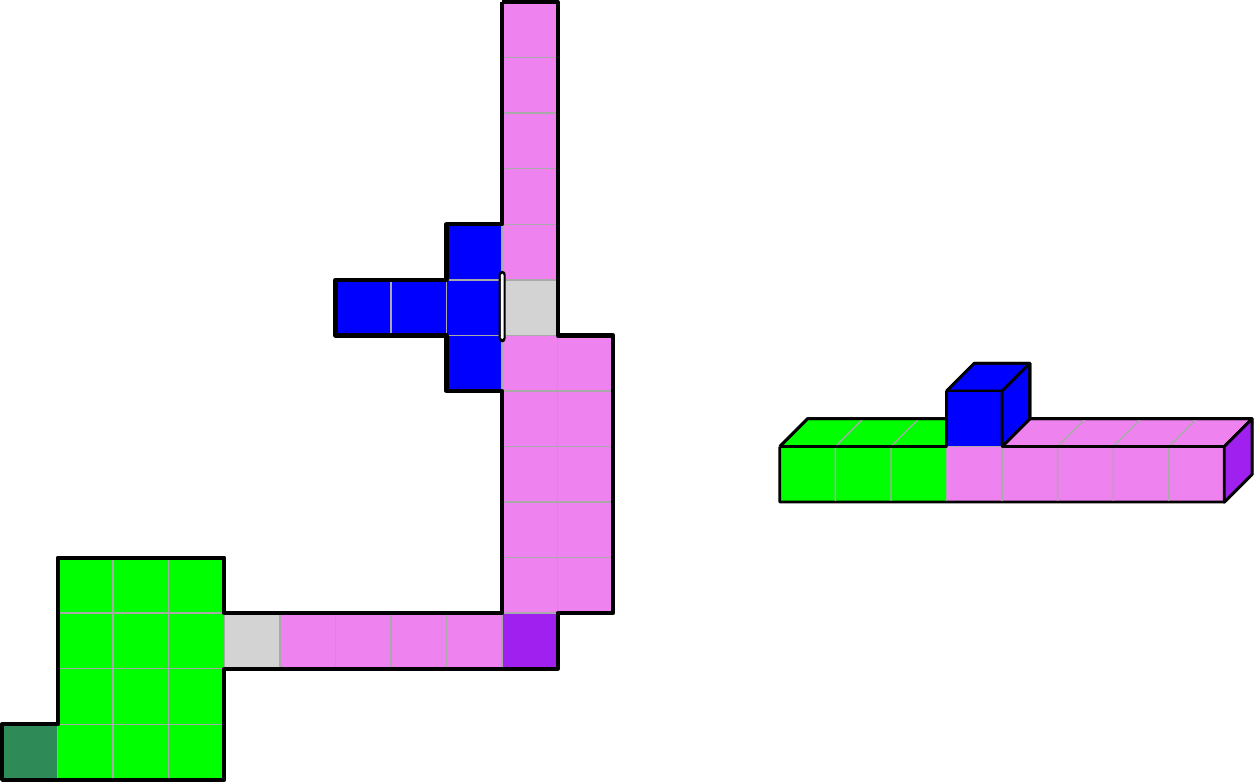}
\caption{Left: Polyomino $P$ (with slit $s$), right: polycube $Q$. Colors of $P$ coincide with the parts of the same color in $Q$, the light gray unit squares are not mapped to outer faces of $Q$.}
\label{fig:polycube-slit}
\end{figure}
We consider the polyomino $P$ and the polycube $Q$ from Figure~\ref{fig:polycube-slit}. $P$ has 40 unit squares, and $Q$ has 38 faces. Therefore, 38 out of the 40 unit squares will be the faces of $Q$ when folded, hence, at most two unit squares of $P$ may be folded on top of other unit squares to obtain $Q$.

$P$ contains two rectangular $k\times n$-subpolyominoes with $k,n\geq 2$ that do not contain $s$: a rectangular $3\times 4$-subpolyomino in the lower left (green) and a rectangular $5\times 2$-subpolyomino (pink). Similar to the proof for~\cref{cor:rows}, we know that there do not exist two collinear creases in these rectangular subpolyominoes, one of which is folded by $\pm 90^{\circ}$ and the other by $\pm 180^{\circ}$. Hence, if we fold a crease in those rectangular subpolyominoes by some angle, all other collinear creases (in the same row or column) are also folded by the same angle.
Observe that the surface of $Q$ does not contain any $2\times2$ flat squares. Hence, for every grid point contained in a rectangle at least the vertical or horizontal creases are folded by some angle.

Assume that we fold the vertical crease of length 5 in the pink rectangular $5\times 2$-subpolyomino by $\pm 180^{\circ}$. Then 5 unit squares would be folded on top of other unit squares in $Q$, again a contradiction. If, on the other hand, we fold a horizontal crease of length 2 in the rectangular $5\times 2$-subpolyomino by $\pm 180^{\circ}$, then all other unit squares need to appear as a face of $Q$. Similarly, if all of these creases would be $\pm 90^{\circ}$, again 2 unit squares would be folded on top of other unit squares. However, there are unit squares attached at the bottom and the top of the $5\times 2$-subpolyomino, which in that case cannot cover separate faces (5 unit squares from the pink rectangular $5\times 2$-subpolyomino plus these two adjacent unit squares can cover at most 4 faces of $Q$), which would yield further overlap, a contradiction.
Hence, the crease of length 5 must be $\pm 90^{\circ}$, thus, this will constitute part of the row of eight unit cubes in $Q$.

Analogously, assume that we fold any of the horizontal or vertical creases of the green rectangular $3\times 4$-subpolyomino by $\pm 180^{\circ}$. Hence, 3 or 4 of the unit squares would be folded on top of other unit squares in $Q$, a contradiction. Consequently, all existing folds must be $\pm 90^{\circ}$.

Assume that we fold all vertical creases in the green rectangular $3\times 4$ subpolyomino $\pm 90^{\circ}$. This would yield 3 faces of a tube-like $4\times 1 \times 1$-polycube for which the $1\times 1$ top and bottom faces and one of the $4\times 1$ faces are missing. However, together with (part of) the pink $5\times 1\times1$ polycube, this would yield a row of nine unit cubes, which cannot be combined for $Q$. Hence, all horizontal creases must be $\pm 90^{\circ}$.

Consequently, the green rectangular $3\times 4$-subpolyomino can be folded in a tube-like $3\times 1 \times 1$-polycube for which the $1\times 1$ top and bottom faces are missing. If we did not use the dark-green leftmost bottom unit square to cover one of these faces, this closing face would need to be a unit square of the remaining $27 (=40-(3\times 4+1))$ unit squares, however, then three unit squares must be folded on top of unit squares of the folded $3\times 4$-subpolyomino, a contradiction to the number of faces of $Q$ again. 

Hence the $3\times 1 \times 1$ polycube with one $1\times 1$-face missing (obtained from the green rectangular $3\times 4$-subpolyomino and the adjacent dark-green unit square), must cover the left $3\times 1 \times 1$-subpolycube of $Q$.

Then, the only part of $P$ that can be folded into the blue attached unit cube is the blue T-shape. 

The vertical unit-square row on top and below that T has length 5, hence, it must cover a part of the right $5\times 1\times 1$ of $Q$ (again, otherwise too many unit squares would be folded on top of each other). 

We obtain this only when using the slit of size 1 (we push the green $3\times 4$-subpolyomino and the adjacent dark-green unit square through the slit and unfold then again).
\end{proof}

\subsection{An Algorithm to Check a Necessary Local Condition for Foldability}\label{subsec:alg}
Consider the following local condition: let $s$ be a unit square in a
polyomino $P$ such that the mapping between grid points of $s$ and
corners of a face of $\C$ has been fixed. Then, for every unit
square $s'$ adjacent to $s$, there are two possibilities on how to map its four
grid points onto $\C$: the two grid points shared by $s$ and $s'$ must
be mapped consistently and for the other two grid points there are two
options depending on whether $s'$ is folded by $\pm 90^{\circ}$
to an adjacent face of $\C$, or whether it is folded by $\pm 180^{\circ}$
to the same face of $\C$. 

The algorithm below checks whether there exists a mapping between
all grid points of unit squares of $P$ to corners of $\C$ such that the
above condition holds for every pair of adjacent polyomino squares
of $P$.
\begin{compactenum}
\item Run a breadth-first search on the polyomino unit squares, starting with
the leftmost unit square in the top row of $P$ and continue via adjacent
unit squares. This produces a numbering of polyomino unit squares in which each
but the first unit square is adjacent to at least one unit square with smaller
number. See Figure~\ref{fig:algo-step1} for an example.
\item Map grid points of the first unit square to the bottom face of $\C$. Extend
the mapping one unit square at a time according to the numbering, respecting
the local condition (that is, in up to two ways). Track all such partial
mappings.
\end{compactenum}

\begin{figure}[htb]
\centering
\includegraphics[scale=1]{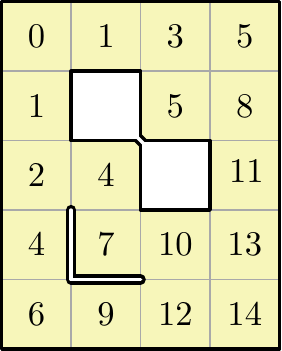}
\caption{Example of Step 1 of the algorithm. It shows the numbering of polyomino squares produced by the breadth-first search.}
\label{fig:algo-step1}
\end{figure}

The algorithm is exponential, because unless inconsistencies are produced, the number of possible partial mappings doubles with every polyomino unit
square. Nevertheless, it can be used to show non-foldability for small
polyominoes: if no consistent mapping exists for a polyomino, then
the polyomino cannot be folded onto $\C$. On the other hand, any consistent
grid-point mapping covering all faces of $\C$ obtained by the algorithm
that we tried could in practice be turned into a folding. However,
we have not been able to prove that this is always the case.

The algorithm above was used to prove that polyominoes in \cref{fig:simple_slits_cases}
do not fold, as well as it aided us to find the foldings of polyominoes in  \cref{fig:twoholes1}. An implementation of the algorithm is available at the following site \url{http://github.com/zuzana-masarova/cube-folding}.

\begin{figure}[htb]
\centering
\includegraphics{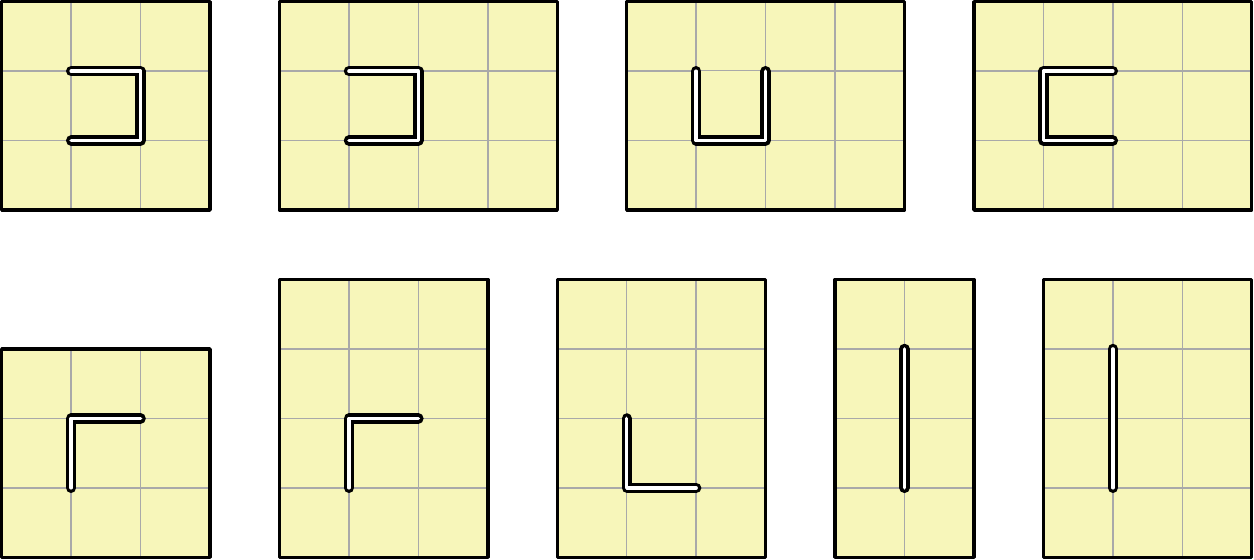}
\caption{These polyominoes with single L-, U- and straight size-2 slits do not
fold into a cube. }
\label{fig:simple_slits_cases}
\end{figure}

\section{Conclusion and Open Problems}\label{sec:concl}

We showed that, if a polyomino $P$ does contain a non-\simple~hole, then $P$ folds into $\C$. Moreover, we showed that a unit-square hole, size-2 slits (straight or L), and a size-3 U-slit sometimes allow for foldability.

Based on the presented results, we created a font of 26 polyominoes with slits that look like each letter of the alphabet,
and each fold into $\C$.
See \cref{fig:font}, and \url{http://erikdemaine.org/fonts/cubefolding/} for a web app.

\begin{figure}[htbp]
  \centering
  \def\scale{0.22}
  \parskip=3pt
  
  \includegraphics[scale=\scale]{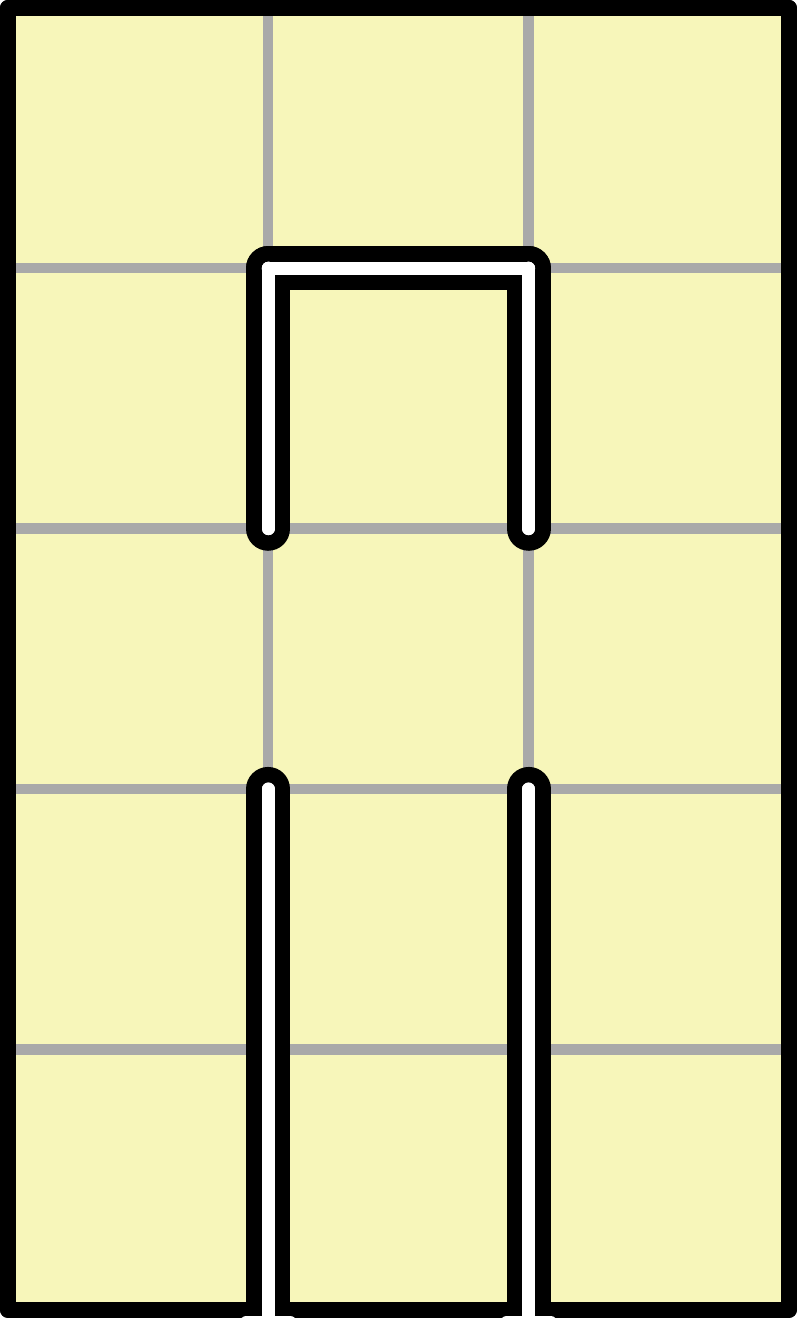}
  \includegraphics[scale=\scale]{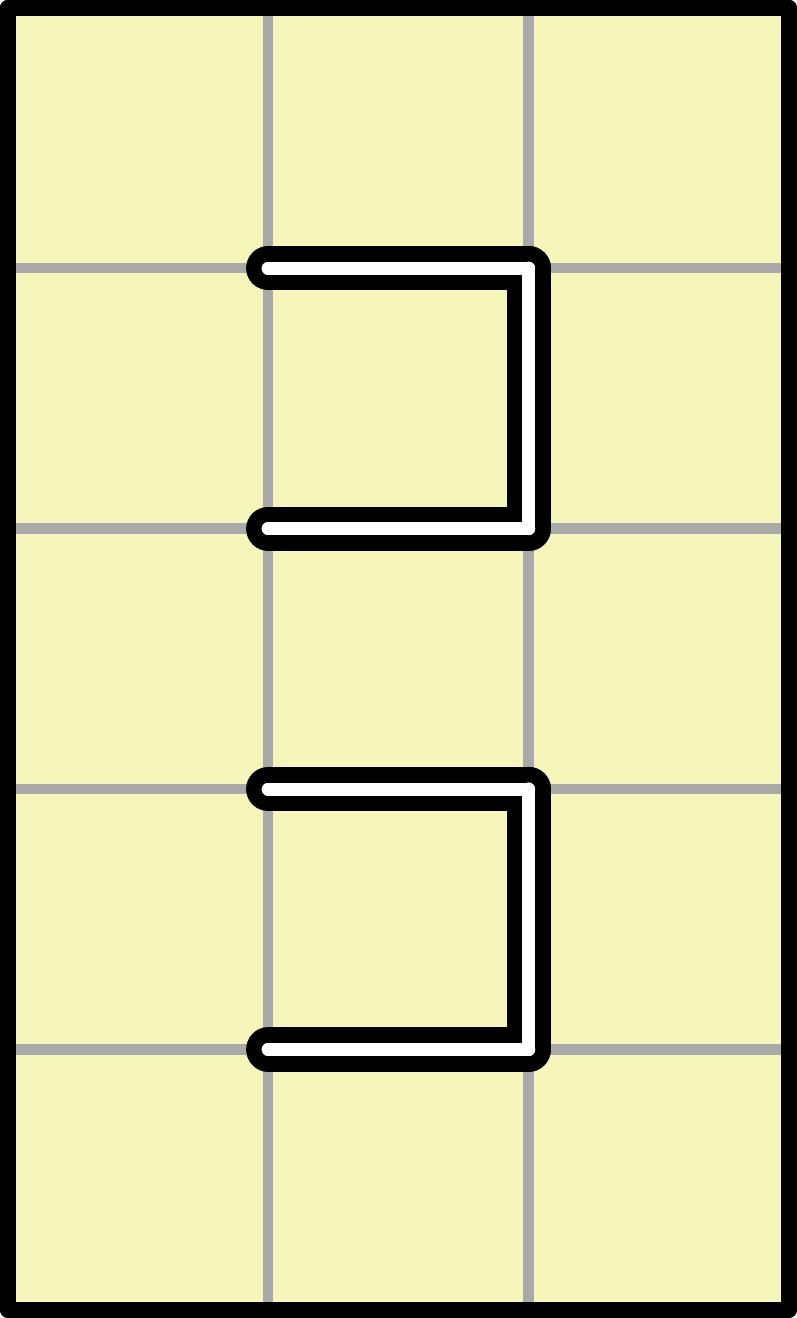}
  \includegraphics[scale=\scale]{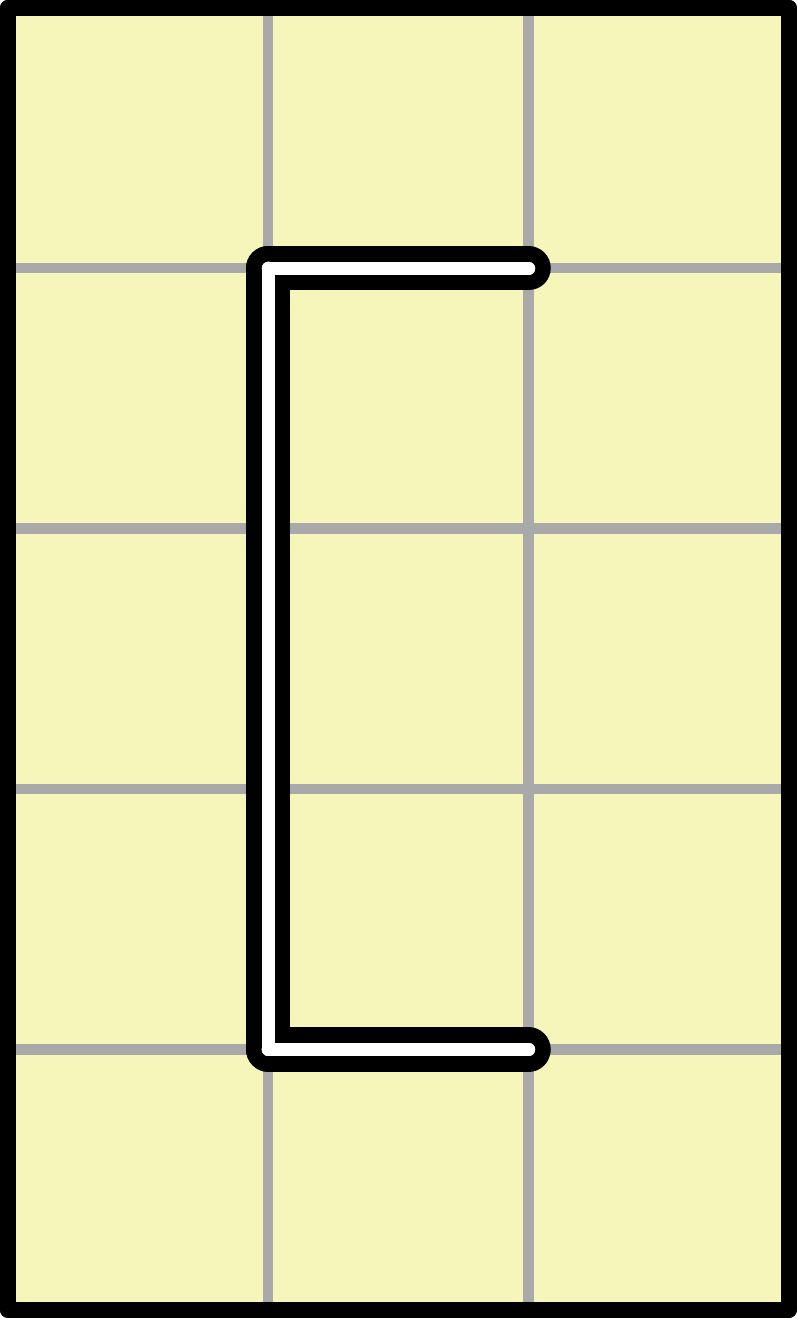}
  \includegraphics[scale=\scale]{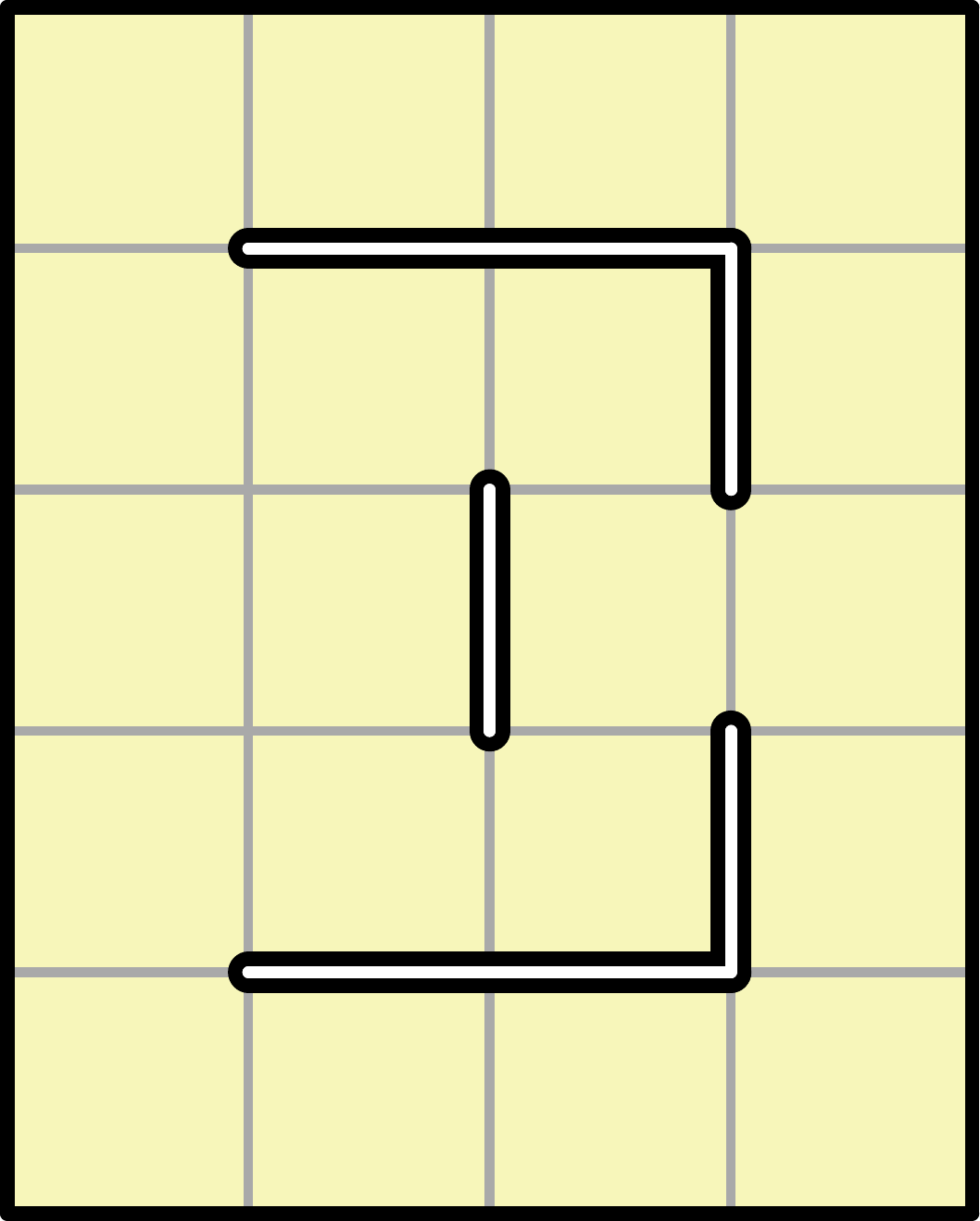}
  \includegraphics[scale=\scale]{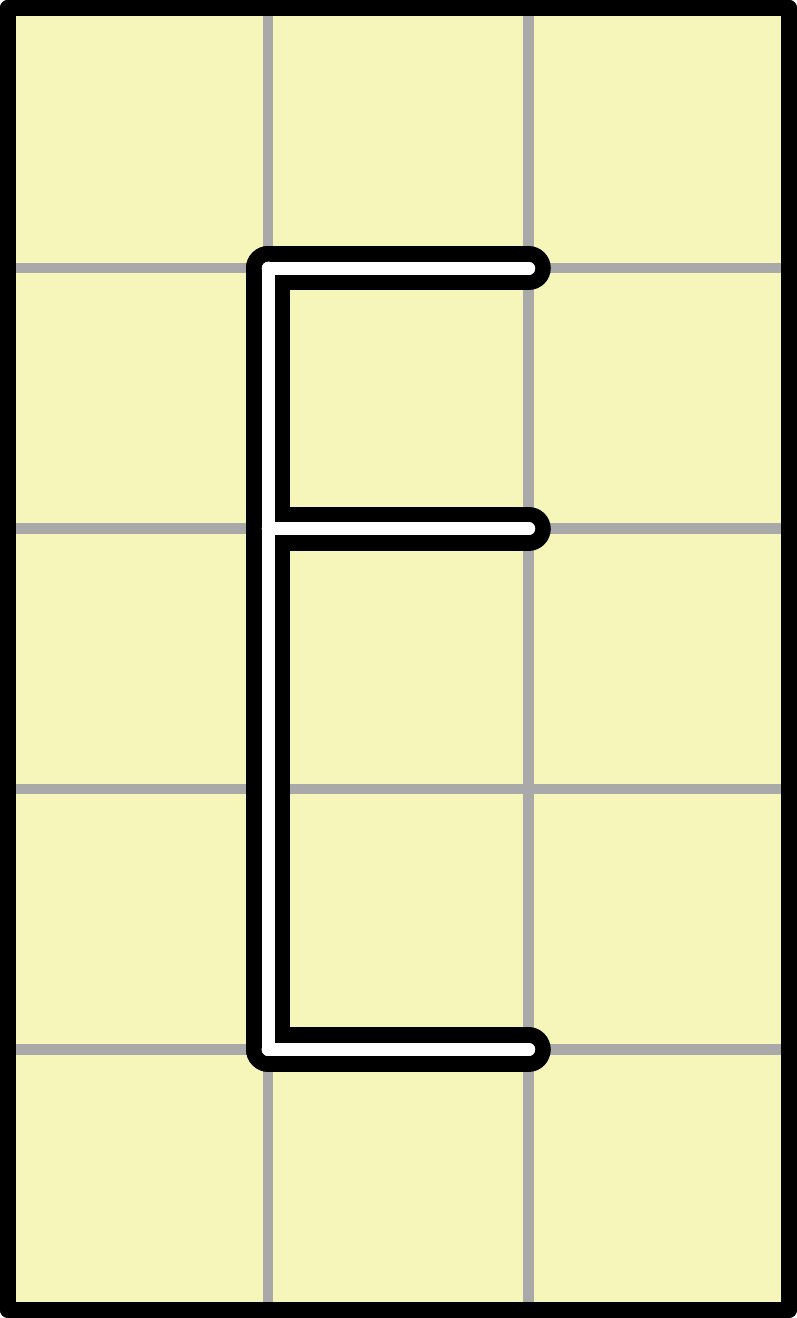}
  \includegraphics[scale=\scale]{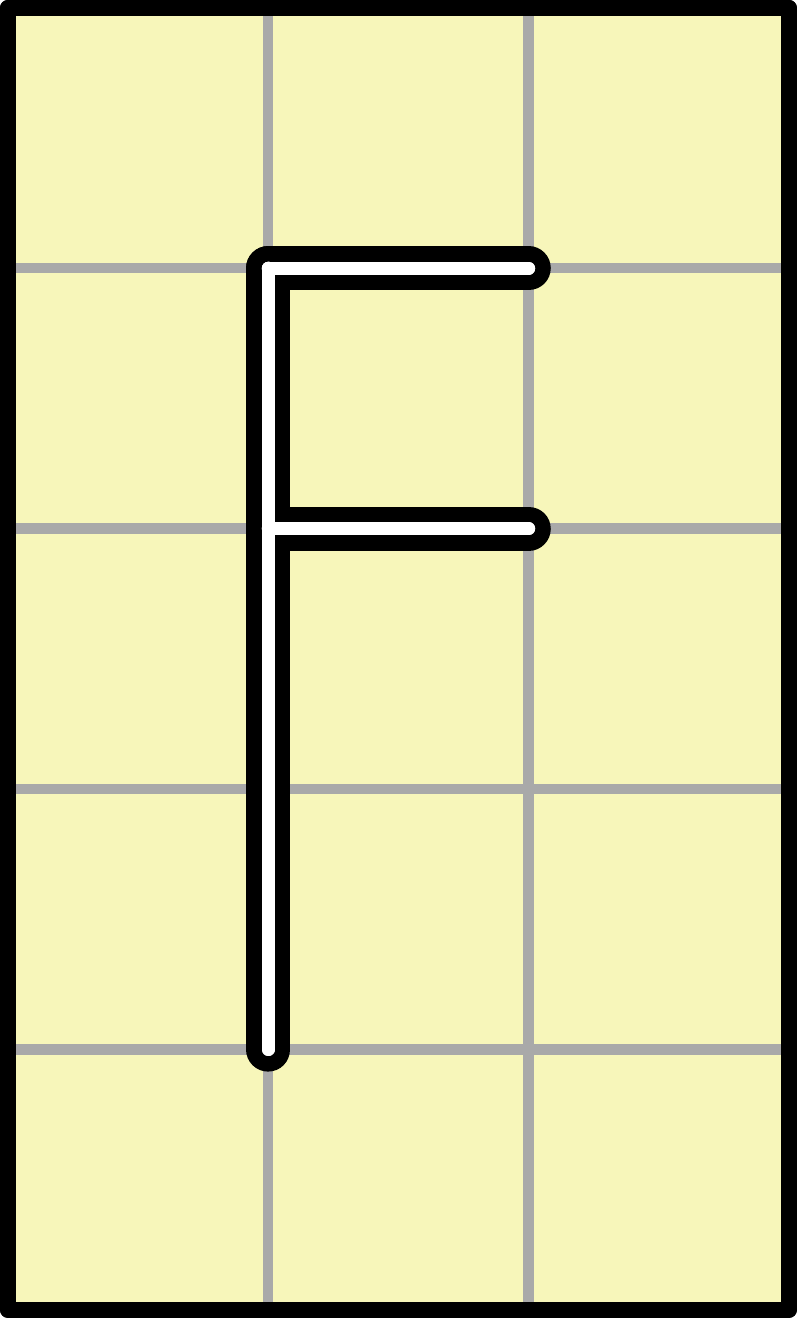}

  \includegraphics[scale=\scale]{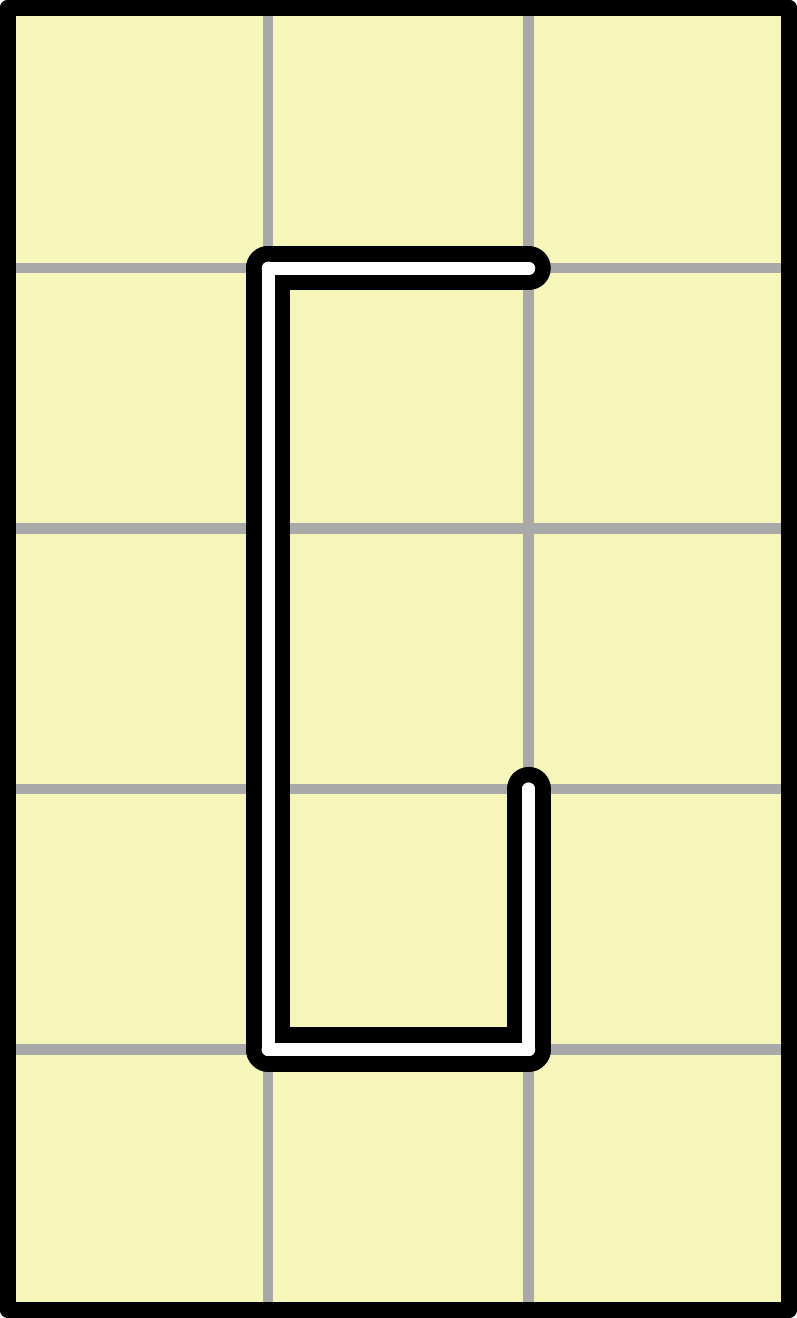}
  \includegraphics[scale=\scale]{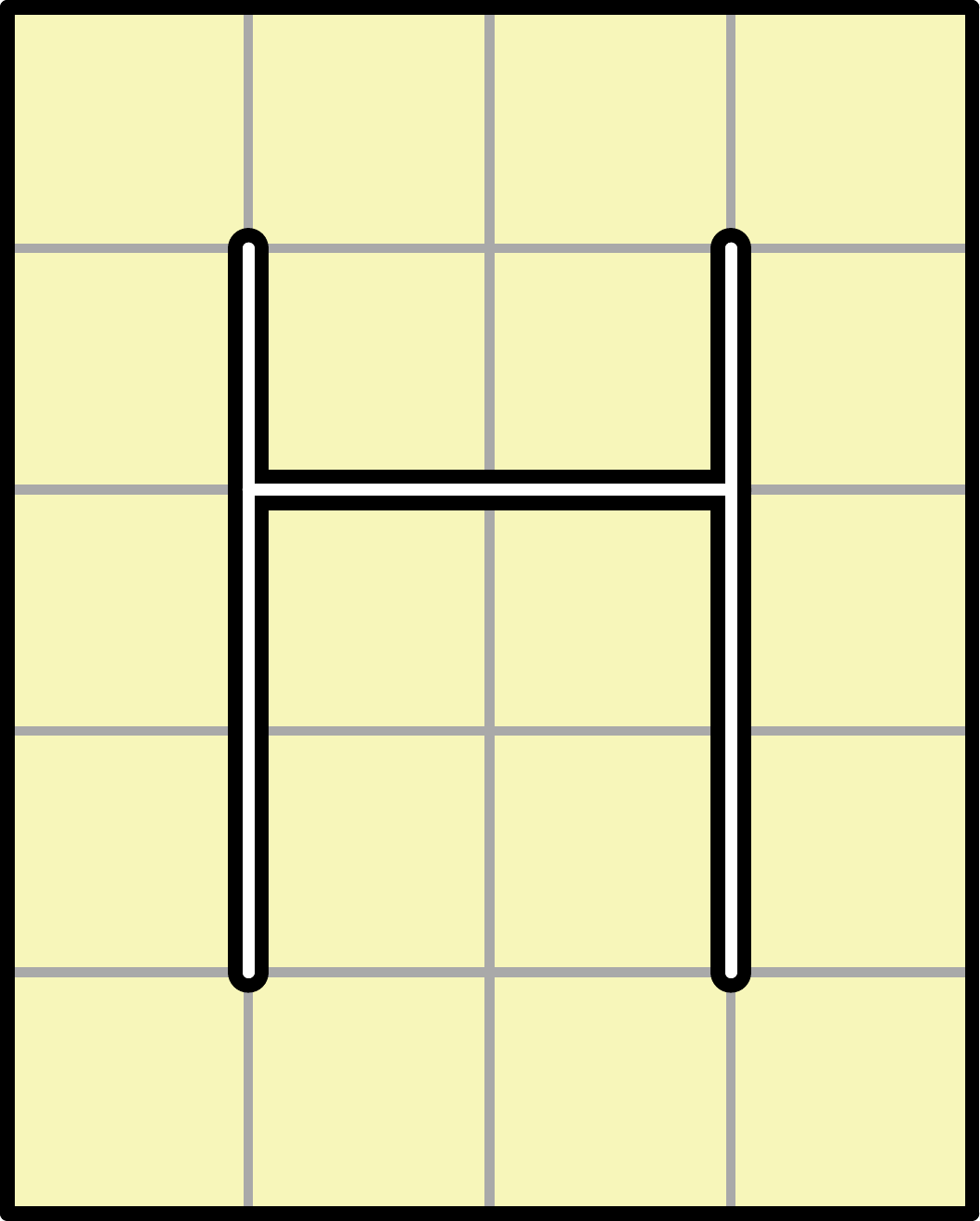}
  \includegraphics[scale=\scale]{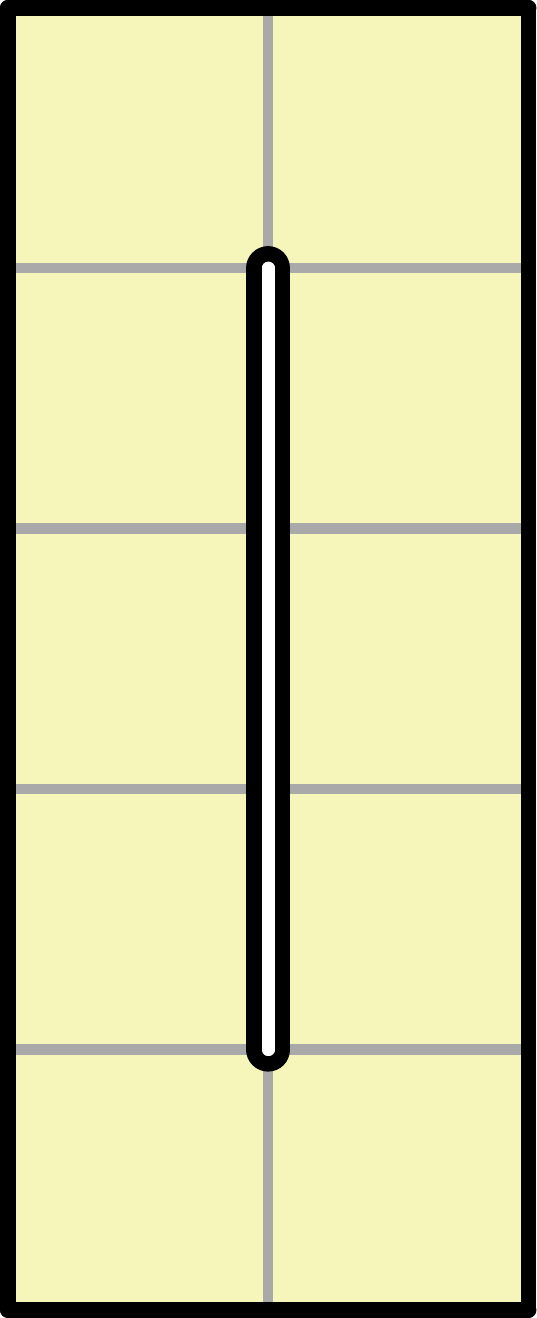}
  \includegraphics[scale=\scale]{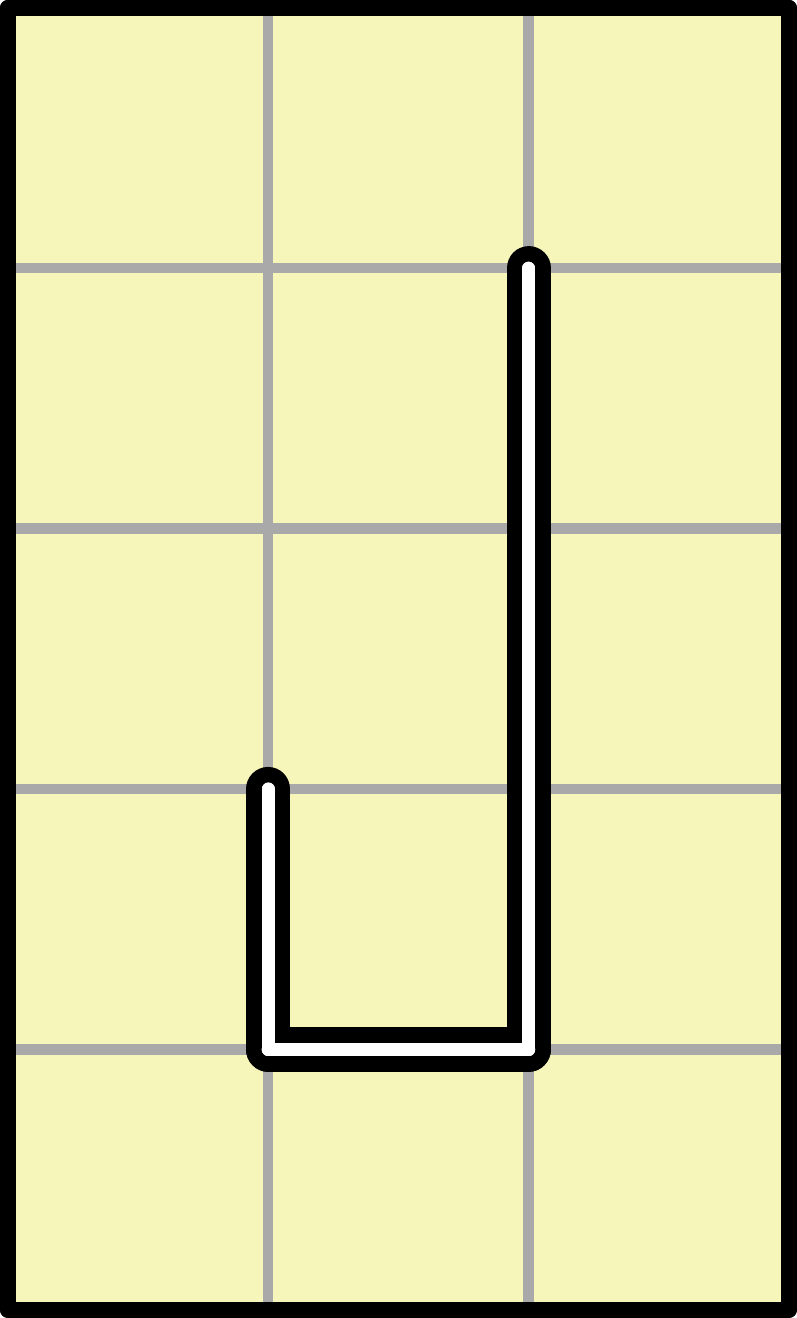}
  \includegraphics[scale=\scale]{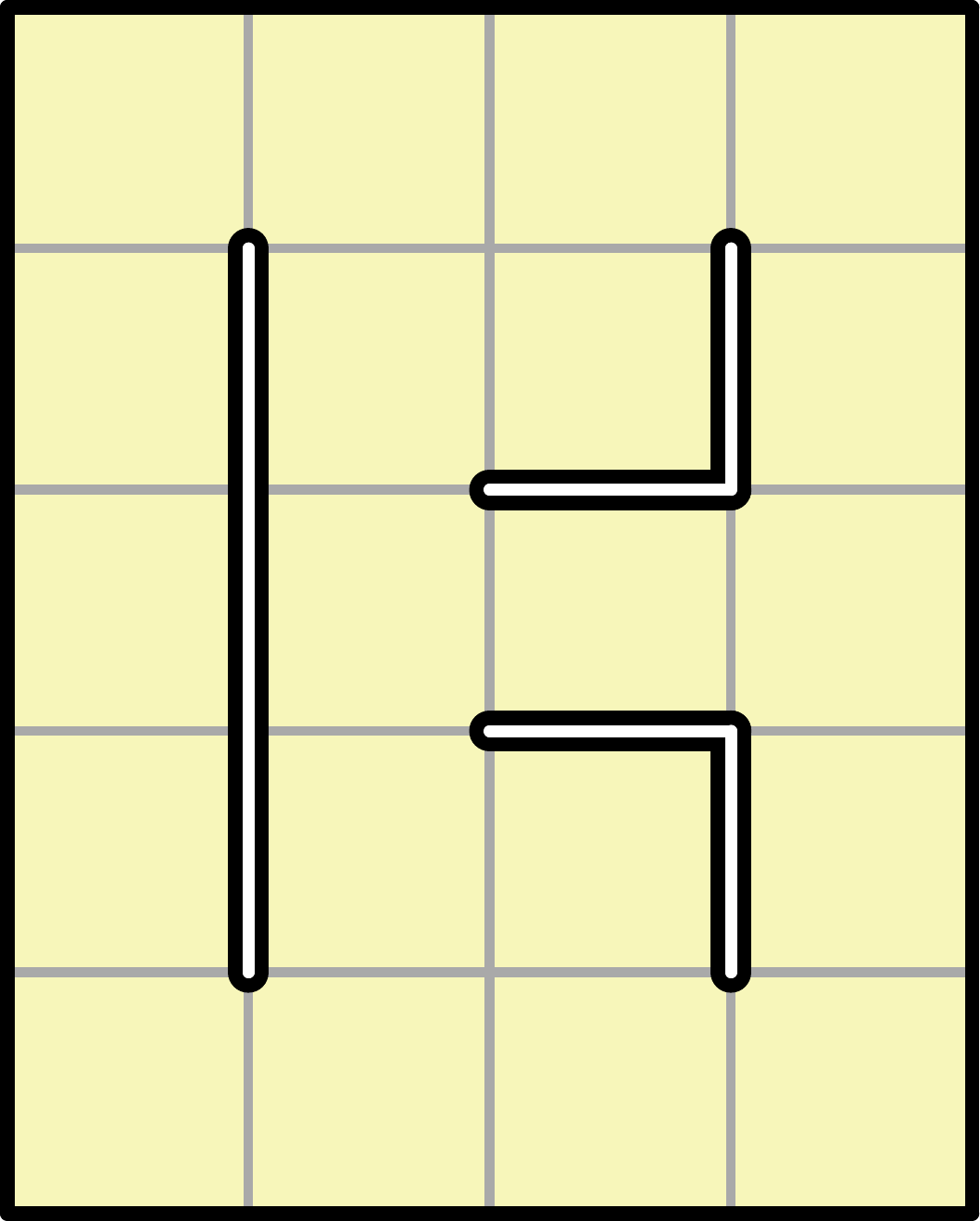}
  \includegraphics[scale=\scale]{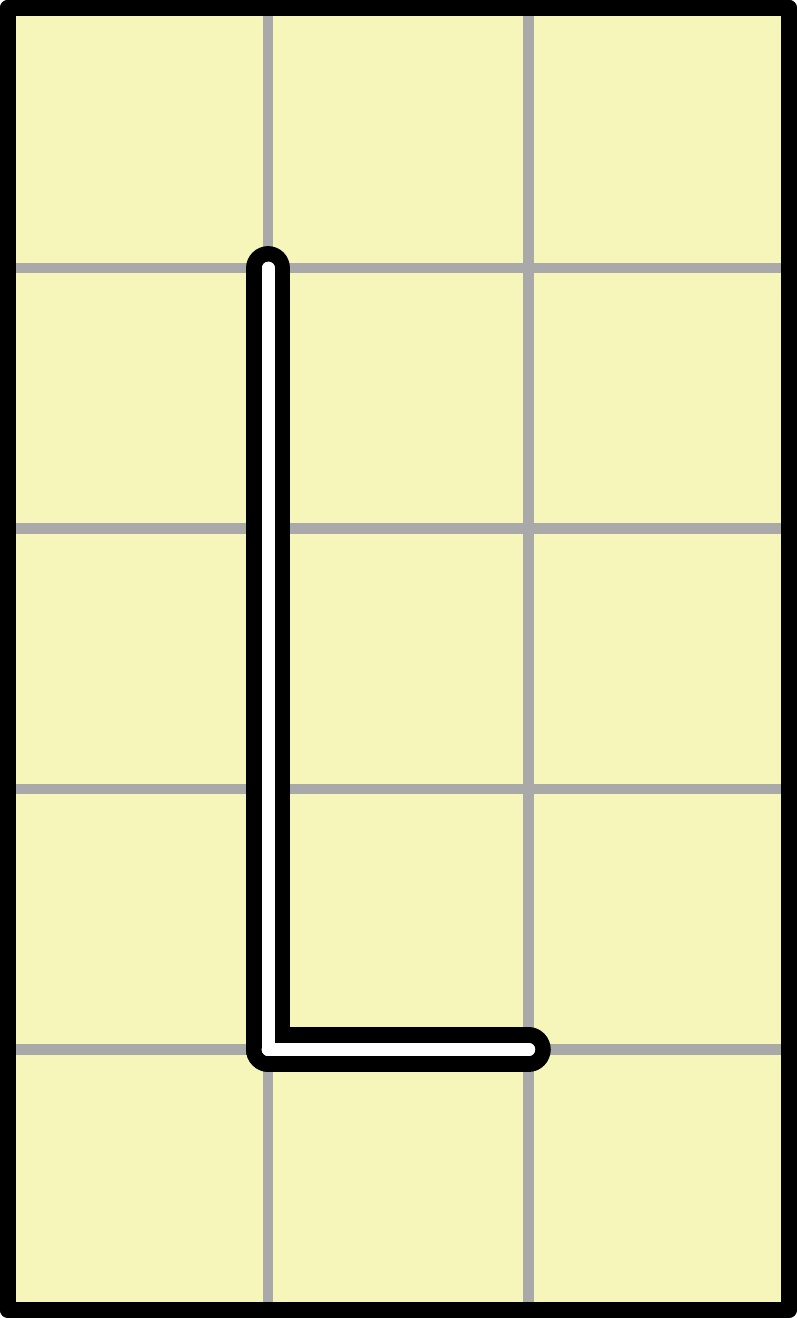}

  \includegraphics[scale=\scale]{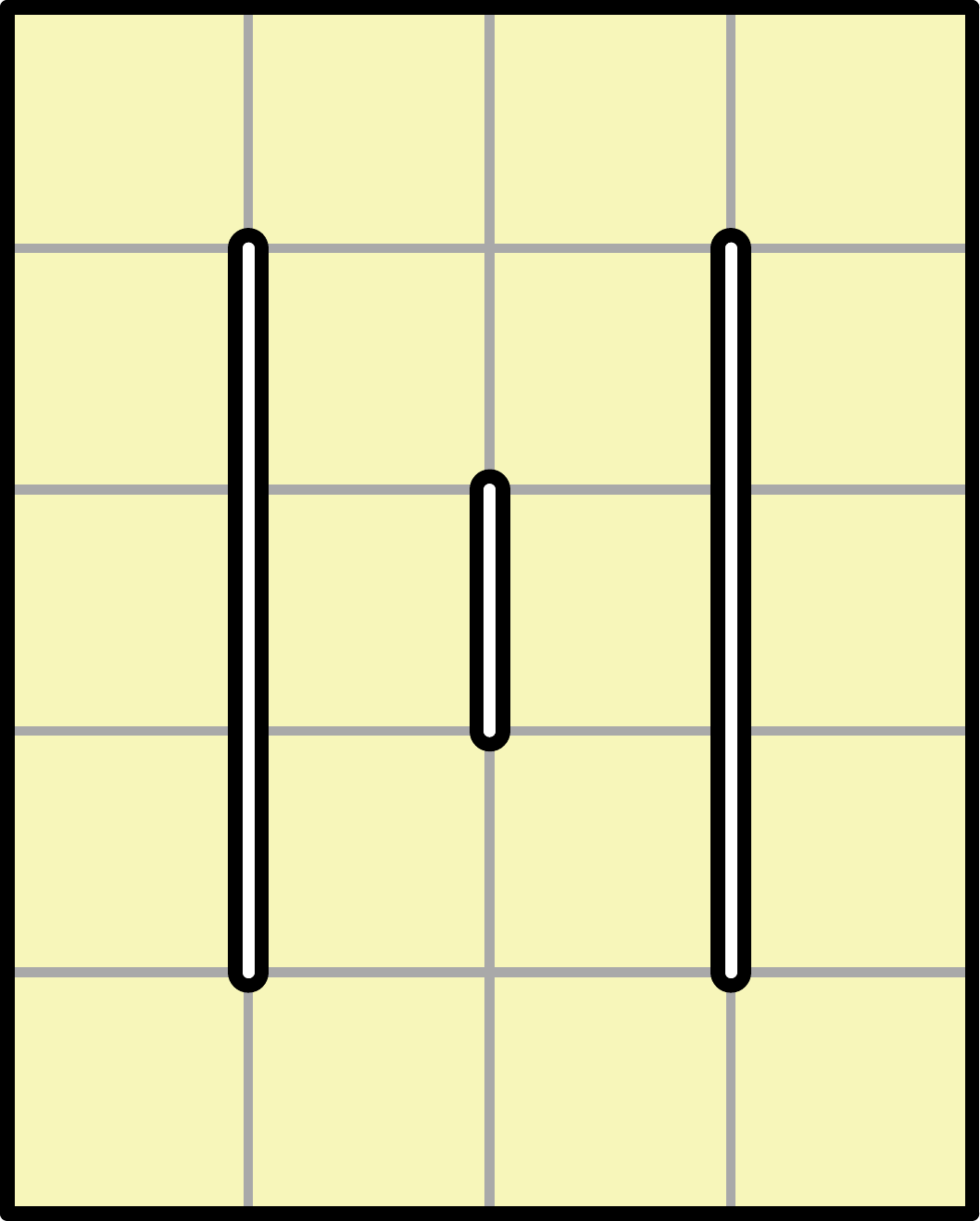}
  \includegraphics[scale=\scale]{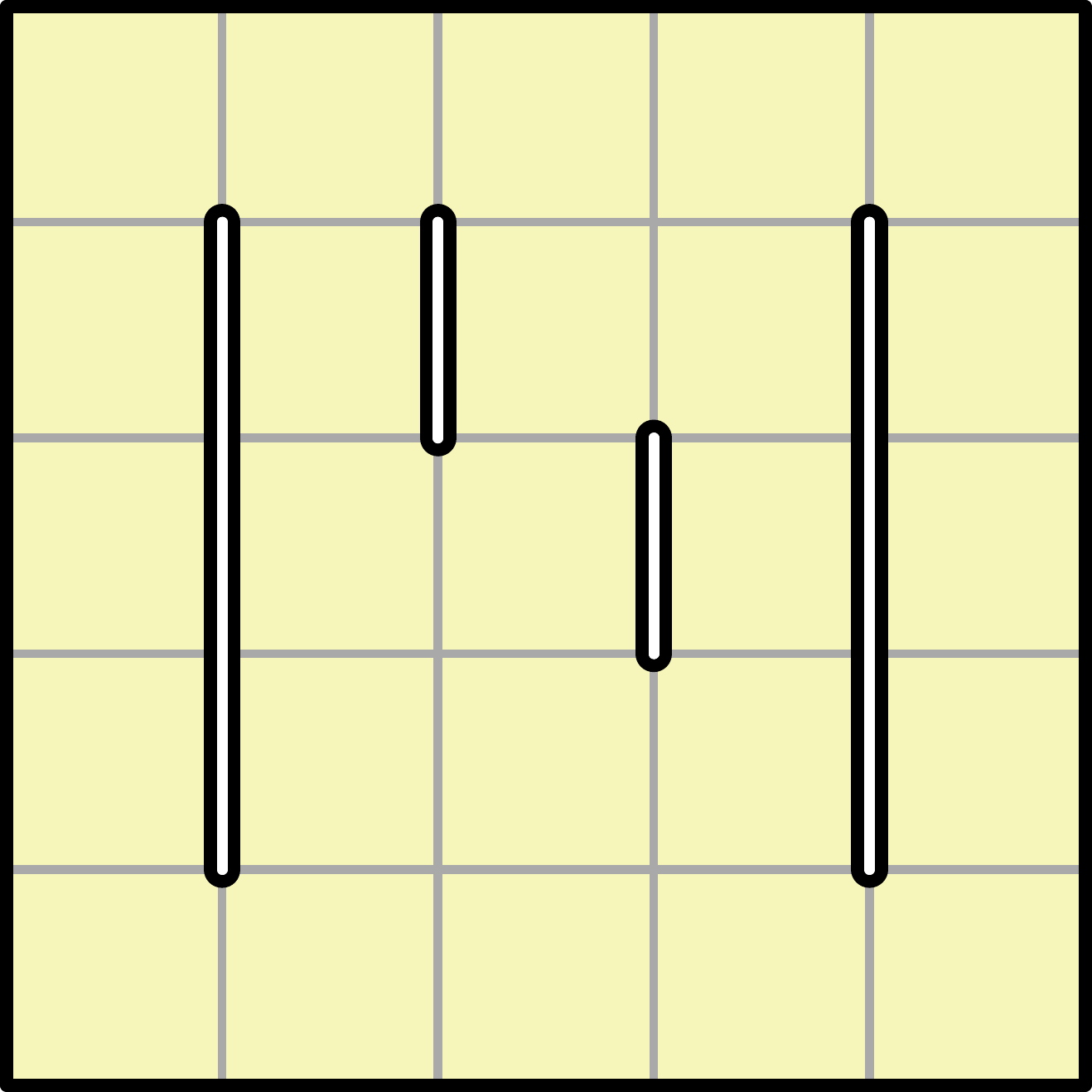}
  \includegraphics[scale=\scale]{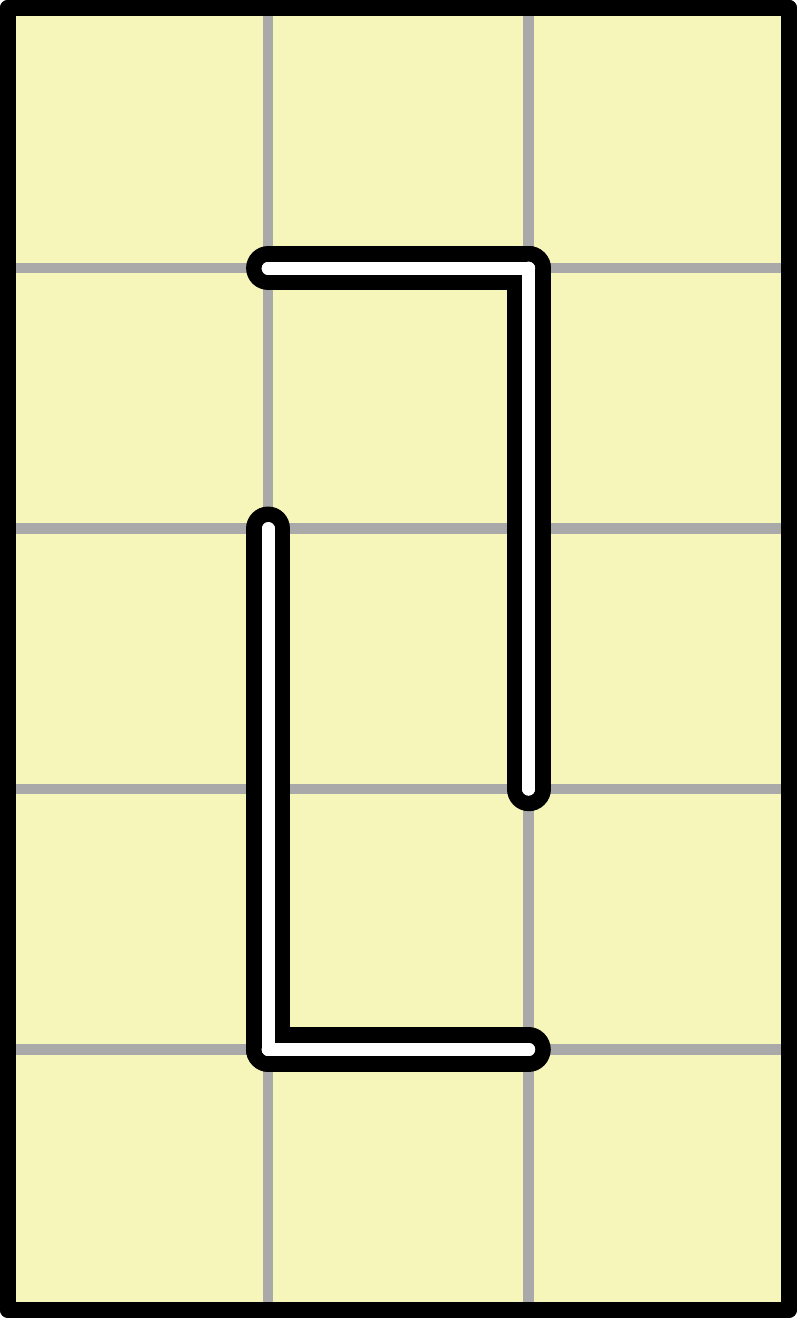}
  \includegraphics[scale=\scale]{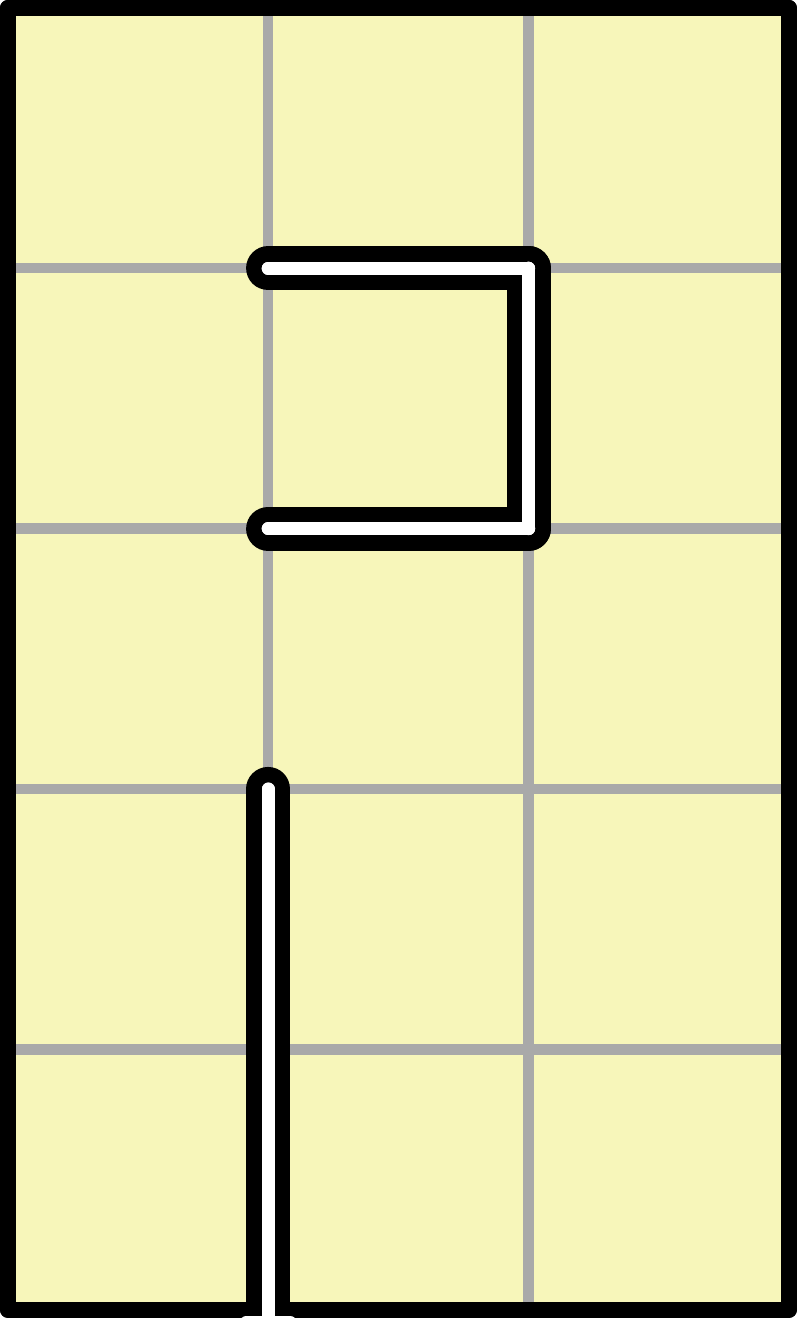}
  \includegraphics[scale=\scale]{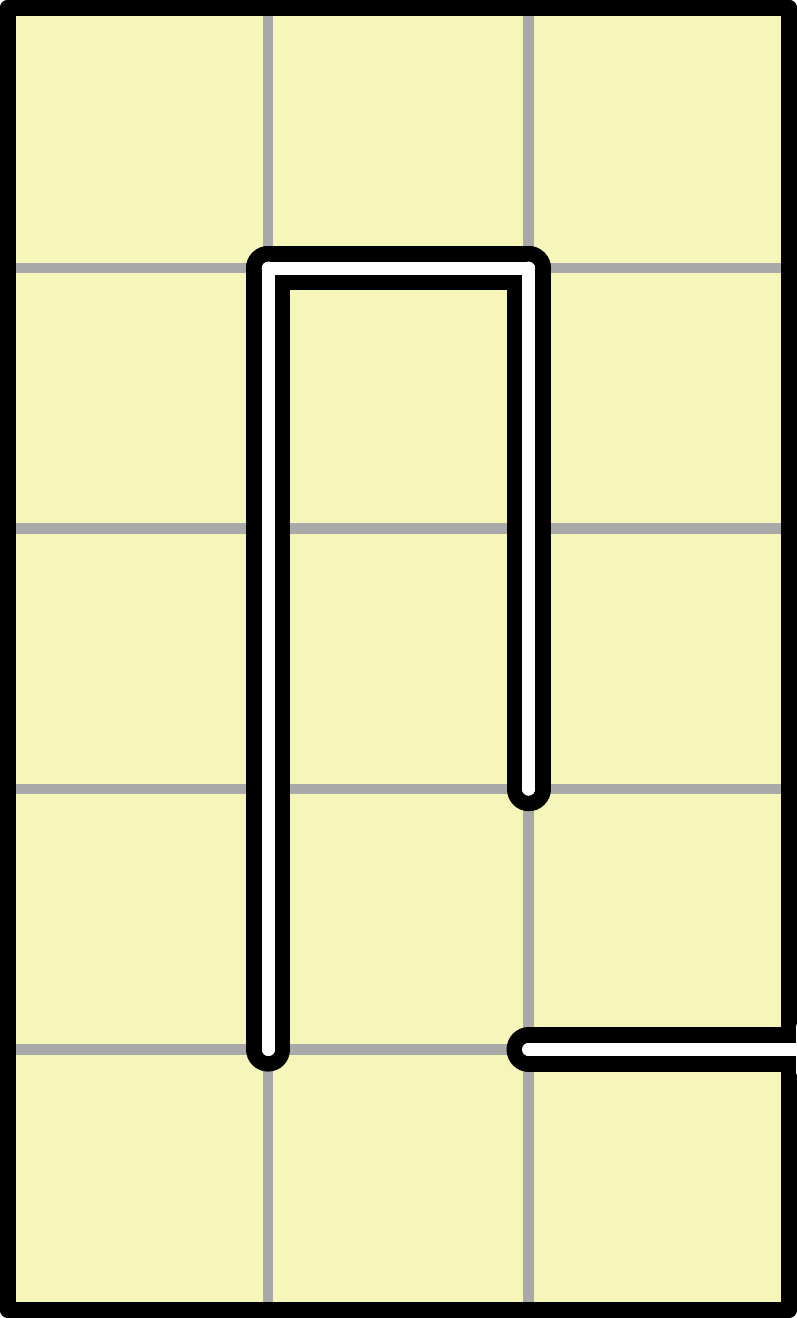}
  
  \includegraphics[scale=\scale]{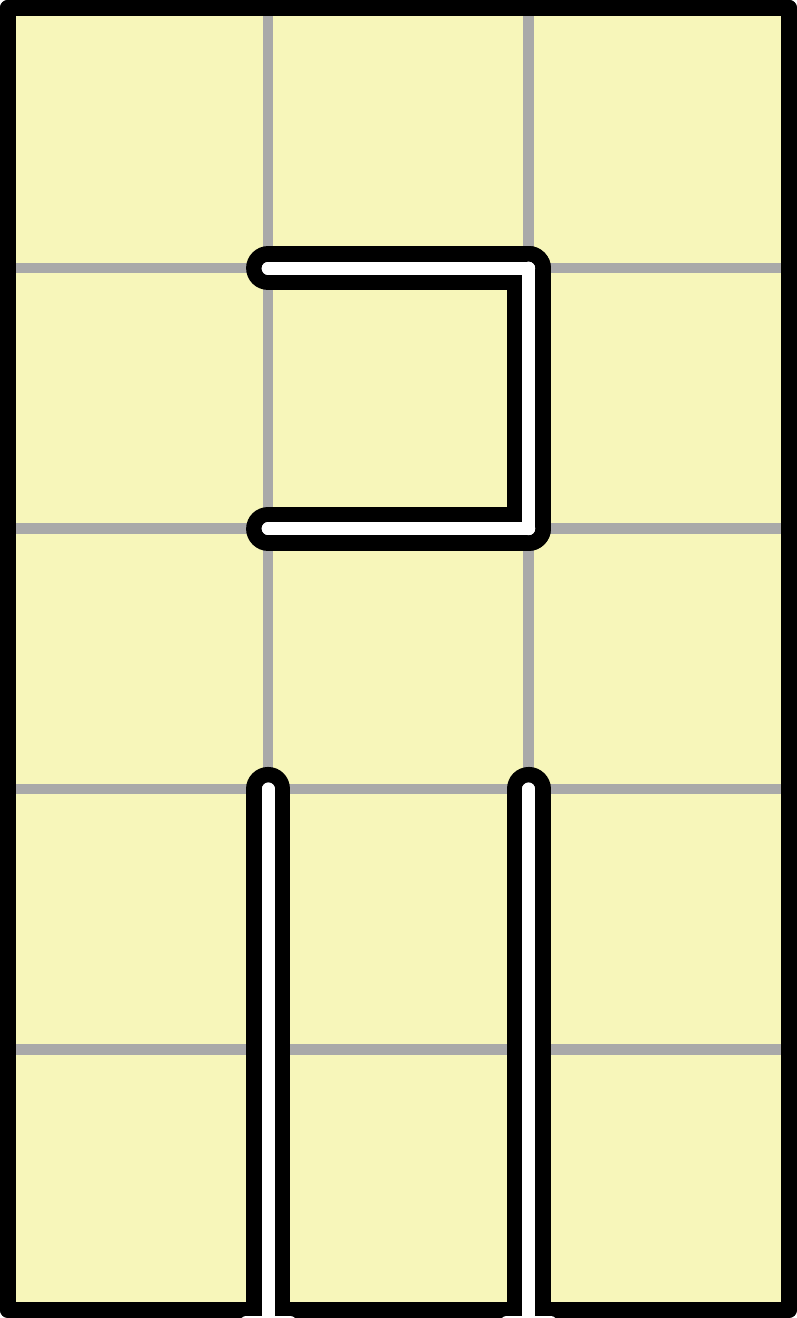}
  \includegraphics[scale=\scale]{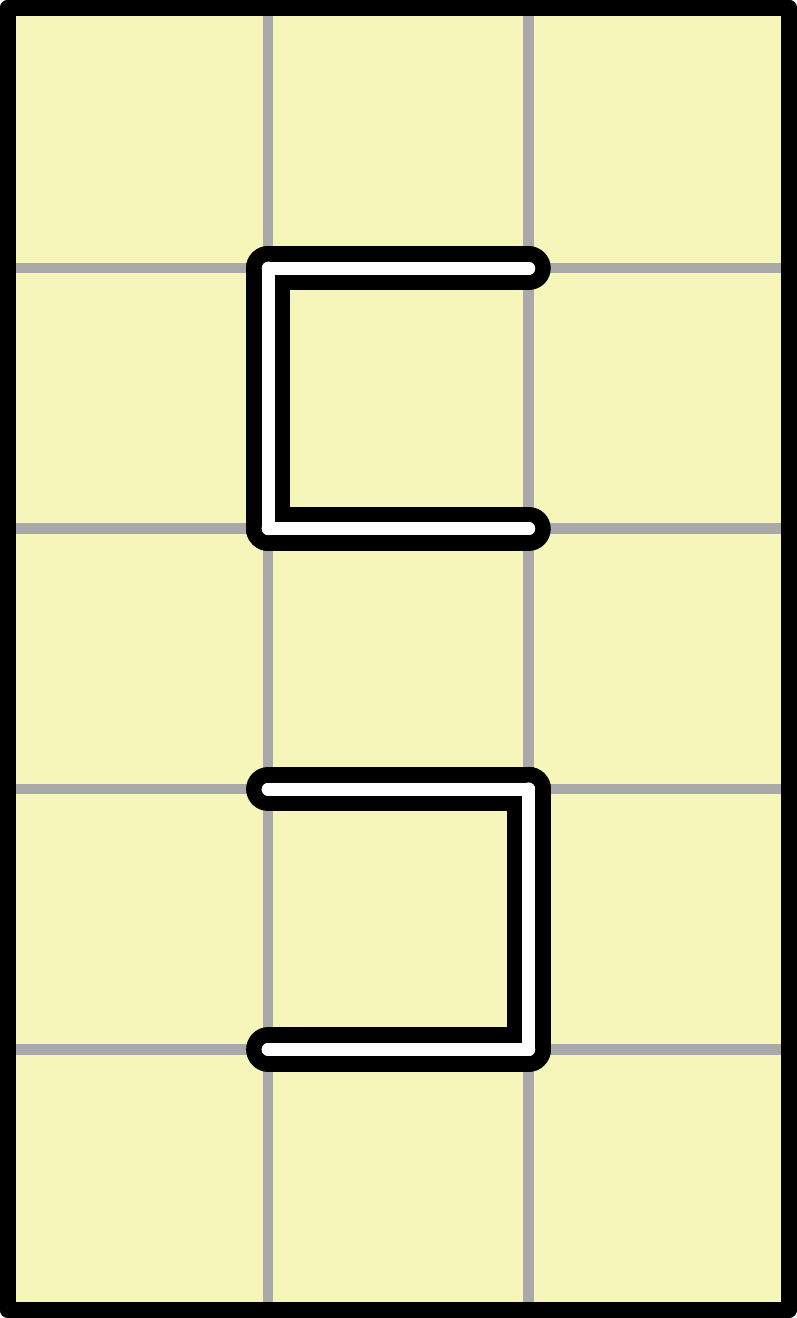}
  \includegraphics[scale=\scale]{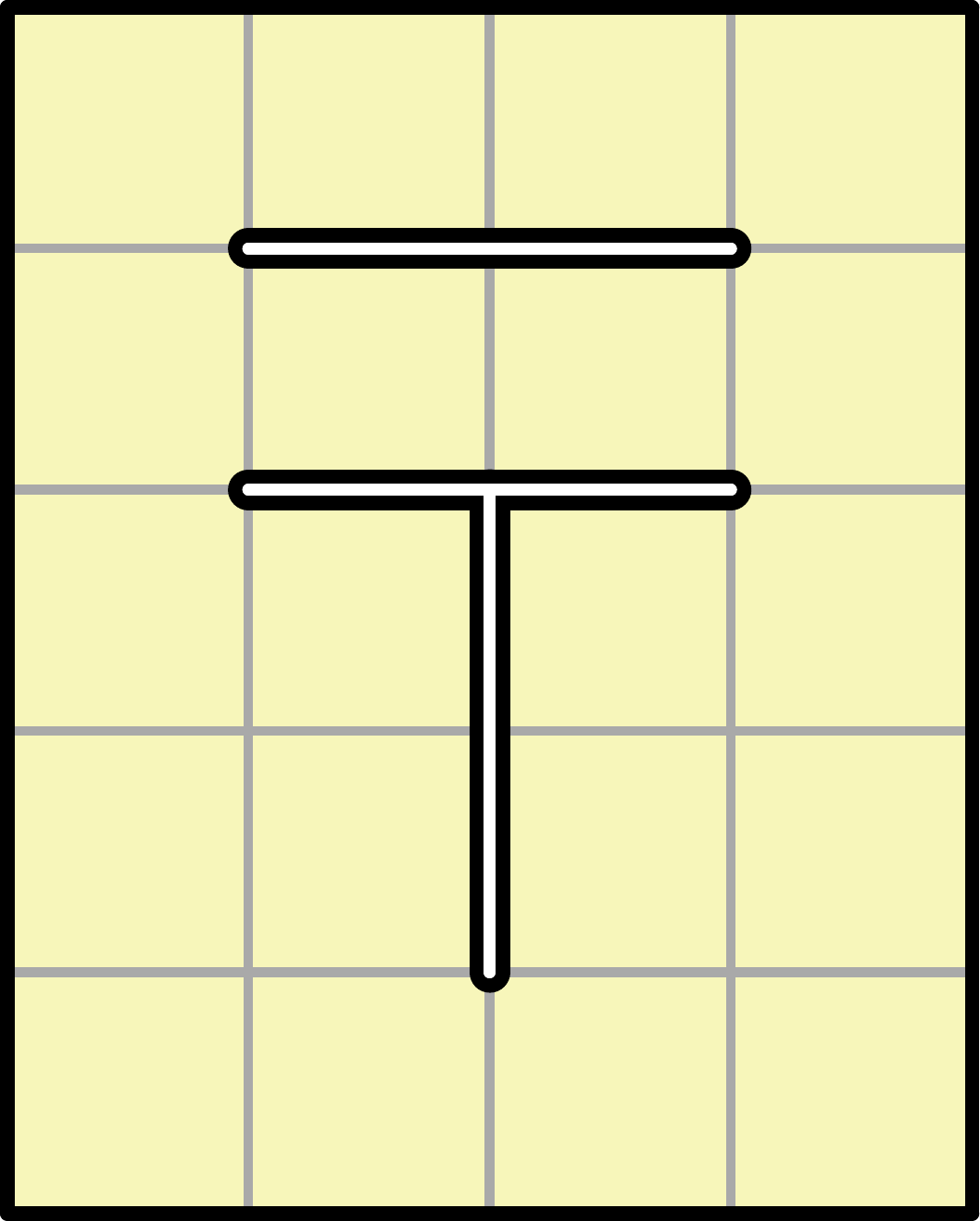}
  \includegraphics[scale=\scale]{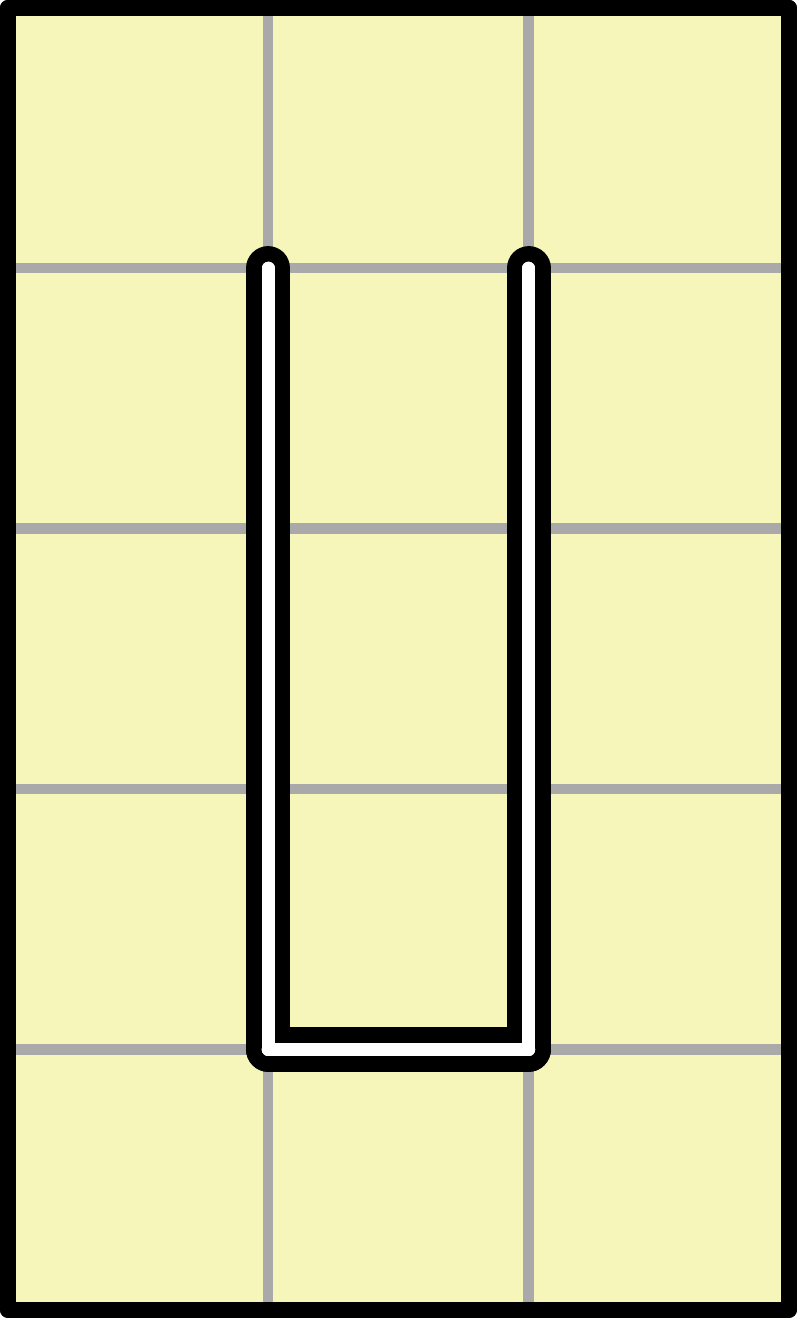}
  \includegraphics[scale=\scale]{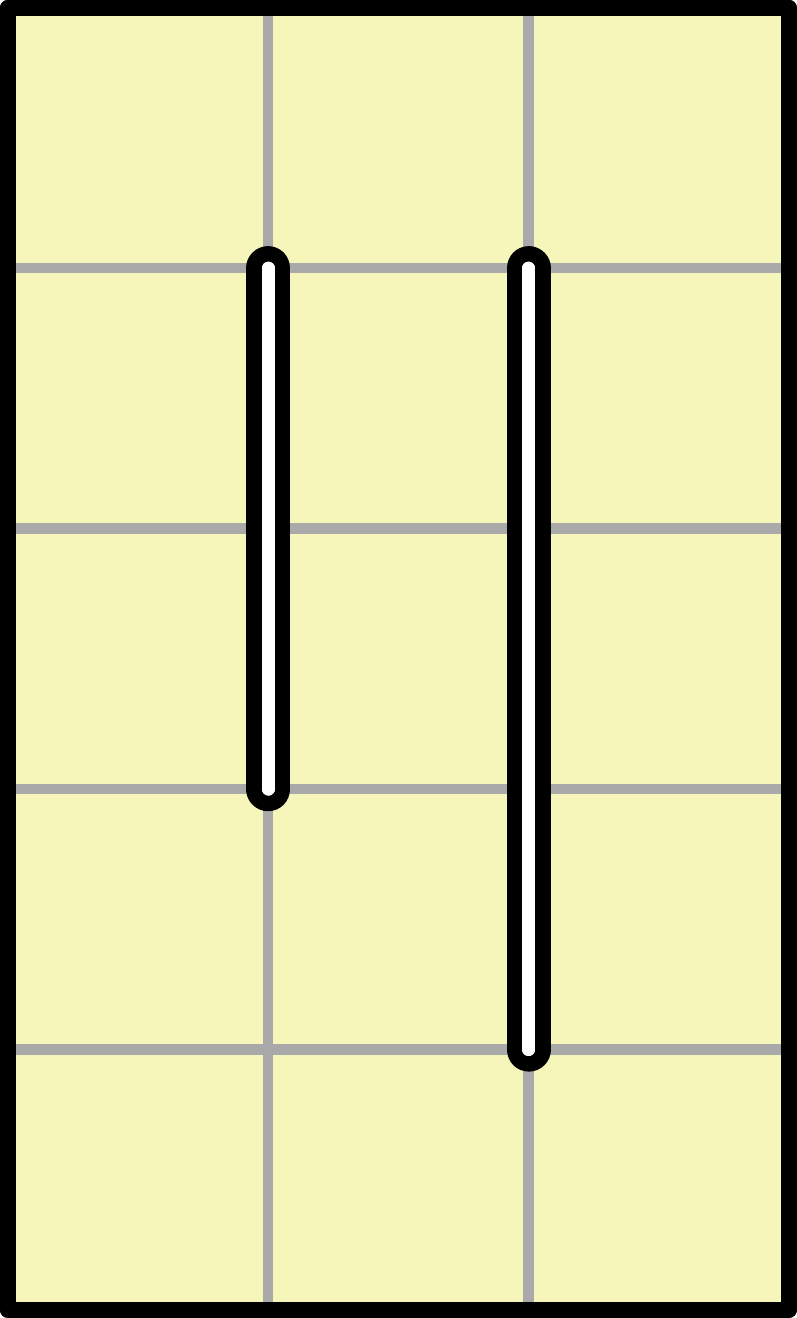}

  \includegraphics[scale=\scale]{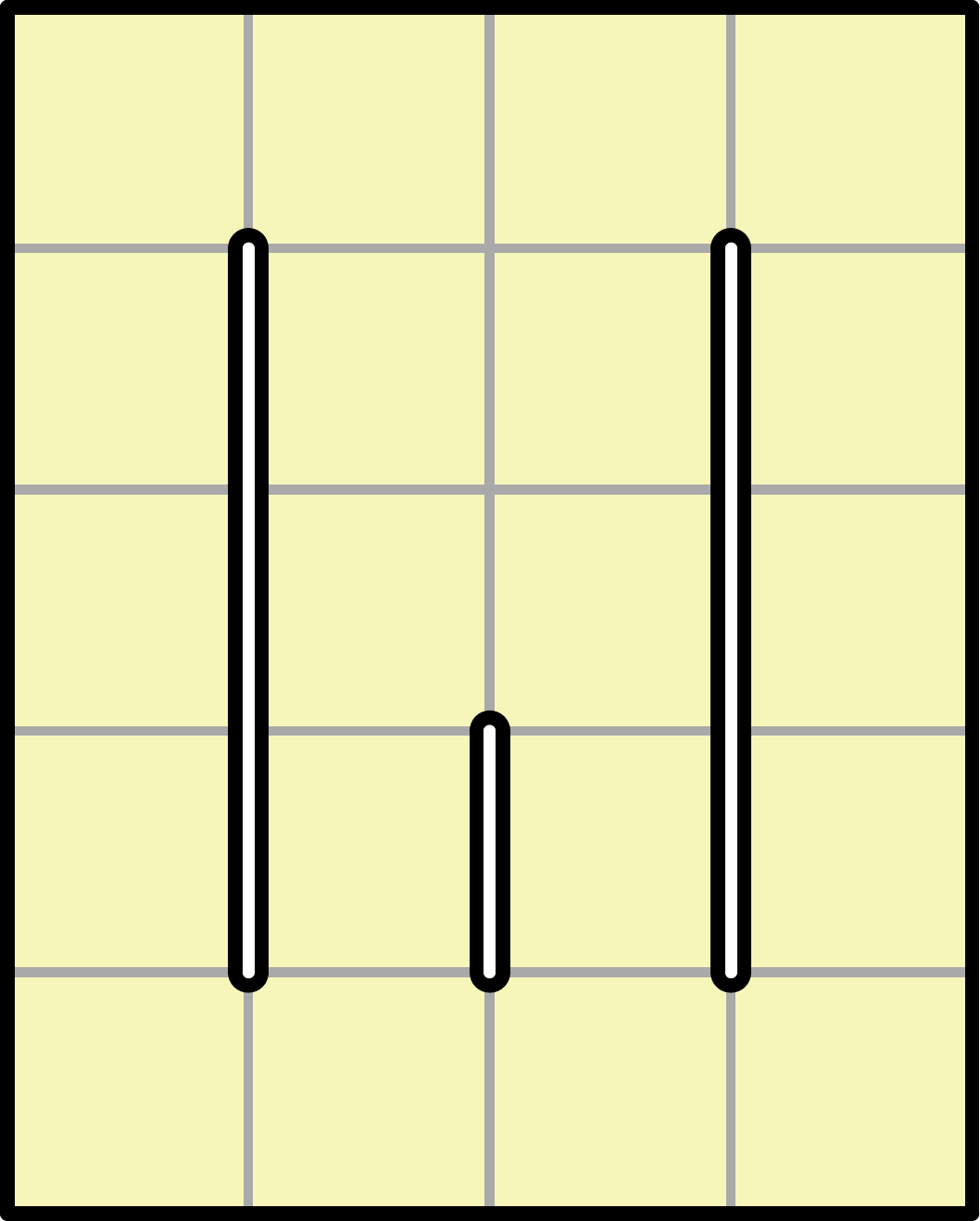}
  \includegraphics[scale=\scale]{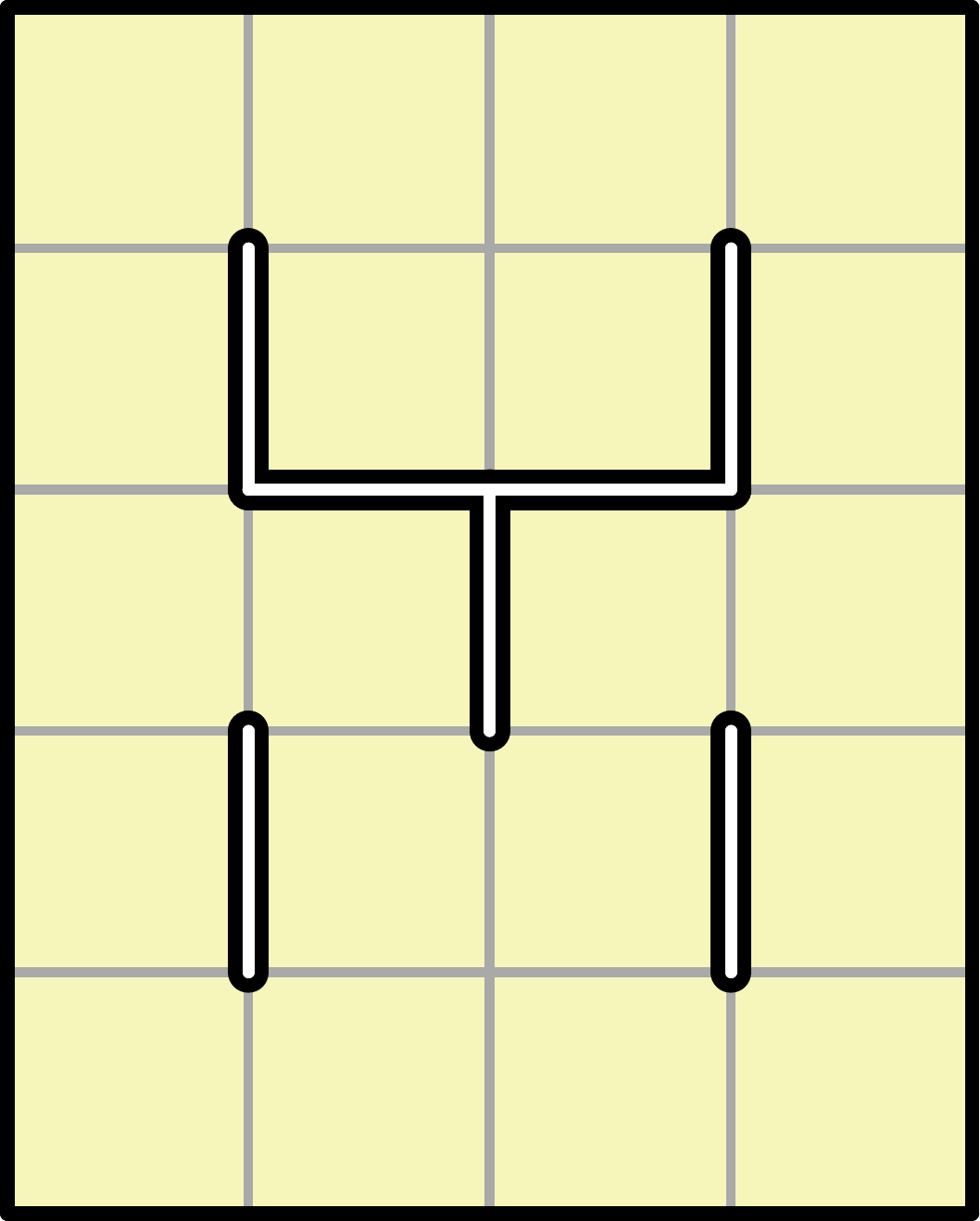}
  \includegraphics[scale=\scale]{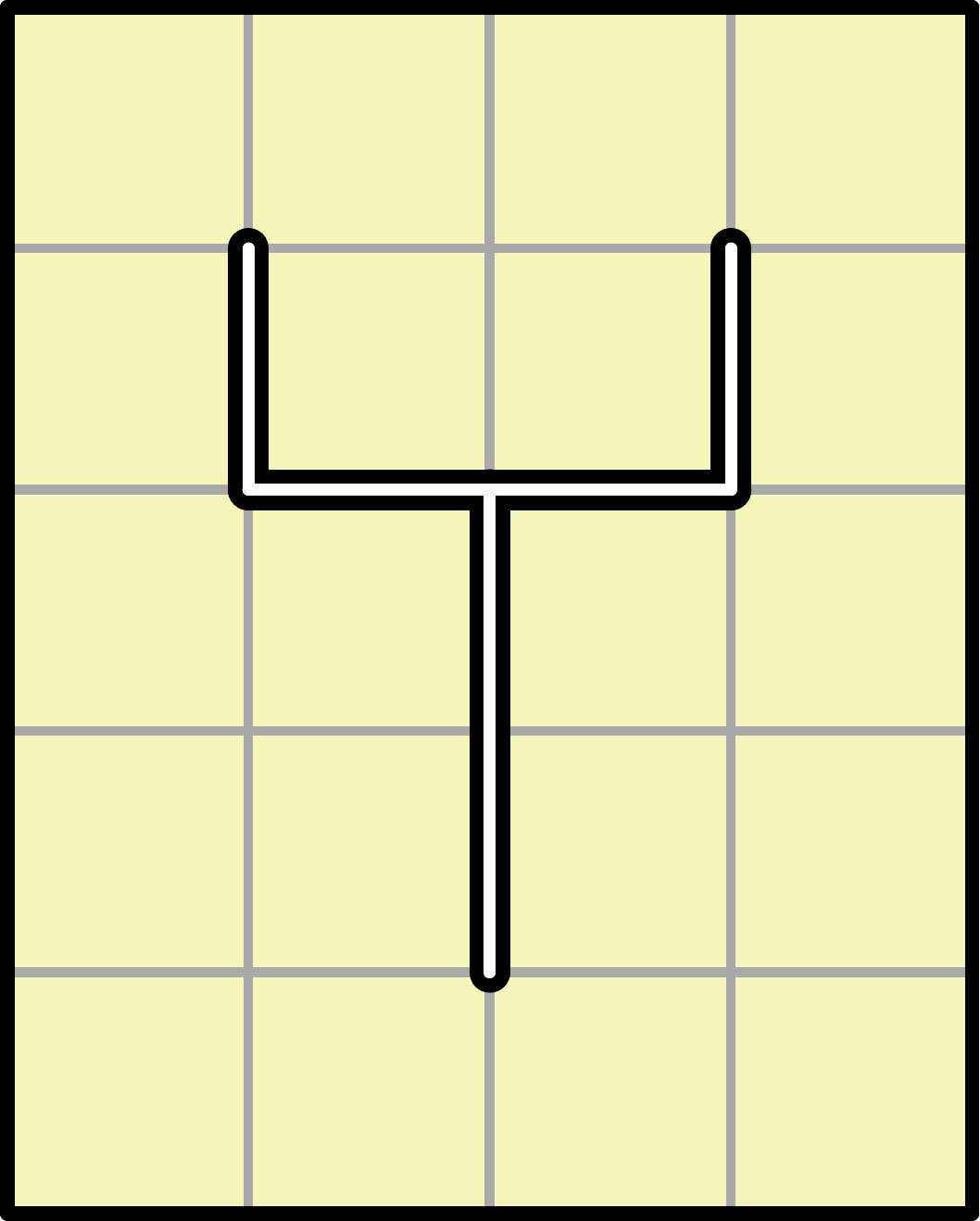}
  \includegraphics[scale=\scale]{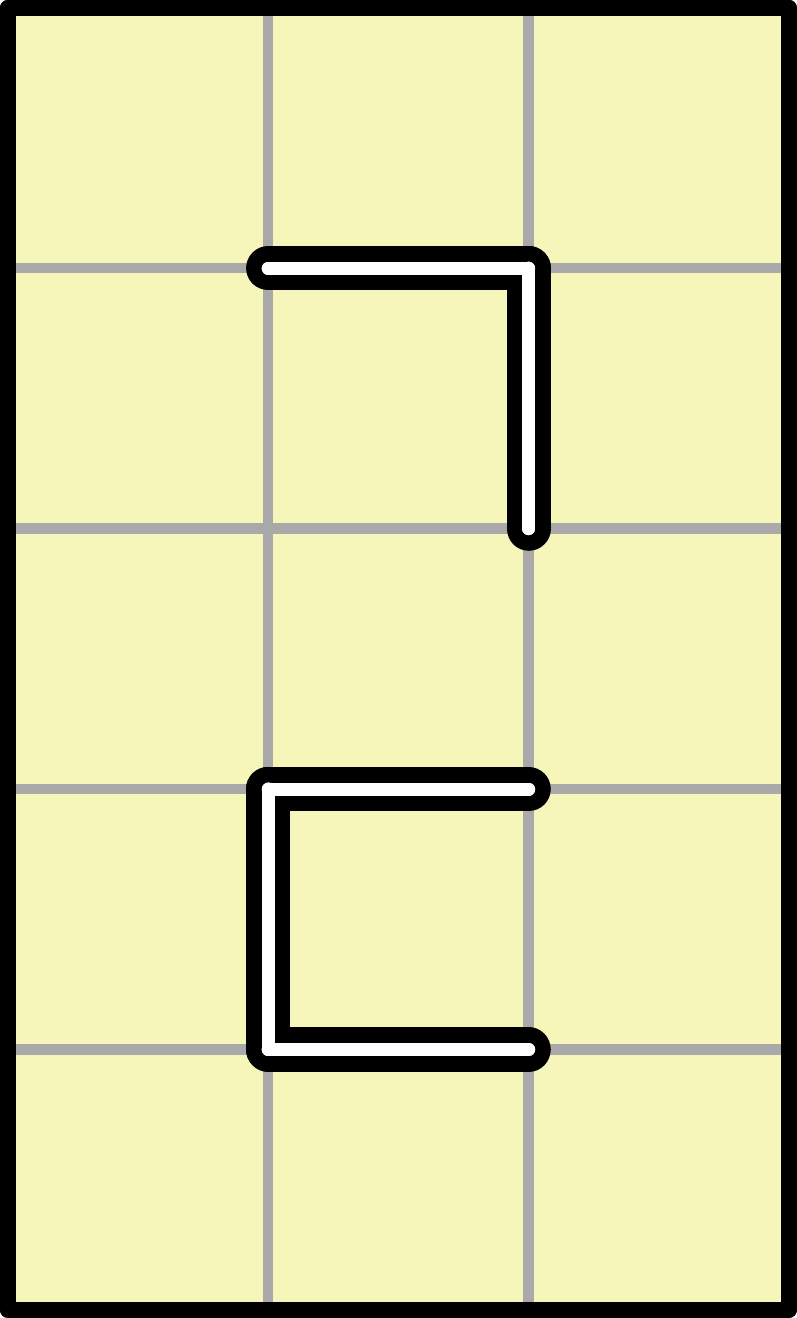}
  
  \caption{Cube-folding font: the slits representing each letter enable each rectangular puzzle to fold into a cube.}
  \label{fig:font}
\end{figure}

\medskip
We conclude with a list of interesting open problems:
\begin{compactitem}
\item Does a consistent grid-point mapping output by the algorithm in \cref{subsec:alg} imply that the polyomino is foldable? If so, is the folding uniquely determined?
\item Is any rectangular polyomino with one L-slit, U-slit, or straight slit of size~2  foldable? Currently, we only know that the small polyominoes in  \cref{fig:simple_slits_cases} do not fold.
\item We considered the existence of only a folded state in the shape of~$\C$,
  but what if we require a continuous folding motion from the unfolded
  polyomino into~$\C$?  These two models are known to be equivalent for
  polygons without holes \cite{PaperReachability_CCCG2001,GFALOP}, but
  equivalence remains an open problem for polygons with holes as in our case.
\end{compactitem}

\section*{Acknowledgments}
This research was performed in part at the 33rd Bellairs Winter Workshop on Computational Geometry. We thank all other participants for a fruitful atmosphere.
H. Akitaya was supported by NSF CCF-1422311 \& 1423615. Z. Mas\'arov\'a was partially funded by Wittgenstein Prize, Austrian
Science Fund (FWF), grant no. Z 342-N31.

\newpage
\bibliographystyle{elsarticle-num}
\bibliography{folding}

\end{document}